\begin{document}
\title{Metric Embeddings Beyond Bi-Lipschitz Distortion \\ via Sherali-Adams}

\author{
Ainesh Bakshi \\
MIT
\and
Vincent Cohen-Addad \\
Google Research 
\and
Samuel B. Hopkins \\
MIT
\and
Rajesh Jayaram \\
Google Research
\and 
Silvio Lattanzi\\
Google Research
}

\maketitle
\begin{abstract}

Metric embeddings are a widely used method in algorithm design, where generally a ``complex'' metric is embedded into a simpler, lower-dimensional one. Historically, the theoretical computer science community has 
focused on \emph{bi-Lipschitz} embeddings, which guarantee that every pairwise distance is approximately preserved. 
In contrast, alternative embedding objectives that avoid bi-Lipschitz distortion are commonly used in practice to map points to lower dimensions, yet these approaches have received comparatively less study in theory.

In this paper, we focus on one such objective, Multi-dimensional Scaling (MDS), which embeds an $n$-point metric into low-dimensional Euclidean space. MDS is widely used as a data visualization tool in the social and biological sciences, statistics, and machine learning.
Given a set of non-negative dissimilarities $\{d_{i,j}\}_{i , j \in [n]}$ over $n$ points (which may or may not form a metric), the goal is to find an embedding $\{x_1,\dots,x_n\} \subset \R^k$ that minimizes
 \[ \OPT = \min_{x}  \expecf{i,j \in [n]}{ \left(1-\frac{\|x_i - x_j\|}{d_{i,j}}\right)^2  }.\] 

Despite its popularity, our theoretical understanding of MDS is extremely limited. Recently,  Demaine, Hesterberg, Koehler, Lynch, and Urschel~\cite{demaine2021multidimensional} gave the first approximation algorithm with provable guarantees for this objective, which achieves an embedding in constant dimensional Euclidean space with cost $\OPT +\eps$ in $n^2  \cdot 2^{ \poly(\Delta/\eps) }$ time, where $\Delta$ is the aspect ratio of the input dissimilarities. For metrics that admit low-cost embeddings, the aspect ratio $\Delta$ scales polynomially in $n$.
In this work, we give the first approximation algorithm for MDS with quasi-polynomial dependency on $\Delta$: for constant dimensional Euclidean space, we achieve a solution with cost $\tilde{\mathcal{O}}(\log \Delta ) \cdot \OPT^{ \Omega(1) } + \eps$ in time $n^{ \mathcal{O}(1)} \cdot 2^{  \poly((\log(\Delta)/\eps))  }$.\footnote{Throughout, we use the notation $\tilde{\mathcal{O}}(\cdot)$ to suppress polylogarithmic factors in all the supplied arguments. } Our algorithms are based on a novel \emph{geometry-aware} analysis of a conditional rounding of the Sherali-Adams LP Hierarchy, allowing us to avoid exponential dependency on the aspect ratio of the input that would typically result from this rounding.

\end{abstract}


\thispagestyle{empty}

\newpage
\thispagestyle{empty}

\tableofcontents

\newpage

\setcounter{page}{1}

\section{Introduction}

Metric embeddings have long been a central topic in theoretical computer science and are a foundational tool for understanding and simplifying complex structures in high-dimensional spaces. However, most of the literature on metric embeddings focuses on \emph{bi-Lipschitz} embeddings, which preserve all distances up to a multiplicative distortion factor $L \geq 1$.
That is, bi-Lipschitz embeddings satisfy $d_{i,j} \leq \|x_i - x_j\| \leq L \cdot d_{i,j}$ for all pairs $i,j$.
While the theory of bi-Lipschitz embeddings is extremely rich (see~\cite{indyk20178} and references therein), strong computational hardness results are known for the problem of minimizing the distortion $L$ over bi-Lipschitz embeddings of a given metric into $\R^k$ for fixed $k$ \cite{matouvsek2010inapproximability}.
Specifically, it is NP-hard approximate the best distortion $L$ for a given $n$-point metric $d$ to any factor better than $n^{\Theta(1/k)}$, and furthermore, this approximation factor can be obtained in polynomial time by first embedding $\{d_{i,j}\}_{i,j \in [n]}$ into high-dimensional Euclidean space using an approach of Bourgain \cite{bourgain1985lipschitz} and then taking a random linear projection.
If $\{d_{i,j}\}_{i,j \in [n]}$ is already a (high-dimensional) Euclidean metric, this approach is no better than projection to a random $k$-dimensional subspace.

Bi-Lipschitz-ness is, therefore, a poor yardstick for measuring the performance of efficient metric embedding algorithms, since a random embedding is essentially as bi-Lipschitz as the best embedding we can hope to obtain in polynomial time. Further, the worst-case distortion between any pair of points is not the only possible measure of the quality of an embedding. In fact, many popular dimensionality reduction objectives in practice focus on alternative embedding objectives, adopting a smoother notion of distortion. In this paper, we demonstrate that one such popular family of embedding objectives admits non-trivial approximation algorithms, making them a more fruitful testbed for algorithm design. 

Specifically, we study the \emph{multi-dimensional scaling} (MDS) objective. 
MDS dates at least to the 1930s \cite{richardson1938multidimensional}, and was popularized in the 1950s and 1960s psychometrics literature by influential works of Torgerson \cite{torgerson1952multidimensional}, Shephard \cite{shepard1962analysis-1,shepard1962analysis-2}, and Kruskal \cite{kruskal1964multidimensional,kruskal1964nonmetric}.
MDS has since become a bread-and-butter technique in applied statistics -- it is the subject of several books \cite{kruskal1978multidimensional,cox2000multidimensional,borg2005modern,borg2012applied} and makes frequent appearances in statistics textbooks \cite{davison2000multidimensional,handbook-of-statistics-MDS,johnson2002applied}.
Heuristic algorithms for MDS are built into several popular programming languages and packages, including Matlab, R, and scikit-learn \cite{MathWorks,de2011multidimensional,pedregosa2011scikit}. A Google Scholar search for ``multidimensional scaling'' turns up more than half a million results.

Formally, given an $n$-point metric $\{d_{ij}\}_{i,j \in [n]}$, the Kamada-Kawai cost-function for MDS, henceforth referred to as \emph{KK}, is defined as follows:
\begin{align*}
    \OPT = \min_{x_1,\ldots,x_n \in \R^k}  \E_{i , j \sim [n] } \Paren{ 1 - \frac{ \norm{ x_i - x_j} }{d_{i,j}} }^2 \, .
    \tag{Kamada-Kawai (KK) }
\end{align*}
Here, $\E_{i,j \sim [n]}$ denotes a uniformly random pair $i,j$ drawn from $[n] \times [n]$.
KK measures the mean-squared distance between $1$ and the multiplicative distortion $\|x_i - x_j\| / d_{i,j}$ experienced by the pair $i,j$.
For normalization, note that placing all $x_i$ at the same point gives objective value $1$, so for any input $\{d_{ij}\}_{i,j \in [n]}$, the optimum value $\OPT$ is in $[0,1]$.

KK is especially popular as a tool for force-directed graph drawing in the social and medical sciences.
For instance, the KK objective has been used to visualize the spread of obesity and the dynamics of smoking in social networks~\cite{christakis2007spread,christakis2008collective}, for mapping the structural core of the human cerebral cortex~\cite{hagmann2008mapping}, and identifying intellectual turning points in theoretical physics~\cite{chen2004searching}.
In practice, the embedding $x_1,\ldots,x_n$ is typically obtained via gradient descent or other local search methods applied to the objective function.
These objective functions are nonconvex, and we are aware of no provable guarantees for such local search procedures.
We are aware of few works giving algorithms with provable guarantees for any MDS cost function, including KK.
Before we turn to these and to our results, we contrast MDS with some familiar metric embedding problems studied in theoretical computer science.

\paragraph{MDS versus Principal Component Analysis.}
Principal component analysis (PCA) is a well-studied approach to dimension reduction of a high-dimensional Euclidean metric.
It can be phrased via a similar-looking optimization problem as KK, applied to inner products rather than distances.
If $y_1,\ldots,y_n \in \R^N$, then
\[
\textsf{PCA} = \min_{x_1,\ldots,x_n \in \R^k} \E_{i,j \in [n]} (\langle {y_i,y_j} \rangle - \langle {x_i,x_j} \rangle )^2 \, .
\]
It is well known that the optimizing $x_1,\ldots,x_n$ are given by $\Pi_k y_1,\ldots, \Pi_k y_n$, where $\Pi_k$ is the projector to the top $k$ eigenvectors of the covariance matrix of $y_1,\ldots,y_n$.
In particular, the optimal embedding is a linear projection of $y_1,\ldots,y_n$.

Since inner products are related to distances via $\langle{y_i,y_j} \rangle = \tfrac 12 (\|y_i - y_j\|^2 - \|y_i\|^2 - \|y_j\|^2)$, is it tempting to imagine that the least-squares optimization problems defining PCA and MDS are equivalent or close to it.
Indeed, a popular MDS heuristic, sometimes called \emph{classical multidimensional scaling}, applies PCA to the matrix $(I - \tfrac 1 n 1 1^\top) D^{(2)} (I - \tfrac 1 n 11^\top)$, where $D^{(2)}$ is the entrywise square of the distance matrix $D$ and $1$ denotes the all-$1$s vector -- if $\{d_{ij}\}_{i,j \in [n]}$ is a Euclidean metric $y_1,\ldots,y_n \in \R^N$, then this is equivalent to applying PCA to $y_1,\ldots,y_n$.

To the best of our knowledge, no provable approximation guarantees for KK (or any other MDS cost function) are known for this algorithm.
In general, we do not expect the PCA-optimal embedding to be related to the KK-optimal embedding, and unlike PCA, the KK-optimal embedding can be a \emph{nonlinear} mapping of $y_1,\ldots,y_n$ into $\R^k$. To demonstrate this power, we construct an explicit example where the KK-optimal embedding recovers interesting cluster structure which is lost by PCA in Section \ref{sec:MDSvPCA}. Moreover, in this example, the solution produced by PCA has an arbitrarily worse MDS cost than the optimal MDS solution, despite being close to low-dimensional initially. Thus, even for pointsets that have a small optimal MDS cost to begin with, the quality resulting from using PCA as an algorithm for the MDS objective can be arbitrarily poor.


\paragraph{State of the Art: Large Embedding Dimension or Exponential Dependence on Aspect Ratio.}
Having argued the importance of designing algorithms with provable guarantees for MDS, both because of its widespread use in practice and as a new approach to the algorithmic theory of metric embeddings, we turn to the algorithmic state of the art.

We are aware of only two works giving algorithms with provable approximation guarantees for MDS.
Bartal, Fandina, and Neiman~\cite{bartal2019dimensionality} observe that many MDS objectives become convex (semidefinite) programs if we ask to embed in $n$ dimensions -- this is true in particular for $\OPT$.
A simple Johnson-Lindenstrauss-style analysis shows that randomly projecting an $n$-dimensional KK solution, obtained via semidefinite programming, to $k$ dimensions incurs an additive $O(1/k)$ loss in objective value.
Thus, we can obtain a solution with cost $\OPT + O(1/k)$ in polynomial time.
But, as we have already discussed, for small $k$ (e.g. $k = 1,2,3$), random projection yields poor embeddings, and consequently this algorithm lacks nontrivial guarantees for such small $k$.

Demaine, Hesterberg, Koehler, Lynch, and Urschel give an algorithm that retains provable guarantees for small $k$, at the expense of much slower running time~\cite{demaine2021multidimensional}.
They associate an important parameter to the input metric: Let
\[
\Delta = \frac{\max_{i,j \in [n]} d_{ij}}{\min_{i,j \in [n]} d_{ij}}
\]
measure the \emph{aspect ratio} of the metric.
\cite{demaine2021multidimensional} 
obtains an efficient algorithm when the aspect ratio ($\Delta$) is small. Their result can be summarized as follows: 

\begin{theorem}[Approximation schemes scaling exponentially in aspect ratio~\cite{demaine2021multidimensional}]
\label{thm:demaine}
For every $k > 0$ and $p \geq 0$ there is an algorithm with running time $n^2 \cdot \exp ( ( 
\Delta / \eps)^{O(1)})$ which outputs an embedding with KK cost $\OPT + \eps$.
\end{theorem}

How should we think of the aspect ratio parameter $\Delta$?
If $\{d_{i,j}\}_{i,j \in [n]}$ represent, say, a $1$-dimensional Euclidean metric, then $\Delta = \Omega(n)$.
In general, for metrics with good low-dimensional embeddings, \cite{demaine2021multidimensional}'s algorithm has exponential running time,\footnote{We thank Bobby Kleinberg for pointing out this perspective to us.}
suggesting the question:

\begin{quote}
\begin{center}
   \emph{Is there an approximation algorithm for Kamada-Kawai which is efficient on inputs with large aspect ratio?}
\end{center}
\end{quote}
We note that inputs with large aspect ratio are of particular interest since they include instances that admit small-objective-value embeddings in low dimensions.

\subsection{Our Results}
We give an almost-exponential improvement on the running time compared to \cite{demaine2021multidimensional} with respect to $\Delta$, at the cost of a somewhat worse approximation factor.
As far as we know, we give the first nontrivial approximation algorithm for MDS, which remains polynomial-time for super-poly-logarithmic values of $\Delta$.
\begin{theorem}[Main theorem, qualitative version]
For every $k \geq 1$ there is an algorithm with running time $n^{O(1)} \cdot \exp((\log \Delta/\eps)^{O(1)})$ which outputs an embeddding with KK cost $$\tilde{O}(\log \Delta) \cdot (\OPT)^{\Omega(1)} + \eps \, .$$
\end{theorem}
We refer the reader to Remark~\ref{rem:k-dependence} for discussion of the dependence of the hidden constants on $k$.
Our result does not actually require that $\{d_{i,j}\}_{i,j \in [n]}$ form a metric -- it is enough for each to be a non-negative number in $[1,\Delta]$
(this is also true for Theorem~\ref{thm:demaine}).

\begin{remark}[Comparison with~\cite{demaine2021multidimensional}]
Consider the illustrative toy example where the input distances correspond to a small perturbation of the line metric on $n$ points. Here, the $\Set{d_{i,j}}_{i,j\in[n]}$ correspond to scalars $\Set{ x_i^* }_{i \in [n]}$ such that $d_{i,j} = (1\pm \zeta) \abs{ x_i^* - x_j^* } $, where the perturbation is bounded, i.e. $\abs{\zeta} \leq 1/\poly(n)$.
We note that such instances are of particular interest, since they are precisely the ones that admit good low-dimensional embeddings.

It is easy to see that $\Delta = \Omega(n)$ and when the target dimension $k=1$,  the KK objective has value $\OPT = 1/\poly(n)$.  Then, given an $1/\poly(n)< \eps<1$, \cite{demaine2021multidimensional} runs in time $2^{\poly(n/\eps)}$ and outputs an embedding with cost $\eps$. In contrast, our algorithm runs in time $n^{\poly(\log(n),1/\eps)}$ and also outputs an embedding with cost $\eps$. 
\end{remark}


Our approach is to round a Sherali-Adams linear programming relaxation of the KK objective function.\footnote{
We could also use a more powerful Sum-of-Squares semidefinite programming relaxation, with no change to asymptotic running times or analysis, but we do not require it.}
We use $((\log \Delta)/\eps)^{O(k)}$ levels of the hierarchy, leading to an LP with $\approx n^{((\log \Delta)/\eps)^{O(k)}}$ variables and constraints.
(We can sparsify this LP to obtain our final running time.)

The conditioning-based rounding scheme we consider is a natural one -- it goes back at least to the \emph{global correlation rounding} algorithm of \cite{BRS} for dense constraint satisfaction problems.
In fact, a na\"ive analysis of the same rounding scheme, applied instead to $(\Delta / \eps)^{O(1)}$ levels of Sherali-Adams, is one way to obtain the result in~\cite{demaine2021multidimensional}.
In their proof of Theorem~\ref{thm:demaine}, the authors of \cite{demaine2021multidimensional} take a different perspective, directly reducing MDS to a dense constraint satisfaction problem.

Our main contribution is a new \emph{geometry-aware} analysis of conditioning-based Sherali-Adams rounding, which we describe in Section~\ref{sec:techniques}.
More broadly, we believe that new approaches to analyzing convex relaxations are a promising avenue for new metric embedding algorithms.
We outline a number of open problems in this direction in Section~\ref{sec:open-problems}.

\subsection{Related Work}
\label{sec:related-work}

Multi-dimensional scaling has been studied since at least the 1950s, with too vast a literature to survey properly here.
We refer the reader to the survey~\cite{young2013multidimensional} and book~\cite{cox2000multidimensional} for review
of its history and applications.
See also the recent book~\cite{agrawal2021minimum} for a perspective from applied optimization.
We discuss here some related works from the theoretical computer science literature.

\vspace{0.1in}
\noindent\textbf{t-SNE and Friends.}
Another popular non-linear dimension reduction technique/objective function, closely related to MDS, is t-SNE ~\cite{van2008visualizing,hinton2002stochastic}.
Notably, Arora, Hu, and Kothari~\cite{DBLP:conf/colt/AroraHK18} showed
that the gradient-descent algorithm used in practice for t-SNE performs well
if the input exhibits some clusterability properties.
Other non-linear dimension reduction techniques used in practice include the Isomap embedding~\cite{balasubramanian2002isomap}, or spectral embedding
methods~\cite{von2007tutorial}. 

\vspace{0.1in}
\noindent\textbf{Approximation Algorithms via LP Hierarchies.}
There is a vast literature on approximation algorithms using LP hierarchies. Notable examples include works on scheduling with
communication delay~\cite{davies2020scheduling}, the matching polytope~\cite{DBLP:conf/stoc/MathieuS09}, numerous works on 
CSPs and Dense-CSPs~\cite{BRS,DBLP:conf/focs/GuruswamiS12}, Bin Packing~\cite{karlin2011integrality}, and Correlation Clustering~\cite{DBLP:conf/focs/Cohen-AddadLN22}.

\vspace{0.1in}
\noindent\textbf{Approximation Algorithms for Metric Embeddings with
Additive or Multiplicative Distortion}
Metric embeddings have been studied extensively for 30+ years in theoretical computer science, with extensive applications.
See e.g. references in the recent paper \cite{sidiropoulos2017metric}.

Strong inapproximability results~\cite{DBLP:conf/focs/MatousekS08} for minimizing bi-Lipschitz distortion led to some efforts to design approximation algorithms for relaxed measures of distortion, among them MDS.
We are not aware of works in this vein whose techniques apply directly to the Kamada-Kawai problem we consider here, so we defer further discussion to \cref{sec:additional-related-work}.

One additional influential work we highlight, slightly predating the inapproximability results, is \cite{abraham2005metric}, which gives a ``beacon-based'' algorithm for embedding metrics with small distortion on all but a small fraction of distances; this is another approach to relaxing bi-Lipschitzness in pursuit of better algorithmic guarantees.

\section{Techniques}
\label{sec:techniques}
We now provide an overview of our techniques.
We adopt the convention $1 \leq d_{i,j} \leq \Delta$ for all $i,j$.
This overview is meant to aid the reader in understanding the main ideas in our arguments, so the quantitative claims often ignore constant or logarithmic factors.
See Sections~\ref{sec:algo-and-analysis} and beyond for rigorous arguments.

\subsection{Our Algorithm}

\paragraph{Discretization.}
We begin by discretizing the KK problem. 
This makes it easier to formulate the Sherali-Adams LP hierarchy for KK. 
Specifically, we show (\cref{lem:aspect-ratio}) that for any $k$ and input $\{d_{i,j}\}_{i,j \in [n]}$, the optimal embedding among those which place $x_1,\ldots,x_n$ at distinct points in a $(\Delta / \eps)^{O(k)}$-size net of $\R^k$ has cost at most $\eps$ more than the optimal embedding into $\R^k$. In particular, this requires us to demonstrate that there is an $\eps$-approximate optimal embedding which has aspect ratio at most $O(\Delta / \eps)$; we do this by proving that any optimal embedding can be projected onto a ball of radius $O(\Delta/\eps)$  without significantly increasing the cost. 
Since our approximation guarantees anyway lose an additive $\eps$, we can restrict attention to embeddings $x_1,\ldots,x_n$ into a discrete subset of $\R^k$ with $(\Delta / \eps)^{O(k)}$ points.
In this overview, we will not distinguish carefully between discretized and non-discretized versions of $\OPT$.

\paragraph{Sherali-Adams Hierarchy.}
For an integer $t \geq 1$, the level-$t$ Sherali-Adams linear programming (LP) relaxation for (discretized) KK is an LP with $(n \Delta / \eps)^{O(kt)}$ variables and constraints, whose solutions are \emph{Sherali-Adams pseudoexpectations} (or ``pseudoexpectations'' for short), denoted $\pE$.

A level-$t$ pseudoexpectation is a collection of ${n \choose t}$ ``local distributions'', one for each $S \subseteq [n]$ of size $|S| = t$.
The local distribution $\mu_S$ is a joint probability distribution over assignments of $\{x_i\}_{i \in S}$ to elements of $\R^k$.
The constraints of the Sherali-Adams LP ensure that local distributions $\mu_S, \mu_T$ have the same marginal distribution on $S \cap T$.
For purposes of designing a rounding scheme for the LP, we often pretend that $\pE$ defines an honest-to-goodness joint probability distribution over assignments of $x_1,\ldots,x_n$ to $\R^k$, although we can only look at this distribution ``locally''.
This mentality dates to \cite{BRS,raghavendra2012approximating} and has since been used extensively to design algorithms by rounding Sherali-Adams LPs and Sum-of-Squares semidefinite programs.

\emph{Pseudoexpectations and variance:} For any real-valued function $f$ depending on a subset $S$ of at most $t$ out of $x_1,\ldots,x_n$, $f$ has a well-defined ``pseudo-expected'' value, denoted $\pE f = \E_{x_S \sim \mu_S} f(x_S)$.
We can extend $\pE$ to be a linear operator which assigns a real value to any linear combination of such $t$-local functions.
We can similarly measure the variance of any $t$-local function via $\pE ( f - \pE f)^2$, since $(f - \pE f)^2$ is itself a $t$-local function.

\emph{Conditioning:} For any level-$t$ pseudoexpectation $\pE$, index $i \in [n]$, and $z \in \R^k$ such that $\Pr(x_i = z) > 0$,\footnote{We can define this probability via $\pE 1_{x_i = z}$.} we can define a level-$(t-1)$ conditional pseudoexpectation $\pE[ \cdot \, | \, x_i = z]$ by defining $\mu_{S}$ to be $\mu_{S \cup \{x_i\}}$ conditioned on $x_i = z$.

\paragraph{Rounding by Conditioning.}
Our algorithm first solves the Sherali-Adams LP to find a level $t$ pseudoexpectation minimizing 
\begin{align}
\min_{\pE} \pE \E_{i,j} \left ( 1- \frac{\|x_i - x_j\|}{d_{i,j}} \right )^2 \leq \OPT \, , \label{eq:tech-0}
\end{align}
where the inequality holds because the LP is a relaxation of the minimization problem defining $\OPT$.
We ignore some technicalities in this overview; in particular, our purpose here we will set $t \approx (\log \Delta) / \eps^{O(k)}$.
Later, in Section \ref{sec:algo-and-analysis}, with the goal of optimizing the hidden constants in the approximation guarantee, we will actually set $t = ((\log \Delta) / \eps)^{O(k)}$, although noting that the approach given in this overview is valid but would give a different runtime-approximation trade-off --- see the discussion after Theorem~\ref{thm:efficient-algo-mds}.

To round $\pE$ we choose a random set $S = \{i_1,\ldots,i_{t-2}\} \subset [n]$ of size $|S| = t-2$, where each of $i_1,\ldots,i_{t-2}$ is drawn uniformly and independently at random.
Then, we draw a sample $z_S \sim \mu_S$ from the local distribution for $S$ and output the expected locations of each $x_i$ conditioned on $x_S = z_S$; i.e., we embed the $i$-th point at $\pE[ x_i \, | \, x_S = z_S] \in \R^k$.

\begin{mdframed}
  \begin{algorithm}[Efficient Kamada-Kawai (informal)]
    \label{algo:efficient-algo-intro}\mbox{}
    \begin{description}
    \item[Input:] Nonnegative numbers $\calD =\Set{ d_{i,j} }_{i,j \in [n]}$, target dimension $k \in \mathbb{N}$, target accuracy $0<\eps<1$.
    
    \item[Operation:]\mbox{}
    
\begin{compactenum}
    \item Let $t = (\log \Delta)/\eps^{O(k)}$.
    Let $\pE$ be a degree $t$ Sherali-Adams pseudoexpectation over a $(\Delta/\eps)^{O(k)}$ size net of $\R^k$, optimizing
    \[
    \min_{\pE} \hspace{0.1in} \pE  \E_{i,j \sim [n]} \Paren{ 1 - \frac{\|x_i - x_j\|}{d_{ij}}}^2 \, .
    \]
    \item Let $S\subset [n]$ be a random subset of size $t-2$ and let $z_S$ be a draw from the local distribution $\mu_S$ induced by $\pE$.
    
\end{compactenum} 
    \item[Output:]  The embedding $\Set{ \pexpecf{}{ x_i \, | \, x_S = z_S} }_{i \in [n]}$.
    \end{description}
  \end{algorithm}
\end{mdframed}

\subsection{Our Analysis}
Our goal is now to bound the expected objective value of the rounded solution
\[
\E_{S} \E_{z_S \sim \mu_S} \E_{i,j} \left ( 1- \frac{\|\pE [x_i \, | \, x_S = z_S ] - \pE [x_j \, | \, x_S = z_S ]\|}{d_{i,j}} \right )^2  \, 
\]
in terms of the LP value \eqref{eq:tech-0} and $\eps > 0$.

To get a sense of the challenge we face, let us see what could go wrong if we output $\pE x_i$ as the embedded location of point $i$, without first doing the conditioning step.
Even if $\OPT$ is $0$, meaning that $\{d_{ij}\}_{i,j \in [n]}$ has a perfect embedding into $\R^k$, the pseudoexpectation $\pE$ could have $\pE x_i = 0$ for all $i$, yielding the worst-possible rounded cost of $1$.
For instance, this could arise by first taking a perfect embedding $y_1,\ldots,y_n \in \R^k \setminus \{0\}$ of the input (i.e. $\|y_i-y_j\|_2 = d_{ij}$ for all $i,j$), and obtaining a probability distribution over embeddings which takes the values $(x_1,\dots,x_n) = (y_1,\ldots,y_n)$ with probability $1/2$ and $(x_1,\dots,x_n) = (-y_1,\ldots,-y_n)$ with probability $1/2$.

Let us see that conditioning addresses the issue in the context of this example.
If we sample, say, $z_1 = y_1$ and condition on $x_1 = z_1$, the conditional distribution of $x_i$ places all its mass $y_i$, giving the perfect embedding we are looking for.

In general, of course, we cannot assume that $\OPT = 0$ or that the pseudoexpectation $\pE$ obtained via \eqref{eq:tech-0} is an actual distribution over embeddings, much less one obtained by taking a single fixed embedding and randomly negating it.
Instead, we develop a more robust version of the above argument.
Although after conditioning on $x_S = z_S$ the distributions of the remaining $x_i$ will not be supported on a single point as they were in the above example, we will be able to show that for most $i$ the conditional distribution has small variance.
Next, we show that such a variance bound is enough to bound the rounded cost.

\paragraph{Small Variance Implies Small Rounded Cost.}
Consider the contribution of a pair $(i,j)$ to the rounded objective.
Suppose we are lucky, in the following sense:
after conditioning on $x_S = z_S$, the variance of both $x_i$ and $x_j$ is small compared to $d_{i,j}^2$.
That is, suppose:
\begin{align}
\E_{z_S \sim \mu_S} \pE [ \| x_i - \pE[ x_i \, | \, x_S = z_S ] \|^2 \, \big | \, x_S = z_S ], \, \, \E_{z_S \sim \mu_S} \pE [ \| x_j - \pE[ x_j \, | \, x_S = z_S ] \|^2 \, \big | \, x_S = z_S ]  \leq \eps \cdot d_{i,j}^2 \, .
\label{eq:tech-1.5}
\end{align}
Now, for any jointly-distributed $\R^k$-valued random variables $X,Y$, we have $\|\E X - \E Y\| = \E \|X - Y\| \pm (\E \| X - \E X\| + \E \|Y - \E Y\|)$.
Applying this fact with $(X,Y)$ being the joint distribution of $(x_i,x_j)$ conditioned on $z_S = x_S$, we can bound the contribution of the pair $(i,j)$ to the rounded cost:
\begin{align*}
    \E_{z_S \sim \mu_S} \left ( 1- \frac{\|\pE [x_i \, | \, x_S = z_S ] - \pE [x_j \, | \, x_S = z_S ]\|}{d_{i,j}} \right )^2 
    & \leq 2 \E_{z_S \sim \mu_S} \left ( 1- \frac{\pE [\| x_i - x_j \|  \, | \, x_S = z_S ]}{d_{i,j}} \right )^2 + 4 \eps \\
    & \leq 2 \E_{z_S \sim \mu_S} \pE \left [ \left ( 1- \frac{\| x_i - x_j \| }{d_{i,j}} \right )^2 \, \big | \, x_S = z_S \right ] + 4 \eps \quad \text{(Jensen's)} \, .
\end{align*}
By the law of total expectation, this is (twice) the contribution of the pair $(i,j)$ to the objective value of the LP, plus a negligible $4 \eps$.
Since the objective value of the LP is in total at most $\OPT$, if we were lucky like this for every pair $(i,j)$ we would obtain the bound $O(\OPT + \eps)$ on the rounded cost. 

As an aside, the ``global correlation rounding'' analysis of \cite{BRS,raghavendra2012approximating} can easily be adapted to show that the above variance bound holds for most $(i,j)$ pairs if we take $|S| = (\Delta / \eps)^{O(1)}$.
This would recover the result of \cite{demaine2021multidimensional}.

We aim to take $|S| \approx  (\log \Delta)/ \eps^{O(k)}$, in which case we will show that this variance bound holds for all but a roughly $(\eps + |S| \OPT)$-fraction of pairs.
To this end, we introduce a ``geometry-aware'' strategy to obtain a (weakened version of) \eqref{eq:tech-1.5}.
(We will eventually be able to argue that the small fraction of pairs where this variance bound fails do not contribute much to the rounded cost.)

\paragraph{Variances Are as Small as the Distance to the Closest Conditioned Point.}
This brings us to the question: for a given $(i,j)$ pair, how big are the variances $\pE \|x_i - \pE x_i\|^2, \pE \|x_j - \pE x_j\|^2$ after we condition on $x_S = z_S$?
Consider any jointly-distributed $\R^k$-valued random variables $(X,Y)$ -- now think of $X$ as one of the $x_i$'s, and $Y$ as some $x_\ell$ for $\ell \in S$.
Simple probability arguments show that when we fix a value $y$ for $Y$ and condition on $Y = y$, the typical conditional variance of $X$ is bounded by the typical square distance from $X$ to $Y$.
That is, conditioning on $Y$ ``localizes'' $X$ to a ball around $Y$ of squared radius $\E \|X - Y\|^2$.
Formally, 
\[
\E_{y \sim Y} ( \E[ \|X - \E[X \, | \, Y = y] \| \, | \, Y = y] ) \leq \E \|X - Y\|^2 \, .
\]

For a typical pair $(i,j)$, we want to localize $x_i$ and $x_j$ each to a ball of squared radius $\eps \cdot d_{i,j}^2$, as in \eqref{eq:tech-1.5}.
This localization happens if $S$ contains $i',j'$ such that $\pE \|x_i - x_{i'}\|^2 \leq \eps \cdot d_{i,j}^2$ and similarly $\pE \|x_j - x_{j'}\|^2 \leq \eps \cdot d_{i,j}^2$.
(Conditioning on the remaining $S \setminus \{i',j'\}$ can only reduce the expected variances further.)
To see why we should hope that $S$ will contain such $i',j'$, we need to take a detour to obtain some control over $\pE \|x_i - x_{i'}\|^2$ for $i' \in S$.

\usetikzlibrary{calc}

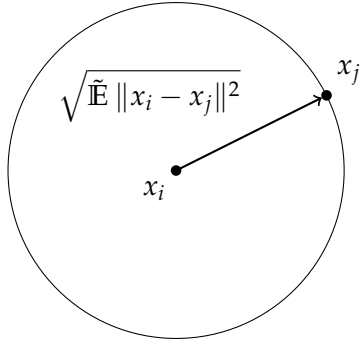
\begin{figure}[htbp]
\centering
\begin{tikzpicture}
    \coordinate[label=below left:$x_i$] (xi) at (0,0);
    \coordinate[label=above right:$x_j$] (xj) at (2,1);
    
    \draw (xi) let \p1 = ($(xj)-(xi)$) in circle ({veclen(\x1,\y1)});
    
    \fill (xi) circle (2pt);
    \fill (xj) circle (2pt);
    
    \draw[->, thick, shorten >=2pt] (xi) -- (xj) node[midway, above left] {$\sqrt{\pE \| x_i - x_j \|^2}$};
\end{tikzpicture}
\caption{Variance of $x_i$ after conditioning on $x_j$ is at most $\pE \|x_i - x_j\|^2$.}
\label{fig:circle_xi_xj}
\end{figure}

\paragraph{Most Pairwise Distances are Close to Euclidean.}
We will argue that for most $(a,b)$ pairs, $\pE \|x_a - x_b\|^2$ is close to the distance $\|x_a^* - x_b^*\|^2$, where $x_a^*$ and $x_b^*$ are members of some actual $n$-point $k$-dimensional Euclidean metric.

First, we argue that for most $(a,b)$ pairs, $\pE \|x_a - x_b\|^2 \approx d_{a,b}^2$.
Indeed, a simple Markov inequality applied to the objective function shows this.
We have $\E_{a,b} \pE ( 1- \|x_a - x_b\|/ d_{a,b})^2 \leq \OPT$.
Hence, with probability at least $1 - O(\OPT)$ over a random pair $a,b \in [n]$, we have $\pE (1 - \|x_a - x_b\| / d_{a,b})^2 \leq 0.01$.
For any such $a,b$ pair, $\pE \|x_a - x_b\|^2 = (1 \pm 0.1) \cdot d_{a,b}^2$.

We have related $\pE \|x_a - x_b\|^2$ to the input metric distances $d_{a,b}$.
Now we want to relate the input metric distances $d_{a,b}$ to distances in an actual low-dimensional Euclidean metric.
Let $x_1^*,\ldots,x_n^* \in \R^k$ be an optimal KK solution, achieving objective cost $\OPT$.
By the same Markov argument as above, with probability $1 - O(\OPT)$ over $a,b$, we have $\|x_a^* - x_b^*\|^2 = (1 \pm 0.1) \cdot d_{a,b}^2$.
Putting these together, again with probability $1 - O(\OPT)$ over $a,b$, we have $\pE \|x_a - x_b\|^2 = (1 \pm 0.2) \|x_a^* - x_b^*\|^2$.
In conclusion, if $\OPT$ is small, then for most pairs $a,b$, the pseudoexpected (square) distances agree with those of some actual $n$-point $k$-dimensional Euclidean metric, up to a multiplicative constant.

\paragraph{$x^*$-Distance to Closest Conditioned Point Is Small.}
Let us return to the mission of showing that for a typical pair $i,j$, the random set $S$ contains $i',j'$ such that $\pE \|x_i - x_{i'}\|^2 \leq \eps \cdot d_{i,j}^2$ and similarly for $j,j'$.
Since we have showed that most pairwise distances satisfy $\pE \|x_a - x_b\|^2 \approx d_{a,b}^2 \approx \|x_a^* - x_b^*\|^2$ for some $k$-dimsional metric $x_1^*,\ldots,x_n^*$, it will help to first show an analogous statement where $\pE \|x_i - x_{i'}\|^2$ and $d_{i,j}^2$ are replaced with $\|x_i^* - x_{i'}^*\|^2$ and $\|x_i^* - x_j^*\|^2$, respectively.

Consider a simple example.
Suppose $k = 1$ and $x_1^*,\ldots,x_n^*$ are the uniformly-spaced metric where $x_i^* = i$.
\begin{figure}[htbp]
\centering
\begin{tikzpicture}
    \draw[->, thick] (-1,0) -- (7,0) node[anchor=north] {$\mathbb{R}$};
    
    \foreach \x/\xtext in {0/{0}, 1/{$x_1^*$}, 2/{$x_2^*$}, 6/{$x_n^*$}}{
        \draw[thick] (\x,0.1) -- (\x,-0.1) node[below] {\xtext};
    }
    
    \node at (4,-0.3) {$\cdots$};
\end{tikzpicture}
\caption{Example -- uniformly-spaced metric $x_1*,\ldots,x_n^*$}
\label{fig:number_line}
\end{figure}
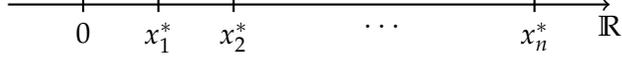
In this example, we can see that a random pair $i,j$ will have distance $\|x_i^* - x_j^*\| = \Omega(n)$.
Fixing $i$ and taking $S$ to contain $t$ random indices from $[n]$, we also have
\[
\E_S \min_{i' \in S} \|x_i^* - x_{i'}^*\| \approx \frac n {t} \approx \frac{\|x_i^* - x_j^*\|}{t}\, ,
\]
since around a $1/t$-fraction of the points $x_{i'}^*$ are at distance $\approx n/t$ to $x_i$.
If $k > 1$ but we maintain the ``uniform spacing'' assumption, we would similarly obtain $\E_S \min_{i' \in S} \|x_i^* - x_{i'}^*\| \approx \frac {n^{1/k}} {t^{1/k}} \approx \frac{\|x_i^* - x_j^*\|}{t^{1/k}}$.
So, if we take $t = 1/\eps^{O(k)}$, then at least for a typical pair $(i,j)$, we expect to have $\min_{i' \in S} \|x_i^* - x_{i'}^*\|^2 \leq \eps \cdot d_{i,j}^2$.

While the reasoning above seems specific to the uniformly-spaced metric, we can actually prove the following version of it, which applies to a general $k$-dimensional discrete metric, and also quantifies the fraction of $i,j$ pairs for which the inequality holds:
\begin{lemma}[Key lemma on $k$-dimensional discrete metrics]
\label{lem:tech-key}
Let $x_1^*,\ldots,x_n^* \in \R^k$ be a discrete metric with aspect ratio $\Delta$.
For every $\delta > 0$, with probability at least $1 - O(\delta \log(\Delta))$ over $i,j \sim [n]$ chosen uniformly at random, with probability at least $1 - \delta$ over $S \subseteq [n], |S| = t$ is uniformly random,
\[
\min_{a \in S} \|x_i^* - x_a^*\|^2 + \min_{a \in S} \|x_j^* - x_a^*\|^2 \leq \tilde{O} \left ( \frac{1}{\delta t} \right )^{1/k} \cdot \|x_i^* - x_j^*\|^2 \, .
\]
\end{lemma}

We prove Lemma~\ref{lem:tech-key} by splitting all the pairs $(i,j)$ according to $\log \Delta$ geometrically-increasing scales, and arguing separately at each scale.
The resulting dependence on $\log \Delta$ in Lemma~\ref{lem:tech-key} is the source of the dependence of our approximation guarantees on $\log \Delta$.
This dependence is tight in the sense that it cannot be removed from the statement of Lemma~\ref{lem:tech-key}, as we show in Example~\ref{ex:tight-line-metric}.
In the rest of this overview, we will ignore the failure probability over $S$ in Lemma~\ref{lem:tech-key}.
\paragraph{From $x^*$ Distances Back to Variances.}
Our goal, as before, is to show that for most $i,j$, $\E_S \min_{a \in S} \pE \|x_i - x_a\|^2 \leq \eps \cdot d_{i,j}^2$ and similarly for $j$.
(We still have to do something about the rounded cost of the small set of $(i,j)$ pairs for which this fails.)
Our progress so far is:
there is $k$-dimensional Euclidean metric $x_1^*,\ldots,x_n^*$ such that for random $i,j$,
\begin{enumerate}
    \item with probability $1 - O(\OPT)$, $\pE \|x_i - x_j\|^2 = (1 \pm 0.1) \cdot \|x_i^* - x_j^*\|^2 = (1 \pm 0.1) \cdot d_{i,j}^2$, and \label{itm:tech-1}
    \item with probability $1 - \eps$, $\min_{a \in S} \|x_i^* - x_a^*\|^2 + \min_{a \in S} \|x_j^* - x_a^*\|^2 \leq \eps \cdot \|x_i^* - x_j^*\|^2$, if we take $t \approx \tfrac{\log \Delta}{\eps^{O(k)}}$. \label{itm:tech-2}
\end{enumerate}

We would like to use \ref{itm:tech-1} to replace $\|x_i^* - x_a^*\|^2, \|x_j^* - x_a^*\|^2, \|x_i^* - x_j^*\|^2$ in \ref{itm:tech-2} with $\pE \|x_i - x_a\|^2, \pE \|x_j - x_a\|^2, \pE \|x_i - x_j\|^2$, respectively, perhaps at the cost of weakening the inequality by a multiplicative constant.
We can do this via a simple counting argument.
Because $S$ is random, every $i$ for which $\min_{a \in S} \|x_i^* - x_a^*\|^2 \gg \min_{a \in S} \pE \|x_i - x_a\|^2$ witnesses around $n/t$ pairs $i,a$ which fail the inequality in \eqref{itm:tech-1}.
So, with probability roughly $1 - O(t \cdot \OPT)$ over a randomly chosen $i$, we will have
\[
\min_{a \in S} \pE \|x_i - x_a\|^2 \leq O(1) \cdot \min_{a \in S} \|x_i^* - x_a^*\|^2  \, .
\]

Putting things together, if we draw a random pair $i,j$, with probability at least $1 - O(t \cdot \OPT + \eps)$, we will have
\begin{align}
\label{eq:tech-3}
\E_{z_S \sim \mu_S} \pE [ \| x_i - \pE[ x_i \, | \, x_S = z_S ] \|^2 \, \big | \, x_S = z_S ] + \E_{z_S \sim \mu_S} \pE [ \| x_j - \pE[ x_j \, | \, x_S = z_S ] \|^2 \, \big | \, x_S = z_S ]  \leq \eps \cdot d_{i,j}^2 \, ,
\end{align}
giving us the inequality \eqref{eq:tech-1.5} that we have been looking for.
The final step is to address the rounded cost of the ``bad'' $i,j$ pairs, occurring with probability at most $O(t \cdot \OPT + \eps)$.

\paragraph{Small Sets of Pairs Don't Contribute Much to Rounded Cost.}
Suppose that $B \subseteq [n] \times [n]$.
We show that the contribution of $B$ to the rounded objective cost is (in expectation) $O(\OPT + \Pr_{(i,j) \sim \phi}((i,j) \in B))$, giving us a bound on the contribution to the rounded cost from bad pairs.

Fix a pair $(i,j)$, a subset $S \subseteq [n]$, $|S| = t-2$, and an assignment $z_S$ for $x_S$.
The contribution of $(i,j)$ to the rounded cost obeys the bound
\begin{align}
\left ( 1- \frac{\|\pE [x_i \, | \, x_S = z_S ] - \pE [x_j \, | \, x_S = z_S ]\|}{d_{i,j}} \right )^2  & \leq 1 + \frac{\|\pE [x_i \, | \, x_S = z_S ] - \pE [x_j \, | \, x_S = z_S ]\|^2}{d_{i,j}^2} \nonumber \\
& \leq 1 + \left ( \pE \left [ \frac{\|x_i - x_j \|}{d_{i,j}} \, | \, z_S = x_S \right ]^2 \right )  \quad \text{(Jensen's)} \nonumber \\
& \leq 3 + 2 \cdot \pE \left [ \left ( 1 - \frac{\|x_i - x_j\|}{d_{i,j}} \right )^2  \, | \, z_S = x_S \right ]  \, , \label{eq:tech-1}
\end{align}
where for the last step we added and subtracted $1$ from the expression inside the square.
If we take the expectation of \eqref{eq:tech-1} over $z_S \sim \mu_S$,
we get $O(1) + (\text{contribution of $i,j$ to LP objective})$.
So,
\begin{align}
\E_{i,j} \E_{z_S \sim \mu_S} \left ( 1- \frac{\|\pE [x_i \, | \, x_S = z_S ] - \pE [x_j \, | \, x_S = z_S ]\|}{d_{i,j}} \right )^2 \cdot \mathbf{1}_{(i,j) \in B} \leq O(|B|/n^2 + \OPT) \, . \label{eq:tech-2}
\end{align}

\paragraph{Putting It All Together.}
Putting together our conclusions \eqref{eq:tech-3} about variance (after conditioning) of all but an $O(t \cdot \OPT + \eps)$-fraction of pairs \eqref{eq:tech-2} about the contribution of those remaining pairs,
and recalling that we chose $t \approx  (\log \Delta)/\eps^{O(k)}$, we obtain the bound
\begin{align}
\label{eq:tech-4}
(\text{rounded cost}) \leq   \frac{\log \Delta}{\eps^{O(k)}}  \cdot \OPT + \eps \, .
\end{align}
The minimum overall $\eps$ which can be achieved above is $\log \Delta \cdot \OPT^{\Omega(1/k)}$, leading to our final bound.
As a reminder, in this overview, we have ignored some technical details -- see Sections~\ref{sec:algo-and-analysis} and beyond for rigorous proofs.

\section{Open Problems}
\label{sec:open-problems}
Our work gives, as far as we know, the first polynomial-time approximation algorithm for any MDS-style objective function with provable approximation guarantees when $\Delta$ is super-logarithmic in $n$.
We hope that the idea of using convex programming hierarchies to Kamada-Kawai opens the door to further progress on related optimization-based formulations of dimension reduction, MDS, and metric embedding.
To this end, we offer a non-exhaustive list of some open problems, focusing on MDS-style objectives.

\paragraph{Improved algorithms for Kamada-Kawai.}
The only hardness of approximation we are aware of for the Kamada-Kawai objective is the result of \cite{demaine2021multidimensional} showing that an FPTAS is not possible unless $P = NP$.
But this does not rule out, say, constant-factor approximation:

\begin{problem}
    Design a constant-factor approximation algorithm for the Kamada-Kawai function which is polynomial time for superlogarithmic aspect ratio.
\end{problem}

\paragraph{Weighted least squares.}
A wider range of MDS-style objectives (many available in R, scikit-learn, Matlab, etc.) is captured via the following generalization of Kamada-Kawai.
A weighted least-squares instance is specified by dissimilarities $\{d_{ij}\}_{i,j \in [n]}$ and nonnegative weights $\{w_{ij}\}_{i,j \in [n]}$:
\[
\min_{x_1,\ldots,x_n \in \R^k} \E_{i \neq j} w_{ij} \Paren{1 - \frac{\|x_i - x_j\|}{d_{ij}}}^2 = \min_{x_1,\ldots,x_n \in \R^k} \E_{i \neq j} \frac{w_{ij}}{d_{ij}^2} ( d_{ij} - \|x_i - x_j\|)^2 \, .
\]
Setting all $w_{ij} = 1$ recovers the Kamada-Kawai objective; other choices lead to other popular MDS objectives.
For instance, $w_{ij} = d_{ij}$ is ``Sammon mapping'' and $w_{ij} = d_{ij}^2$ is ``raw stress''.
By analogy to CSPs, it's natural to associate a parameter to the weights measuring their density: let $W = \max w_{ij} / \E w_{ij}$.
The algorithm of \cite{demaine2021multidimensional} extends to an additive PTAS for weighted MDS, with running time exponential in $W$.
The case of ``raw stress'' is particularly interesting, because it corresponds to least-squares projection of the given metric into the set of $k$-dimensional Euclidean distance matrices.

\begin{problem}
    Provide nontrivial polynomial-time approximation guarantees for weighted least-squares when $\Delta$ is superlogarithmic.
\end{problem}

\paragraph{Krustal Stress-1 and Stress-2.}
Another common objective function is Kruskal's ``Stress-$1$'':
\[
\text{Stress-$1$}(x_1,\ldots,x_n) = \frac{ \E_{i \neq j} (d_{ij} - \|x_i - x_j\|)^2}{\E_{i \neq j} \|x_i - x_j\|^2} \, .
\]
Stress-$1$ is qualitatively different from the objectives we discussed already because of the normalization by $\|x_i - x_j\|$, meaning that the objective value penalizes contracting pairs somewhat more than the weighted least-squares objective.
A local-search routine for minimizing Stress-1 is the default implementation in Matlab's built-in \texttt{mdscale}, and is also available in scikit-learn.
Stress-$2$ is yet another variant:
\[
\text{Stress-$2$}(x_1,\ldots,x_n) = \frac{ \Paren{ \E_{i \neq j} (d_{ij} - \|x_i - x_j\|)^2} ^2}{\E_{i \neq j} \|x_i - x_j\|^4} \, .
\]

\begin{problem}
    Provide a nontrivial polynomial-time approximation algorithm for Kruskal Stress-$1$ or Stress-$2$.
\end{problem}

\section{Preliminaries}
\label{sec:prelims}

\paragraph{Notation.}
Throughout, we will write $\expecf{i,j}{\cdot}$ to denote $\expecf{i,j \sim \binom{n}{2}}{\cdot}$, and take a similar convention for $\max_{i,j},\min_{i,j}$ and so on. 
$\|\cdot \|$ denotes Euclidean norm on $\R^k$.
We write $\calN_{k, \Delta,\eps}$ to denote an $\eps$-cover / $\eps$-net of $[-\Delta ,\Delta]^{\otimes k}$, i.e. $[-\Delta ,\Delta]^{\otimes k} \supseteq \calN_{\Delta, k,\eps}$ and for any $v \in [-\Delta ,\Delta]^{\otimes k}$, there exists a $v' \in  \calN_{\Delta, k,\eps}$ such that $\norm{v - v'}\leq \eps$. It is standard, for instance via greedily adding uncovered points to $\calN_{k, \Delta, \eps}$, that such covers exist with
$\Abs{  \calN_{k, \Delta,\eps}  } \leq \Paren{c \Delta/\eps}^{k}$, for a fixed constant $c$, and that moreover one can have $\min_{x,y \in \calN_{k, \Delta,\eps}} \|x-y\| > \eps$. For a finite set $S \subseteq \R$, we write $\calQ(S,\eps)$ to denote the value of the $\eps$-th quantile/$\eps|S|$-th least element in $S$.

\vspace{-0.15in}
\paragraph{Sherali-Adams Linear Programming.}
We review the local-distribution view of the Sherali-Adams linear programming hierarchy.
Let $\Omega$ be a finite set and $n \in \N$.
For $t \in \N$, the level-$t$ Sherali-Adams LP relaxation of $\Omega^n$ is as follows.
(For a more formal account, see e.g. \cite{BRS}.)

\vspace{-0.15in}
\paragraph{Variables and constraints.} For each $T \in \binom{n}{t}$, we have $|\Omega|^t$ non-negative variables, $\mu_T(x)$ for $x \in \Omega^t$, collectively denoted $\mu_T$.
For each $T$, we constrain $\sum_x \mu_T(x) = 1$ so that $\mu_T$ is a probability distribution over $\Omega^t$.
Then, for every $S,T \in \binom{n}{t}$, we add linear constraints so that the distribution $\mu_T$ restricted to $S \cap T$ is identical to the distribution $\mu_S$ restricted to $S \cap T$.
Overall, we have $(|\Omega|n)^{O(t)}$ variables and constraints.
We sometimes call the collection of local distributions $\{ \mu_T \}$ satisfying these constraints a ``degree-$t$ Sherali-Adams pseudo-distribution''.

\vspace{-0.15in}
\paragraph{Pseudo-expectation view.}
If we have a Sherali-Adams solution which is clear from context, for any function $f$ of at most $t$ variables out of $x_1,\ldots,x_n$ (a.k.a. ``$t$-junta''), we write $\pE f(x_1,\ldots,x_n)$ for $ \E_{x \sim \mu_T} f(x_T)$, where $\mu_T$ is the local distribution on the subset $T \subseteq [n]$ of variables on which $f$ depends.
We can extend $\pE$ to a linear operator on the linear span of $t$-juntas.
Observe that $\pE f$ is a \emph{linear} function of the underlying LP variables.

\vspace{-0.15in}
\paragraph{Conditioning.}
We will frequently pass from a Sherali-Adams solution $\{\mu_T\}$ to a \emph{conditional} solution $\{\mu_T \, | \, x_{i_1} = a_1,\ldots,x_{i_{t'}} = a_{t'} \}$, for $t' < t$.
We obtain the conditional local distribution on $x_{j_1},\ldots,x_{j_{t-t'}}$ by conditioning the local distribution on $x_{j_1},\ldots,x_{j_{t-t'}}, x_{i_1},\ldots,x_{i_{t'}}$ on the event $\{ x_{i_1} = a_1,\ldots,x_{i_{t'}} = a_{t'} \}$.

\section{Algorithm and Analysis}
\label{sec:algo-and-analysis}

In this section, we formally state and prove our main theorem for the \textit{Kamada-Kawai} objective.

\begin{theorem}[Main Theorem]
\label{thm:efficient-algo-mds}
Given an instance of MDS under the 
 Kamada-Kawai objective, with  aspect ratio $1\leq \Delta$, target dimension $k \in \mathbb{N}$, and target accuracy $0<\eps < 1$, there exists an algorithm that outputs an embedding $\Set{\hat{x}_i }_{i\in[n]}$ such that with probability at least $99/100$,
\begin{equation*}
    \expecf{i,j}{\Paren{ 1 - \frac{\norm{\hat{x}_i - \hat{x}_j }_2 }{d_{i,j}} }^2} \leq 
    \begin{cases}
        \vspace{0.05in} \bigO{ \sqrt{\OPT} \cdot \log(\Delta/\eps) } + \eps  & \textrm{if } k= 1\\
        \vspace{0.05in} \bigO{ \sqrt{\OPT}\cdot \log(1/\eps) \cdot \log(\Delta/\eps)  } + \eps & \textrm{if } k= 2 \\
        \bigO{ \OPT^{ \hspace{0.04in}1/k } \cdot \log(\Delta/\eps)  } + \eps & \textrm{otherwise.}
    \end{cases}
\end{equation*}
Further, the algorithm runs in $\Paren{n\Delta/\eps}^{\mathcal{O}\Paren{k \cdot \tau }}$ time, where 
\begin{equation*}
\tau = 
    \begin{cases}
    \vspace{0.05in} \bigO{  \Paren{ \log\Paren{\log(\Delta)/\eps} \log\Paren{\Delta/\eps} } /\eps   }  & \textrm{ if } k = 1 \\
    \vspace{0.05in} \bigO{  ( \log^2\Paren{\log(\Delta)/\eps} \log\Paren{\Delta/\eps} ) /\eps   }  & \textrm{ if } k = 2 \\
    \bigO{ \Paren{ k \log(\log(\Delta)/\eps)\log(\Delta/\eps)^{k/2} } /\eps^{k/2} } & \textrm{ otherwise. } 
    \end{cases}    
\end{equation*}

\end{theorem}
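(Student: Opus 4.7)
The plan is to follow the standard ``solve a Sherali--Adams LP relaxation, then round by conditioning'' template, but to replace the usual information-theoretic accounting of conditioning progress by a \emph{geometry-aware} one that exploits the $k$-dimensional target.

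First I would discretize the search space. Writing $d_{\min}=\min_{i,j} d_{ij}$ and $d_{\max}=\Delta d_{\min}$, one can cover a ball of radius $O(d_{\max})$ with a grid $\Omega\subset\R^k$ of side length $\Theta(\eps d_{\min})$, of size $|\Omega|=(\Delta/\eps)^{O(k)}$; a standard Lipschitz argument shows that restricting the optimum to $\Omega$ costs only an additive $\eps$ in the Kamada--Kawai objective, because the cost function $(1-\|x_i-x_j\|/d_{ij})^2$ changes by $O(\eps)$ when any $x_i$ moves by $\eps d_{\min}$. Next I would write the level-$\tau$ Sherali--Adams relaxation over $\Omega^n$: pseudo-distributions $\{\mu_S\}_{|S|\le\tau}$ that are locally consistent, minimizing $\E_{i\neq j}\,\E_{(x_i,x_j)\sim\mu_{\{i,j\}}}(1-\|x_i-x_j\|/d_{ij})^2$. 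The LP value is at most $\OPT+\eps$, and the LP has $n^{O(\tau)}(\Delta/\eps)^{O(k\tau)}$ variables, matching the running time in the theorem once $\tau$ is chosen as stated.

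The rounding scheme is the usual one: pick a random seed set $T\subset[n]$ of size $\Theta(\tau)$, sample $\hat{x}_T\sim\mu_T$, then independently draw $\hat{x}_i\sim\mu_{\{i\}\mid \hat{x}_T}$ for $i\notin T$. A Raghavendra--Tan / BRS-style argument then writes the expected rounded cost as the LP cost plus an error term controlled by the average conditional ``correlation'' between $x_i$ and $x_j$ under the conditioned pseudo-distribution. In the classical CSP analysis this correlation is measured by mutual information, and driving it below $\eps^2$ requires $\tau\gtrsim\log|\Omega|/\eps^2=k\log(\Delta/\eps)/\eps^2$ conditioning steps, which would produce the useless bound $|\Omega|^{\tau}=(\Delta/\eps)^{k^2\log(\Delta/\eps)/\eps^2}$.

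The novelty -- and the step I expect to be the hardest -- is replacing this information-theoretic accounting by a geometric one. The idea I would pursue is a multi-scale covering argument: partition $[0,d_{\max}]^k$ into a dyadic tower of $\log(\Delta/\eps)$ nested grids of side lengths $d_{\max},d_{\max}/2,\ldots,\eps d_{\min}$. Correlation between $x_i$ and $x_j$ is not what matters for the Kamada--Kawai cost; what matters is correlation of their distance $\|x_i-x_j\|$, which can be measured at each scale separately. A key lemma I would aim to prove says that the ``scale-$r$ correlation'' of two marginals drops by a constant after conditioning on a single random index, with an exponent that depends only on the packing number of an $r$-ball of $\R^k$, namely $O((\log(\Delta/\eps))^{k/2})$ scales effectively contribute before all variance is eliminated. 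Summing these geometric contributions, $\tau=\tilde O(k(\log(\Delta)/\eps)^{k/2+1})$ rounds of conditioning suffice to drive the total expected scale correlation below $\eps$, which is where the $\tau$ in the theorem comes from. Finally, the approximation factor $\OPT^{1/k}\log(\Delta/\eps)$ arises because, given the multi-scale analysis, on any pair $(i,j)$ the surplus error is bounded by the LP's scale-correlation at a critical scale determined by $\OPT^{1/k}$; specifically, smaller $\OPT$ forces the LP marginals to concentrate at finer dyadic levels, and each of the $O(\log(\Delta/\eps))$ scales contributes a $\OPT^{1/k}$ term by a Chebyshev-type argument exploiting the $k$-dimensional packing. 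The $k=1,2$ cases recover the $\sqrt{\OPT}$ rate since the ``packing exponent'' degenerates; this is exactly where I expect the case split in the theorem to be forced.
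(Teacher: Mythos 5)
Your high-level skeleton matches the paper's: discretize $\R^k$ to a grid of size $(\Delta/\eps)^{O(k)}$, solve a level-$\tau$ Sherali--Adams LP, condition on a random seed set, and round via the conditioned marginals (the paper outputs $\pE_{\mu_\calT}[x_i]$ rather than an independent sample, but this is inessential). You also correctly identify that the whole game is to replace the generic information-theoretic accounting of conditioning progress by something geometry-aware. But the specific mechanism you propose diverges from the paper in a way that leaves real gaps.

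The paper does \emph{not} run a multi-scale potential argument over the conditioning process, and it does not need or claim that some ``scale-$r$ correlation drops by a constant after conditioning on a single random index.'' Its accounting of conditioning (Lemma~\ref{lem:var-redux-quantiles}, via Lemma~\ref{lem:pseudo-deviation-to-closest-conditioning}) is much more elementary: it tracks the \emph{pseudo-deviation} $\pdev{\mu}{x_i} = \pE_\mu\|x_i - \pE_\mu x_i\|$, and observes that if you condition on a random $\calT$ of size $\Theta(\log(1/\delta)/\eps)$, then with probability $1-\delta$ the set $\calT$ contains some $j$ within the $\eps$-quantile of $\{\pE_\mu\|x_i-x_j\|^2\}_{j\in[n]}$, which immediately caps $\E_{\tilde x_\calT}\pdev{\mu_\calT}{x_i}^2$ by that quantile (since conditioning on $x_j$ pins $x_i$ near $\tilde x_j$). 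There is no per-round contraction claim, and none is needed. Your claimed contraction lemma is both unproven and, as stated, unlikely to hold without further structure; if it were true with constant contraction you would get $\tau=\polylog$, which overshoots the actual bound.

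The multi-scale dyadic decomposition in the paper shows up somewhere else entirely: in the \emph{purely geometric} Lemma~\ref{lem:quantiles-with-subsets}, which is a statement about any fixed $n$-point set in $[-\Delta,\Delta]^k$, with no pseudo-distributions involved. It says that for all but an $O(\delta\log\Delta)$-fraction of pairs $(i,j)$, the $\eta$-quantiles of $\{\|x_i^*-x_\ell^*\|\}_\ell$ and $\{\|x_j^*-x_\ell^*\|\}_\ell$ are both $O((\eta/\delta)^{1/k})\cdot\|x_i^*-x_j^*\|$. That exponent $1/k$ is where the dimension enters and is the ultimate source of the $\OPT^{1/k}$ approximation factor; your appeal to ``a Chebyshev-type argument at a critical scale determined by $\OPT^{1/k}$'' is gesturing at this but is not a proof. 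Moreover, to connect the LP to this geometric fact you need two Markov-style lemmas (\ref{lem:relating-opt-to-dist-on-typical} and \ref{lem:translating-bet-quantiles}) showing that for most pairs $\pE_\mu\|x_i-x_j\|\approx d_{ij}\approx\|x_i^*-x_j^*\|$, which is also where $\sqrt{\OPT}$ (not $\OPT$) appears, and you need a crude bound on the rounded cost of the ``bad'' pairs that fail any of these (Lemma~\ref{lem:sdp-val-small-subset}). Your proposal skips all three of these steps; without them you cannot translate an LP of value $\leq\OPT$ into control of the quantities your rounding analysis needs, and you cannot charge the bad pairs. So the plan has the right flavor, but the load-bearing lemma you propose (per-scale contraction of correlation under conditioning) is not the one that works, and the lemmas that do the real work are missing.
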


\begin{remark}
\label{rem:k-dependence}
See Section~\ref{sec:smaller-LP} for a slight strengthening of Theorem~\ref{thm:efficient-algo-mds} where we sparsify the LP to obtain a $\poly(n)$ running time for constant $\Delta,k,\eps$.

We have optimized the choices of parameters in Theorem~\ref{thm:efficient-algo-mds} to maximize the exponent of $\OPT$ in our approximation guarantee.
Different tradeoffs among the parameters are possible.
As one example, if we follow an argument in line with technical overview, we can obtain a running time with $\Delta$-dependence of the form $\exp(( \log \Delta)^{O(1)} / \eps^{O(k)})$ rather than $\exp(((\log \Delta) / \eps))^{O(k)})$, at the cost that instead of $\OPT^{1/k}$ we get $\OPT^{\Omega(1/k)}$.
\end{remark}

\begin{mdframed}
  \begin{algorithm}[Efficient Kamada-Kawai]
    \label{algo:efficient-algo}\mbox{}
    \begin{description}
    \item[Input:] Nonnegative numbers $\calD =\Set{ d_{i,j} }_{i,j \in [n]}$, target dimension $k \in \mathbb{N}$, target accuracy $0<\eps<1$.
    
    \item[Operation:]\mbox{}
    
\begin{compactenum}
    \item Let \begin{equation*}
    \tau = 
    \begin{cases}
    \bigO{  \Paren{ \log\Paren{\log(\Delta)/\eps} \log\Paren{\Delta/\eps} } /\eps   }  & \textrm{ if } k = 1 \\
    \bigO{  ( \log^2\Paren{\log(\Delta)/\eps} \log\Paren{\Delta/\eps} ) /\eps   }  & \textrm{ if } k = 2 \\
    \bigO{ \Paren{ k \log(\log(\Delta)/\eps)\log(\Delta/\eps)^{k/2} } /\eps^{k/2} } & \textrm{ otherwise } 
    \end{cases}
    \end{equation*}
    Let $\mathcal{N}_{\Delta/\eps, k, 1}$ be an $1$-net of $[-\Delta/\eps,\Delta/\eps]^k$.
    Let $\mu$ be a degree-$(2\tau)$ Sherali-Adams pseudo-distribution over the domain $\mathcal{N}_{k, \Delta/\eps, 1}$ optimizing
    \[
    \min_\mu \, \pE_\mu \E_{i,j \sim [n]} \Paren{ 1 - \frac{\|x_i - x_j\|}{d_{ij}}}^2 \, .
    \]
    \item Let $\calT\subset [n]$ be a random subset of size $\tau$ and let $\tilde{x}_\calT $ be a draw from the local distribution $\{x_\calT\}$. Let $\mu_\calT$ be the pseudo-distribution obtained by conditioning $\mu$ on $\zeta_\calT := \Set{ x_j = \tilde{x}_j }_{j \in \calT}$.   
\end{compactenum} 
    \item[Output:]  The embedding $\Set{ \pexpecf{\mu_\calT}{ x_i }  }_{i \in [n]}$. 
    \end{description}
  \end{algorithm}
\end{mdframed}

Our goal is to set up a Sherali-Adams LP relaxation for the Kamada-Kawai objective.
The version of Sherali-Adams we use here works for discrete assignment problems, while Kamada-Kawai asks us to assign $x_1,\ldots,x_n$ to vectors in $\R^k$, so we need to discretize $\R^k$.
The following lemma shows that only an additive $\eps$ is lost in the objective value if we discretize to $\bigO{\Delta / \eps}^{k}$ points.
We defer the proof to Section~\ref{sec:discretization}.
\begin{lemma}[Aspect ratio of the target space]
\label{lem:aspect-ratio}
Let $\Set{d_{i,j}}_{i,j \in [n]}$ satisfy $1 \leq d_{i,j} \leq \Delta$ for all $(i,j) \in \binom{n}{2}$ and let $\OPT$ be the optimal Kamada-Kawai objective value with respect to $\R^k$. 
For $0<\eps<1$, let $\mathcal{N}_{k, \Delta/\eps, 1}$ be a $1$-cover of $[-\Delta/\eps, \Delta/\eps]^k$.
There exists an embedding $x_1,\ldots,x_n \in \mathcal{N}_{k,  \Delta/\eps, 1}$, such that for all $(i,j) \in \binom{n}{2}$ we have:
\begin{equation*}
    1 \leq \|x_i-x_j\|_2 \leq \bigO{\Delta/\eps}
    \quad \text{ and } \quad \E_{i,j \in [n]} \Paren{ 1- \frac{\norm{x_i - x_j}  }{d_{i,j} } }^2 \leq \OPT + \eps \, .
\end{equation*}
\end{lemma}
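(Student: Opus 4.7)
The plan is to start from an optimal continuous embedding $x_1^\ast,\ldots,x_n^\ast \in \R^k$ attaining $\OPT$, pre-process it so it fits inside the net's bounding box, and then round it to the net while controlling the additive cost loss. The distance upper bound $\|x_i - x_j\|_2 \le O(\Delta/\eps)$ is immediate from the box size, and the lower bound $\|x_i - x_j\|_2 \ge 1$ follows from the net's minimum-separation property $\min_{u\ne v \in \mathcal{N}} \|u-v\| > 1$ once we ensure distinct objects are assigned to distinct net points (a situation easily restored by moving any collision to a free adjacent net point).

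For the fit-into-box step, I would argue that WLOG $x^\ast \subseteq [-O(\Delta), O(\Delta)]^k$, which sits safely inside $[-\Delta/\eps,\Delta/\eps]^k$. Since $d_{ij} \le \Delta$, any pair with $\|x_i^\ast - x_j^\ast\| \ge 2\Delta$ already contributes at least $1$ to its per-pair cost, so one can either truncate the optimum and case-split on whether $\|x_i^\ast-x_j^\ast\|$ is above or below $d_{ij}$, or invoke first-order optimality (which forces each $x_i^\ast$ to sit near a weighted combination of the remaining points), to conclude that the diameter of a near-optimal embedding is $O(\Delta)$. For the rounding step, simply map each $x_i^\ast$ to its nearest net point $\tilde{x}_i \in \mathcal{N}_{k,\Delta/\eps,1}$, which gives $\|\tilde{x}_i - x_i^\ast\| \le 1$ by the $1$-cover property and hence $\bigl|\|\tilde{x}_i - \tilde{x}_j\| - \|x_i^\ast - x_j^\ast\|\bigr| \le 2$ by the triangle inequality.

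The main obstacle is the cost-change analysis. I would expand per-pair using the identity $(1-a/d)^2 - (1-b/d)^2 = \tfrac{b-a}{d}\bigl(2 - \tfrac{a+b}{d}\bigr)$ with $a = \|\tilde{x}_i-\tilde{x}_j\|$ and $b = \|x_i^\ast - x_j^\ast\|$, together with the second-moment estimate $\E_{ij}[(\|x_i^\ast-x_j^\ast\|/d_{ij})^2] = O(1+\OPT) = O(1)$ (which comes from the reverse bound $(t-1)^2 \ge t^2/2 - 1$ applied to the Kamada--Kawai cost). The subtlety is that $d_{ij}$ can be as small as $1$, so a naive bound gives only $O(1)$ distortion per pair, far too coarse to sum to $\eps$; the resolution I anticipate leverages the scale-invariance of Kamada--Kawai under simultaneous rescaling of $x$ and $d$. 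Concretely, apply the rounding in the rescaled frame $y_i = x_i^\ast/\eps$ with effective dissimilarities $d_{ij}/\eps \ge 1/\eps$, so that unit-resolution rounding becomes an $O(\eps)$-relative perturbation; summing the resulting $O(\eps)(1 + \|x_i^\ast - x_j^\ast\|/d_{ij})$ per-pair changes against the bounded second moment yields aggregate cost increase $O(\eps)$ and thus the claimed $\OPT + \eps$ guarantee.
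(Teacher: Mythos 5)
Your overall plan (bound the diameter, round to the net, rescue small $d_{ij}$ via the rescaling/scale-invariance trick) is essentially the paper's strategy, and you correctly spot the one trap most people miss: that a unit-resolution net paired with $d_{ij}$ as small as $1$ gives an $O(1)$ per-pair perturbation, which is too coarse. Your resolution by working at resolution $\eps d_{\min}$ and controlling the resulting per-pair change against the bounded first/second moment of $\|x_i^\ast-x_j^\ast\|/d_{ij}$ (which follows from the Jensen bound $\E_{i,j}\|x_i^\ast-x_j^\ast\|/d_{ij}\le 2$) matches what the paper does.

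The genuine gap is the ``fit-into-box'' step: your claim that WLOG $x^\ast\subseteq[-O(\Delta),O(\Delta)]^k$ is too strong and the two justifications you sketch do not establish it. The observation that any pair with $\|x_i^\ast-x_j^\ast\|\ge 2\Delta$ contributes $\ge 1$ only bounds the \emph{number of far pairs} by $\OPT\, n^2$, not the diameter — e.g.\ if $\OPT=\Theta(\sqrt{\eps})$ one can have $\Theta(\sqrt{\eps}\,n)$ outliers arbitrarily far from the bulk without contradiction. And the first-order stationarity heuristic fails because the stationarity condition $\sum_j \tfrac{1}{d_{ij}^2}\bigl(1-\tfrac{d_{ij}}{\|x_i-x_j\|}\bigr)(x_i-x_j)=0$ has \emph{signed} weights: whenever $\|x_i-x_j\|<d_{ij}$ the coefficient is negative, so $x_i^\ast$ need not lie in the convex hull of the other points and no diameter bound follows. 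If instead you truncate (project onto a ball of radius $O(\Delta)$) and account for the change, the Markov bound on $\E_i\|c-x_i^\ast\|\le 2\Delta$ only guarantees that a \emph{constant} fraction of points move, so the per-pair-loss-of-at-most-$1$ bound gives an $\Omega(1)$ additive increase, not $\eps$. The fix is exactly what the paper does: project onto a ball of radius $\Theta(\Delta/\eps)$ around a $1$-median of $x^\ast$; then by Markov at most an $O(\eps)$ fraction of points move, the projection is a contraction so each affected pair's cost rises by at most $1$, and the total additive loss from this step is $O(\eps)$. The diameter after projection is $O(\Delta/\eps)$, not $O(\Delta)$, which is why the net is taken over $[-\Theta(\Delta/\eps),\Theta(\Delta/\eps)]^k$ in the first place.

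A smaller point: you are right to notice that the lower bound $\|x_i-x_j\|\ge 1$ requires distinct objects to land at distinct net points (the paper's own proof actually glosses over this), but ``move any collision to a free adjacent net point'' needs more care — in the worst case a dense cluster in $x^\ast$ forces a cascade of relocations whose total displacement could exceed the resolution. A cleaner patch is to observe that after the $1/\eps$ rescaling all relevant $d'_{ij}\ge 1/\eps$, so even moving a colliding point to \emph{any} free cell within distance $O(n^{1/k})$ changes each ratio $\|x_i-x_j\|/d'_{ij}$ by $O(\eps n^{1/k})$, and handle the regime $\eps\gg n^{-1/k}$ separately (there the additive error budget $\eps$ is large enough that a per-pair slack of $O(\eps)$ absorbs the relocation).
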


We are now ready to describe our algorithm.
We follow a fairly standard strategy for rounding Sherali-Adams relaxations: condition on the values of a few randomly-chosen variables and output a simple function of the $1$-wise local distributions on the remaining variables (in this case, simply their expectations).
Our main contribution is a new \emph{geometry-aware} analysis of this rounding procedure, which obtains provable guarantees with $\approx \log(\Delta)/\eps$ levels of the Sherali-Adams hierarchy when embedding into the line or the plane.

In what follows, we use \emph{pseudo-distribution} to refer to any $n$-variable Sherali-Adams pseudo-distribution over the domain $\calN_{k, \Delta/\eps, 1}$.

\subsection{Analysis setup and pseudo-deviations}

Our first key innovation is to track the following measure of the fluctuations/uncertainty in the local distributions in our (conditioned) pseudo-distribution.

\begin{definition}[Pseudo-Deviation of an Embedding]
Given a pseudo-distribution $\mu$, we define the pseudo-deviation of each point as follows:
\begin{equation*}
   \pdev{\mu}{x_i } = \pexpecf{\mu}{ \Norm{  x_i -  \pexpecf{\mu}{x_i} }_2 } .
\end{equation*}
\end{definition}
At a high level, we argue that 
\begin{enumerate}[(a)]
\item if  the conditional pseudo-deviations are small, i.e. $\pdev{\mu_\calT}{x_i}, \pdev{\mu_\calT}{x_j} \leq \delta \cdot d_{ij}$, our rounding algorithm incurs cost that is a function of $\delta$, and \label{taga}
\item for most pairs $i,j$, the conditioned pseudo-distribution $\mu_\calT$ satisfies $\pdev{\mu_\calT}{x_i}, \pdev{\mu_\calT}{x_j} \ll d_{ij}$.
\label{tagb}
\end{enumerate}

\paragraph{On the need for conditioning.}
We show that the latter holds, despite the fact that for $\mu$ (as opposed to $\mu_\calT$), we could have $\pdev{\mu}{x_i} \approx \Delta$ for most $i$. To see this, consider a one-dimensional instance where the optimal embedding is to map half the points to roughly $-\Delta$ and half the points to $\Delta$. In this example, the target embedding, $\Set{x_i}_{i \in [n]}$,  can be partitioned into two clusters $\calC_1, \calC_2$, such that one gets mapped to $-\Delta$ and another gets mapped to $\Delta$. Now, assume the pseudo-distribution $\mu$ is the uniform distribution over $(\calC_1 \to -\Delta, \calC_2 \to \Delta)$ and $(\calC_1 \to \Delta, \calC_2 \to -\Delta)$. Such a pseudo-distribution is valid since the cost remains unchanged. However, for each $i \in[n]$, $\pdev{\mu}{x_i} = 2\Delta$ since $\pexpecf{\mu}{x_i} =0$ and $x_i$ is $\pm \Delta$ with equal probability. However, observe that conditioning on the location of a single $x_i$ precludes such a distribution over solutions. With this intuition in mind, we show that conditioning in step (2) of Algorithm~\ref{algo:efficient-algo} reduces $\pdev{}{x_i}$ in general.

\paragraph{Rounded cost to pseudo-deviations.}
We capture (\ref{taga}) via the following simple fact, which we will apply with $X,Y$ being a joint sample from the conditional local distribution on $x_i,x_j$. Since we restrict attention to pairs of variables, we can freely translate between distributions and pseudo-distributions. 

\begin{fact}[Pseudo-expected distances to pseudo-deviations, see proof in Section~\ref{sec:deferred-proofs-var-reduction-discretization}]
\label{fact:rounding-deviations}
    Let $X,Y$ be vector-valued random variables and let $d > 0$.
    Then
    \[
    \Paren{ 1 - \frac{ \| \pE X - \pE Y \| } {d}}^2 \leq 2 \pE \Paren{ 1 - \frac{ \| X - Y \| } {d}}^2 + 2 \Paren{ \frac{\pE \|X - \pE X\| + \pE \|Y - \pE Y\|}{d}}^2 \, .
    \]
\end{fact}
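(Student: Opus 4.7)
The plan is to reduce the inequality to the pointwise triangle inequality $\bigl|\,\|\bar X - \bar Y\| - \|X - Y\|\,\bigr| \le \|X - \bar X\| + \|Y - \bar Y\|$, where $\bar X := \pE X$ and $\bar Y := \pE Y$, and then carefully pass to expectations. Because the statement only involves joint functions of the two variables $x_i, x_j$, the degree-$2$ Sherali--Adams local distribution $\mu_{\{i,j\}}$ is an honest probability measure, so I can treat $X, Y$ throughout as a genuine joint random variable and use standard probabilistic manipulations (Jensen, triangle inequality for expectations).

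First I would divide the pointwise bound by $d$ and rewrite it as
\[
\left|\left(1 - \tfrac{\|\bar X - \bar Y\|}{d}\right) - \left(1 - \tfrac{\|X - Y\|}{d}\right)\right| \le \tfrac{\|X - \bar X\| + \|Y - \bar Y\|}{d}.
\]
Next I would take expectations of both sides. The key observation is that the first term on the left is deterministic, because it is a function of the means $\bar X, \bar Y$. Using $|\pE Z| \le \pE|Z|$, this yields
\[
\left|\left(1 - \tfrac{\|\bar X - \bar Y\|}{d}\right) - \pE\!\left(1 - \tfrac{\|X - Y\|}{d}\right)\right| \;\le\; \tfrac{\pE \|X - \bar X\| + \pE \|Y - \bar Y\|}{d} \;=:\; e.
\]

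Second, I would square both sides and apply the elementary inequality $(u + v)^2 \le 2u^2 + 2v^2$ with $u = \pE(1 - \|X-Y\|/d)$ and $v$ absorbing the error of magnitude at most $e$, producing
\[
\left(1 - \tfrac{\|\bar X - \bar Y\|}{d}\right)^2 \;\le\; 2\left(\pE\!\left(1 - \tfrac{\|X - Y\|}{d}\right)\right)^2 + 2 e^2.
\]
Finally, Jensen's inequality $(\pE Z)^2 \le \pE Z^2$ upgrades the first term on the right to $2 \pE (1 - \|X-Y\|/d)^2$, which matches the claimed bound exactly.

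I do not expect any real obstacle: the proof is essentially two applications of the triangle inequality followed by $(u+v)^2 \le 2u^2+2v^2$ and Jensen. The only point that deserves attention is keeping $\bar X$ and $\bar Y$ outside the expectation (they are constants, not random variables) so that $e$ comes out with the means of the deviations rather than a single expectation of their sum squared; this is also precisely what allows the bound to be stated in terms of the pseudo-deviation quantities $\pdev{\mu}{x_i} = \pE\|x_i - \pE x_i\|$ used in the analysis.
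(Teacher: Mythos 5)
Your argument is correct and follows essentially the same route as the paper's: both hinge on bounding $\bigl|\,\|\pE X - \pE Y\| - \pE\|X - Y\|\,\bigr|$ by $\pE\|X - \pE X\| + \pE\|Y - \pE Y\|$ via the triangle inequality, then applying $(u+v)^2 \le 2u^2 + 2v^2$ and Jensen. The only cosmetic difference is that you establish the triangle-inequality bound pointwise first and then take expectations, whereas the paper adds and subtracts $\pE\|X-Y\|$ directly inside the squared term and bounds the resulting difference.
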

In our setting, the left-hand side of \cref{fact:rounding-deviations} represents the cost of $x_i,x_j$ in the rounded solution, while the right-hand side is the sum of two terms: the contribution of the same pair to the Sherali-Adams objective function and a term depending on the pseudo-deviations of $x_i$ and $x_j$, scaled by the distance in the input space, i.e. $(\pdev{\mu_\calT}{x_i} + \pdev{\mu_\calT}{x_j})/d_{ij}$.

\paragraph{Pseudo-deviations to quantiles.}
To capture (\ref{tagb}), we prove the following lemma, which says that $\pdev{\mu_\calT}{x_i}^2$ can be bounded in terms of the $\eps$-th quantile of the set of numbers $\Set{ \pexpecf{\mu}{\|x_i - x_j\|^2 } }_{j \in [n]}$, which denote the pseudo-expected distances, as long as the number of points we conditioned on is large enough. 

\begin{lemma}[Pseudodeviation Reduction via Quantiles]
\label{lem:var-redux-quantiles}
Given $0<\eps, \delta < 0$, $1\leq \Delta$, let $\mu$ be a degree-$\bigO{ \log(1/\delta) /\eps}$  pseudo-distribution. 
Let $\calT\subset [n]$ be a random subset of size $ \log(1/\delta) /\eps$ and let $\tilde{x}_\calT $ be a draw from the local distribution $\{x_\calT\}$.
Let $\mu_\calT$ be the pseudo-distribution obtained by conditioning on $\zeta_\calT := \Set{ x_j = \tilde{x}_j }_{j \in \calT}$, as in step (2) of Algorithm~\ref{algo:efficient-algo}. 

Then, for all $i\in [n]$, with probability at least $1-\delta$ over the choice of $\calT$,
\begin{equation*}
    \expecf{  \tilde{x}_\calT }{    \pdev{\mu_\calT}{ x_i }^2 }  \leq \calQ\Paren{ \Set{ \pexpecf{\mu}{ \norm{x_i - x_j }^2 }  }_{j \in [n]}, \eps  } 
\end{equation*}
where $\calQ\Paren{ \calS, q }$ denotes the $q$-th quantile of the ordered set $\calS$.
\end{lemma}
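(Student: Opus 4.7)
The plan is to reduce the squared pseudo-deviation to a \emph{pseudo-variance} $V_\mu(x_i) := \pexpecf{\mu}{\norm{x_i - \pexpecf{\mu}{x_i}}^2}$, and then combine a one-step conditioning inequality with a quantile argument. Since the $1$-local distribution on index $i$ is a bona-fide probability distribution over $\calN_{k, \Delta/\eps, 1}$, Jensen's inequality gives $\pdev{\mu}{x_i}^2 \leq V_\mu(x_i)$ for every pseudo-distribution $\mu$, so it is enough to bound $\expecf{\tilde{x}_\calT}{V_{\mu_\calT}(x_i)}$.

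The one-step estimate relies on the degree-$2$ identity, valid for any $y \in \R^k$ and any pseudo-distribution $\nu$ of degree at least $2$, $\pexpecf{\nu}{\norm{x_i - y}^2} = V_\nu(x_i) + \norm{\pexpecf{\nu}{x_i} - y}^2 \geq V_\nu(x_i)$. Taking $\nu = \mu_{\Set{j}}$ and $y = \tilde{x}_j$, then averaging over $\tilde{x}_j$ and invoking the tower property $\expecf{\tilde{x}_j}{\pexpecf{\mu_{\Set{j}}}{f}} = \pexpecf{\mu}{f}$ for every $2$-junta $f$, yields $\expecf{\tilde{x}_j}{V_{\mu_{\Set{j}}}(x_i)} \leq \pexpecf{\mu}{\norm{x_i - x_j}^2}$. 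The same identity, applied to $\nu = \mu_{\Set{j^*}}$ with $y = \pexpecf{\mu_\calT}{x_i}$ and averaged over $\tilde{x}_{\calT \setminus \Set{j^*}}$, is the ``law of total pseudo-variance'' and implies that further conditioning never raises the expected pseudo-variance: $\expecf{\tilde{x}_\calT}{V_{\mu_\calT}(x_i)} \leq \expecf{\tilde{x}_{j^*}}{V_{\mu_{\Set{j^*}}}(x_i)}$ for every $j^* \in \calT$.

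To close, let $q$ denote the claimed $\eps$-quantile and set $G = \Set{j \in [n] : \pexpecf{\mu}{\norm{x_i - x_j}^2} \leq q}$; by the definition of the quantile, $|G| \geq \eps n$. Whenever the uniformly-random set $\calT$ meets $G$, choosing any $j^* \in \calT \cap G$ and chaining the two displays above gives $\expecf{\tilde{x}_\calT}{V_{\mu_\calT}(x_i)} \leq \pexpecf{\mu}{\norm{x_i - x_{j^*}}^2} \leq q$, which together with $\pdev{\mu_\calT}{x_i}^2 \leq V_{\mu_\calT}(x_i)$ yields the desired bound. The probability that $\calT$, of size $\log(1/\delta)/\eps$, avoids $G$ is at most $(1-\eps)^{|\calT|} \leq \exp(-\eps|\calT|) = \delta$, which is the claimed failure probability.

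The only real item to verify carefully is degree accounting: every pseudo-expectation that appears is at most quadratic in the entries of $x_i$, $x_j$, $x_{j^*}$, while conditioning on $|\calT|$ indices consumes only $|\calT|$ levels, so the $\bigO{\log(1/\delta)/\eps}$ levels the lemma assumes are comfortably enough. I do not foresee a substantive obstacle; the content of the lemma is captured in the one-step inequality $\expecf{\tilde{x}_j}{V_{\mu_{\Set{j}}}(x_i)} \leq \pexpecf{\mu}{\norm{x_i - x_j}^2}$, which is what allows the geometry of the input distances to be transferred to a bound on post-conditioning pseudo-deviations.
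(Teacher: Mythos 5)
Your proof is correct, and it is essentially the argument in the paper's proof sketch for this lemma: reduce $\pdev^2$ to the pseudo-variance $V_{\mu_\calT}(x_i) = \pexpecf{\mu_\calT}{\norm{x_i - \pexpecf{\mu_\calT}{x_i}}^2}$ (you make the Jensen step explicit, which the sketch leaves implicit), then use the bias--variance decomposition with a conditioned point $\tilde{x}_{j^*}$ as the reference, pass to $\pexpecf{\mu}{\norm{x_i - x_{j^*}}^2}$ by the tower property on the genuine local distribution on $\{i\}\cup\calT$, and close with the coupon-collector argument that a random $\calT$ of size $\log(1/\delta)/\eps$ hits the bottom $\eps$-quantile of $\{\pexpecf{\mu}{\norm{x_i-x_j}^2}\}_j$ with probability $1-\delta$.

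A few distinctions worth noting. First, you route the multi-point conditioning through an explicit ``law of total pseudo-variance'' reduction $\expecf{\tilde{x}_\calT}{V_{\mu_\calT}(x_i)} \leq \expecf{\tilde{x}_{j^*}}{V_{\mu_{\{j^*\}}}(x_i)}$; this is not strictly necessary. The paper's sketch applies the bias--variance inequality directly to $\nu = \mu_\calT$ with $y = \tilde{x}_{j^*}$ (a constant once $\zeta_\calT$ is fixed, since $j^* \in \calT$) and then invokes the tower property once over all of $\tilde{x}_\calT$, which is a slightly shorter route to the same bound. Your derivation of the total-variance step by plugging the \emph{random} quantity $y = \pexpecf{\mu_\calT}{x_i}$ into the bias--variance identity and then averaging is written a bit loosely (the identity as stated requires $y$ to be a constant for the $\nu$ in question), though the conclusion is a standard fact about the genuine joint distribution on $\{i\}\cup\calT$ and is easily justified cleanly. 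Second, the paper's \emph{deferred} full proof actually goes through Lemma~\ref{lem:pseudo-deviation-to-closest-conditioning}, which works with $(\pexpecf{}{\norm{\cdot}})^2$ directly, pivots through $x_{j^*}$ using the triangle inequality and $(a+b)^2 \leq 2(a^2+b^2)$, and thus picks up a multiplicative constant; your variance-based argument is tighter and matches the constant-free bound in the lemma statement exactly. Your degree accounting is also fine: the argument only requires the local distribution on $\{i\}\cup\calT$ to be genuine, i.e. degree $\geq |\calT|+1$, which the hypothesis supplies.
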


We defer the full proof Section~\ref{sec:varredux-proof}.
Next, we develop tools to relate $\calQ\Paren{ \Set{ \pexpecf{\mu}{ \norm{x_i - x_j }^2 }  }_{j \in [n]}, \eps  }$ to $\{ d_{ij} \}$.
Before turning to those tools, though, what do we do with those $i,j$ where $\pdev{\mu_\calT}{x_i}, \pdev{\mu_\calT}{x_j}$ remains large even after conditioning?

\paragraph{Rounded cost on any small subset is small.}
We show that no pair $i,j$ has a rounded cost greater than $O(1)$ plus its contribution to the relaxed objective function, so if there aren't too many ``bad'' pairs like this, their contribution to the rounded cost isn't large. Formally, we show that the rounded cost on a small set of pairs can be bounded by the size of the set and their contribution to $\OPT$:

\begin{lemma}[Rounded cost on a small set of pairs]
\label{lem:sdp-val-small-subset}
 Let $\calT\subset [n]$ be a subset of size $\tau$ and let $\tilde{x}_\calT \in \Paren{ [-\Delta,\Delta]^{\otimes k} }^{|\calT|}$ be a draw from the local distribution $\{x_\calT\}$. Let $\mu_\calT$ be the pseudo-distribution obtained in Algorithm~\ref{algo:efficient-algo}, step (2).
 Let $\calS \subseteq \binom{n}{2}$ be any subset of pairs. Then, 
    \begin{equation*}
      \expecf{\tilde{x}_\calT } { \sum_{(i,j) \in \calS} \Paren{ 1 - \frac{\norm{\pexpecf{\mu}{ x_i } -  \pexpecf{\mu}{ x_j } } }{d_{i,j}} }^2 } \leq 4 \cdot \abs{\calS} + 4 \OPT\cdot \binom{n}{2}. 
    \end{equation*}
\end{lemma}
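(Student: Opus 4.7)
The plan is to prove a \emph{pointwise} bound (in $\tilde{x}_\calT$) showing that each pair's rounded cost is at most a small absolute constant plus a constant multiple of that pair's pseudo-expected Kamada--Kawai cost under $\mu$, then sum over $\calS$ and invoke optimality of $\mu$. The key conceptual point is that, because Sherali--Adams gives genuine local distributions at degree $2$, we can apply convexity in the usual way to the pairwise marginal $\mu_{\calT,\{i,j\}}$.

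First I would apply the elementary pointwise inequality $(1-t)^2 \leq 1 + t^2$, valid for $t\geq 0$, with $t = \norm{\pexpecf{\mu_\calT}{x_i} - \pexpecf{\mu_\calT}{x_j}}/d_{i,j}$, which extracts a squared-distance term from the rounded cost. Next I would use the fact that for each $(i,j)$ the degree-$2$ marginal of $\mu_\calT$ is an \emph{honest} probability distribution on $\calN_{k,\Delta/\eps,1}^{2}$, so Jensen's inequality applied to the convex map $(a,b)\mapsto\norm{a-b}^2$ gives $\norm{\pexpecf{\mu_\calT}{x_i} - \pexpecf{\mu_\calT}{x_j}}^2 \leq \pexpecf{\mu_\calT}{\norm{x_i - x_j}^2}$. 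Finally I would invoke a second elementary pointwise inequality, $a^2 \leq 2 + 2(1-a)^2$ (a rearrangement of $(a-2)^2 \geq 0$), inside the pseudo-expectation with $a = \norm{x_i - x_j}/d_{i,j}$. Chaining these three steps yields, pointwise in $\tilde{x}_\calT$,
\[
\Paren{1 - \frac{\norm{\pexpecf{\mu_\calT}{x_i} - \pexpecf{\mu_\calT}{x_j}}}{d_{i,j}}}^2 \;\leq\; 3 \;+\; 2\, \pexpecf{\mu_\calT}{\Paren{1 - \frac{\norm{x_i - x_j}}{d_{i,j}}}^2}.
\]

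Averaging over $\tilde{x}_\calT$ and using the law of total pseudo-expectation $\E_{\tilde{x}_\calT}\pexpecf{\mu_\calT}{f(x_i,x_j)} = \pexpecf{\mu}{f(x_i,x_j)}$ collapses each pair's contribution back to $3 + 2\,\pexpecf{\mu}{(1 - \norm{x_i-x_j}/d_{i,j})^2}$. Summing over $(i,j)\in\calS$ and upper bounding $\sum_{(i,j)\in\calS}\pexpecf{\mu}{\cdot} \leq \binom{n}{2}\cdot\expecf{i,j}{\pexpecf{\mu}{\cdot}}$, then using optimality of $\mu$ combined with \cref{lem:aspect-ratio} (which supplies a feasible pseudo-distribution of objective value at most $\OPT + \eps$) yields $3|\calS| + 2\OPT\binom{n}{2} + O(\eps)\binom{n}{2}$, comfortably inside the claimed $4|\calS| + 4\OPT\binom{n}{2}$.

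There is no deep obstacle here: the entire argument is pointwise in $\tilde{x}_\calT$ up to the final averaging step, and uses only elementary convex-analytic facts together with the fact that Sherali--Adams provides genuine local marginals at degree $2$. The only thing requiring mild care is bookkeeping --- tracking which inequalities live inside $\pexpecf{\mu_\calT}{\cdot}$ (and hence are pointwise in $\tilde{x}_\calT$) versus which require the outer expectation over $\tilde{x}_\calT$ to collapse $\mu_\calT$ back to $\mu$.
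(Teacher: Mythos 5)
Your proof is correct and follows essentially the same route as the paper's: split off an absolute constant from the rounded cost with an elementary quadratic inequality, apply Jensen to the honest degree-$2$ marginal of $\mu_\calT$ to pass from $\|\pE_{\mu_\calT}[x_i-x_j]\|^2$ to $\pE_{\mu_\calT}\|x_i-x_j\|^2$, re-insert the $-1$ with a second elementary quadratic inequality inside the pseudo-expectation, collapse $\E_{\tilde{x}_\calT}\pE_{\mu_\calT}$ to $\pE_\mu$ by the law of total expectation, and finally invoke optimality of the LP. The paper uses $(a+b)^2 \le 2a^2+2b^2$ at both elementary steps (yielding the constants $4,4$), whereas you use $(1-t)^2 \le 1+t^2$ and $a^2 \le 2+2(1-a)^2$, yielding the slightly tighter $3|\calS| + 2(\cdot)\binom{n}{2}$; this is just a bookkeeping difference and either set of constants suffices. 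The only place to be a bit more careful than your last line suggests is the final appeal to $\OPT$: the LP value is bounded by the discretized optimum, which by \cref{lem:aspect-ratio} is $\le \OPT+\eps$ rather than $\le \OPT$, so strictly speaking the extra $O(\eps)\binom{n}{2}$ term is not automatically ``comfortably inside'' $4|\calS|+4\OPT\binom{n}{2}$ when $\OPT \ll \eps$. The paper's own proof elides this same point (it simply asserts the LP value is $\le \OPT$), and the slack is harmless where the lemma is ultimately applied, but you should not claim it is absorbed for free.
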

\begin{proof}
Using the fact that $(a +b)^2 \leq 2( a^2 + b^2 )$, we have
\begin{align*}
        \expecf{ \tilde{x}_{\calT} }{ \sum_{(i,j) \in \calS} \Paren{ 1 - \frac{\norm{\pexpecf{\mu}{ x_i } -  \pexpecf{\mu}{ x_j } } }{d_{i,j}} }^2 } & \leq 2 \sum_{(i,j) \in \calS } 1 + 2  \expecf{ \tilde{x}_{\calT} }{ \sum_{(i,j)\in \calS} \Paren{ \frac{\norm{  \pexpecf{\mu_{\calT}}{x_i - x_j}  }}{d_{i,j}} }^2 } \\
        & \leq 2 \cdot \Abs{\calS } + 2  \sum_{(i,j)\in \calS} \expecf{ \tilde{x}_{\calT} }{ \pexpecf{\mu_{\calT} }{ \Paren{ \frac{\norm{x_i - x_j} }{d_{i,j}} + 1 - 1 }^2 } } \\
        & \leq 4 \cdot \abs{\calS} + 4  \sum_{(i,j) \in \binom{n}{2}} \pexpecf{\mu}{ \Paren{ \frac{\norm{x_i - x_j} }{d_{i,j}}  - 1 }^2 }\\
        & \leq  4 \cdot \abs{\calS} + 4 \OPT\cdot \binom{n}{2}
\end{align*}
where the second inequality follows from Jensen's inequality on the local distributions in $\mu_\calT$, the third uses the law of total probability, and the last inequality follows from the Sherali-Adams objective value being smaller than $\OPT$.
\end{proof}

\subsection{Bounding quantiles of pseudo-expected distances }

Lemma~\ref{lem:var-redux-quantiles} shows that we can control $\pdev{\mu_\calT}{x_i}$ using quantiles of $\{ \pE \|x_i - x_j\| \}_{j \in [n]}$; now we need to bound those quantiles.
Our goal is to show that for most $i,j$,
\[
\calQ \Paren{ \{ \pE \|x_i - x_\ell \| \}_{\ell \in [n]}, \eps }, \calQ \Paren{ \{ \pE \|x_j - x_\ell \| \}_{\ell \in [n]}, \eps } \ll d_{ij}
\]

As a thought experiment, consider the case where instead of a pseudo-distribution we had an actual set of points $x_1^*,\ldots,x_n^* \in \R^k$, 
 and instead of $d_{ij}$ we had $\|x_i^* - x_j^*\|$, we could hope that quantiles of $\calQ ( \{ \|x_i^* - x_\ell^* \| \}_{j \in [n]}, \eps ) \ll \|x_i^* - x_j^* \|$ for typical $\ell \in [n]$.
Interestingly even in our setting, we can establish a fact like this for sets of points $x_1^*,\ldots,x_n^* \in \R^k$, and then show that most $\pE \|x_i - x_j\|$ behave like a corresponding $\|x_i^* - x_j^*\|$ for an optimal embedding $x_1^*,\ldots,x_n^*$.

\paragraph{Low-dimensional geometry and quantiles of pair-wise distances.}

Consider the special case of n distinct points, $\Set{x_1^*, x_2^*, \ldots, x_n^*}$, on the discrete interval $[-\Delta, \Delta]$. We derive the following geometric fact: ignoring constant factors, for all but an $\eps\cdot \log(\Delta)$-fraction of pairs $(i,j)$, the intervals of radius $\eps\cdot \norm{x_i^* - x_j^*}$ around $x_i^*$ and $x_j^*$ respectively contain at least $\eps n$ points. This statement immediately implies that the $\eps$-th quantiles of the sets $\Set{ \norm{ x_i^* - x_k^* } }_{k \in [n]}, \{ \norm{ x_j^* - x_k^* } \}_{k \in [n]}$, are at most $\eps\cdot \norm{x_i^* - x_j^*}$. 

We state a generalized version of this geometric fact below and defer the proof to \cref{sec:quantiles-of-pairwise-distances}:
\begin{lemma}[Quantiles of distances in low-dimensional space]
\label{lem:quantiles-with-subsets}
Given $0< \eta, \delta <1$, and a set of $n$ points in the range $[-\Delta, \Delta]^{\otimes k}$, denoted by $\Set{x_i^*}_{i \in [n]}$,  for all but  $\bigO{ \delta \log(\Delta)\cdot n^2 }$ pairs, $(i,j)$ such that $\norm{ x_i^* - x_j^* } \geq 1$, we have
\begin{equation*}
 \calQ\Paren{ \Set{\norm{ x_i^* - x_k^* } }_{k \in  [n] }, \eta  }  + \calQ\Paren{ \Set{\norm{ x_j^* - x_k^*} }_{k \in [n] }, \eta }  \leq \bigO{ \Paren{ \frac{ \eta }{\delta} }^{1/k} } \norm{x_i^* - x_j^*}. 
\end{equation*}
\end{lemma}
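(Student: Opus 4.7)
The plan is to set $s = c_k (\eta/\delta)^{1/k}$ for a large-enough $k$-dependent constant $c_k$ and bound the number of \emph{bad} pairs -- those $(i,j)$ with $\|x_i^* - x_j^*\| \geq 1$ whose quantile sum exceeds $s \cdot \|x_i^* - x_j^*\|$ -- via a dyadic scaling plus cube-tiling argument. By symmetry, a bad pair satisfies, WLOG, $\calQ(\{\|x_i^* - x_k^*\|\}_k, \eta) > s r_{ij}/2$ (where $r_{ij} = \|x_i^* - x_j^*\|$), which is equivalent to $|B(x_i^*, s r_{ij}/2) \cap \{x_k^*\}| < \eta n$. I would bucket the bad pairs by dyadic scale: a pair lives at scale $\ell$ when $r_{ij} \in [2^\ell, 2^{\ell+1})$, for $\ell = 0, \ldots, L$ with $L = O(\log \Delta)$. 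Writing $A_\ell = \{i : |B(x_i^*, s \cdot 2^{\ell-1})| < \eta n\}$ for the ``low-mass'' points at scale $\ell$, monotonicity of ball mass forces $i \in A_\ell$ for any pair bad for $i$ at scale $\ell$.

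The geometric heart of the argument is to tile $[-\Delta, \Delta]^k$ into axis-aligned cubes of side $s \cdot 2^{\ell-1}/\sqrt{k}$, so that each cube has diameter $s \cdot 2^{\ell-1}$. Two observations drive the count: (i) if $i \in A_\ell$, the cube $Q(i)$ containing $x_i^*$ sits inside $B(x_i^*, s \cdot 2^{\ell-1})$ and therefore contains fewer than $\eta n$ points; (ii) for any $j$ with $\|x_i^* - x_j^*\| \leq 2^{\ell+1}$, the cube $Q(j)$ lies within distance $2^{\ell+1} + s \cdot 2^{\ell-1}$ of $Q(i)$, and a standard volume count shows only $O_k((1 + 1/s)^k) = O_k(\max(1, 1/s^k))$ such nearby cubes exist. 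A double-count over cube pairs $(Q, Q')$ with $Q$ low-mass and $Q'$ adjacent then bounds the bad pairs at scale $\ell$ by $O_k(\eta n^2 \cdot \max(1, 1/s^k))$.

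In the regime $s \leq 1$ (equivalently $\eta \leq \delta$), this per-scale bound becomes, via $s^k = \eta/\delta$, a clean $O_k(\delta n^2)$, and summing across $L = O(\log \Delta)$ scales yields the desired $O_k(\delta \log(\Delta) \cdot n^2)$. The main obstacle I expect is the complementary regime $s > 1$, where the cube count above degenerates to a constant and the per-scale bound weakens to $O_k(\eta n^2)$, off by $\eta/\delta = s^k$. Closing this gap appears to require an auxiliary packing bound on $|A_\ell|$: since the balls $\{B(x_i^*, s \cdot 2^{\ell-1}/2)\}_{i \in A_\ell'}$ (for a Vitali subset $A_\ell' \subseteq A_\ell$) can be made disjoint inside $[-\Delta, \Delta]^k$, volume yields $|A_\ell'| \leq O_k((\Delta/(s \cdot 2^\ell))^k)$ and in turn $|A_\ell| \leq O_k((\Delta/(s \cdot 2^\ell))^k \cdot \eta n)$; combining with the trivial $|A_\ell|\leq n$ via a $\min$ and summing dyadically, the contributions telescope to restore the target $O_k(\delta \log(\Delta) \cdot n^2)$.
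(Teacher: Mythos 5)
Your core argument---dyadic decomposition by scale, a tiling at each scale with cell diameter $\approx s\cdot 2^{\ell}$ where $s = c_k(\eta/\delta)^{1/k}$, discarding pairs whose endpoints land in ``light'' cells, and a double-count using the bounded number of nearby cells---is essentially the same as the paper's proof, which uses $\eps$-covers at two scales ($2^T$ and $\eta\cdot 2^T/\delta$) where you use one cube tiling, and works with auxiliary parameters that get substituted $\eta \gets \eta^{1/k}$, $\delta \gets \delta^{1/k}$ at the very end. Your direct parameterization via $s$ is a bit cleaner. In the regime $s \leq 1$ (i.e.\ $\eta \lesssim \delta$) your per-scale count correctly gives $O_k(\delta n^2)$ and summing over $O(\log\Delta)$ scales gives the stated bound. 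This is also the only regime the paper actually needs: in the proof of Claim~\ref{claim:avg-good-quantile-shrink} the lemma is always invoked with $\delta \geq \Omega(\eta)$.

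Your worry about the complementary regime $s > 1$ is legitimate---the paper's proof has the same implicit restriction, since its ``fine'' cover of parameter $\eta\cdot 2^T/\delta$ must be finer than the ``coarse'' one of parameter $2^T$. However, your proposed Vitali fix does not close the gap. Combining $|A_\ell| \leq O_k\bigl((\Delta/(s\cdot 2^\ell))^k \eta n\bigr)$ with the trivial bound $|A_\ell| \leq n$, taking the per-scale $\min$ with the cube count $O_k(\eta n^2)$, and summing over scales still gives at best $O_k(\eta \log\Delta \cdot n^2)$ rather than $O_k(\delta\log\Delta \cdot n^2)$---the factor of $\eta/\delta = s^k/c_k^k$ is not recovered by the telescoping. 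In fact the lemma as literally stated appears to be false when $\eta \gg \delta$: take $k=1$ and points at $2, 4, \ldots, 2^n$ (so $\Delta = \Theta(2^n)$, $\log\Delta = \Theta(n)$), $\eta = 1/4$, $\delta = 1/n^2$. For every pair $i < j < n/10$ one has $\calQ_i(1/4) \approx \calQ_j(1/4) \approx 2^{n/4}$ but $\|x_i^* - x_j^*\| \leq 2^{n/10}$, so the required inequality $\calQ_i + \calQ_j \leq O(\eta/\delta)\|x_i^*-x_j^*\| = O(n^2)\cdot 2^{n/10}$ fails for all $\Theta(n^2)$ such pairs, while the allowed budget is only $O(\delta\log\Delta\cdot n^2) = O(n)$. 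So the right reading is that the lemma carries an implicit hypothesis $\eta \leq O(\delta)$; under that hypothesis your argument (and the paper's) is correct, and you should drop the $s>1$ branch rather than try to patch it.
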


Since the $\log(\Delta)$ in Lemma~\ref{lem:quantiles-with-subsets} is the origin of the $\log(\Delta)$ in our approximation guarantees and running times, we give the following example to show that it cannot be removed from Lemma~\ref{lem:quantiles-with-subsets}.
\begin{example}
\label{ex:tight-line-metric}
Consider the case $k=1$ where we let $n = \log \Delta$ and $x_i^* = 2^i$.
Then for every pair $i<j$, we have $2^{j-1} \leq \|x_i^* - x_j^*\| \leq 2^{j}$, thus there does not exist even a single point $x_k^*$ such that $\|x_j^* - x_k^*\| \leq (1/3) \cdot \|x_i^* - x_j^*\|$. 
\end{example}

\paragraph{Quantiles of pseudo-expected distances to quantiles of an optimal embedding.}

Next, we translate between quantiles of pseudo-expected distances and quantiles of distances in an optimal embedding.
To start, the following ``glorified Markov'' inequality shows that $\pE \|x_i - x_j\|$ is close to $\|x_i^* - x_j^*\|$ for some vectors $x_1^*,\ldots,x_n^*$ and most $i,j$ pairs. This lemma hinges on the fact that the cost of the solution output by the LP is at most $\OPT$, by virtue of it being a relaxation. It also crucially uses the fact that the Kamada-Kawai objective is an average over pairs. 


\begin{lemma}[Relating $\OPT$ to distortion on typical distances]
\label{lem:relating-opt-to-dist-on-typical}
Let $\Set{x_1^*, x_2^*, \ldots, x_n^*}$ be some optimal embedding of the points minimizing the Kamada-Kawai objective.
Then, for any  $0<c<1/4$, with probability at least $1-(\OPT/c)$, for a random pair $i,j \sim [n]$, 
\begin{equation*}
   (1 - 2\sqrt{c} ) \norm{ x^*_{i} - x^*_j }^2    \leq d_{i,j}^2 \leq (1 + 2\sqrt{c}) \norm{ x^*_{i} - x^*_j }^2.
\end{equation*}
Let $\mu$ be any pseudo-distribution that minimizes the objective $\E_{i,j\in[n] } \Paren{ 1 - \frac{\norm{x_i - x_j} }{d_{i,j} } }^2 $. Then, with probability at least $1-(\OPT/c)$, for a random pair $i,j \sim [n]$, 
\begin{equation*}
     \Paren{ 1-2\sqrt{c} }  \norm{x_i^* -x_j^* } \leq \pexpecf{\mu}{ \norm{x_i  - x_j } }  \leq  \Paren{ 1+2\sqrt{c} }  \norm{x_i^* -x_j^* },
\end{equation*}
and
\begin{equation*}
    \Paren{1-2\sqrt{c}}d_{i,j}^2   \leq \pexpecf{\mu}{ \norm{x_i  - x_j }^2 } \leq \Paren{2 + 2c} d_{i,j}^2. 
\end{equation*}
\end{lemma}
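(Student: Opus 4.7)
The plan is to combine Markov's inequality (applied to the Kamada--Kawai cost) with Jensen's inequality (applied to the honest two-variable local distributions that Sherali--Adams maintains).

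For the first claim, observe that by the definition of the optimal embedding, $\OPT = \E_{i,j}\bigl[(1 - \|x_i^* - x_j^*\|/d_{i,j})^2\bigr]$. Since this integrand is non-negative, Markov gives $\Pr_{i,j}\bigl[(1 - \|x_i^* - x_j^*\|/d_{i,j})^2 \geq c\bigr] \leq \OPT/c$. On the complementary event, $|1 - \|x_i^* - x_j^*\|/d_{i,j}| < \sqrt{c}$, hence $(1-\sqrt{c})\,d_{i,j} \leq \|x_i^* - x_j^*\| \leq (1+\sqrt{c})\,d_{i,j}$. Squaring and using the elementary bounds $(1+\sqrt{c})^{-2} \geq 1 - 2\sqrt{c}$ and $(1-\sqrt{c})^{-2} \leq 1 + 2\sqrt{c} + O(c)$ (valid for $c < 1/4$) give the stated two-sided comparison between $d_{i,j}^2$ and $\|x_i^* - x_j^*\|^2$ with constant $(1 \pm 2\sqrt{c})$.

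For the second claim, the key input is that the Sherali--Adams LP is a relaxation, so $\E_{i,j}\pE_\mu(1 - \|x_i - x_j\|/d_{i,j})^2 \leq \OPT$. Markov once more gives $\pE_\mu(1 - \|x_i - x_j\|/d_{i,j})^2 \leq c$ except with probability $\OPT/c$ over $(i,j)$. Since a degree-$(2\tau)$ Sherali--Adams pseudo-distribution induces a genuine local distribution on the pair $(x_i,x_j)$, I may apply ordinary Jensen to the convex function $t \mapsto (1 - t/d_{i,j})^2$, yielding $(1 - \pE_\mu\|x_i - x_j\|/d_{i,j})^2 \leq c$, i.e.\ $(1-\sqrt{c})\,d_{i,j} \leq \pE_\mu\|x_i - x_j\| \leq (1+\sqrt{c})\,d_{i,j}$. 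Combining this with the first claim via a union bound (absorbing a factor of $2$ into the constant in the failure probability, or equivalently rescaling $c$) converts the inequality into the desired two-sided multiplicative comparison between $\pE_\mu\|x_i - x_j\|$ and $\|x_i^* - x_j^*\|$.

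For the third claim, I expand $\pE_\mu(1 - \|x_i - x_j\|/d_{i,j})^2 = 1 - 2\pE_\mu\|x_i - x_j\|/d_{i,j} + \pE_\mu\|x_i - x_j\|^2/d_{i,j}^2$ and solve for $\pE_\mu\|x_i - x_j\|^2$. On the good event from the second claim, the left-hand side is at most $c$ and $\pE_\mu\|x_i - x_j\|/d_{i,j} \leq 1 + \sqrt{c}$, so $\pE_\mu\|x_i - x_j\|^2 \leq d_{i,j}^2(1 + 2\sqrt{c} + c) \leq (2 + 2c)\,d_{i,j}^2$, where the final step uses $2\sqrt{c} \leq 1 + c$. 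The matching lower bound $\pE_\mu\|x_i - x_j\|^2 \geq (1-2\sqrt{c})\,d_{i,j}^2$ follows from Jensen in the form $\pE X^2 \geq (\pE X)^2$, applied to the honest two-variable local distribution, together with the lower bound $\pE_\mu\|x_i - x_j\| \geq (1-\sqrt{c})\,d_{i,j}$ just established. There is no substantive obstacle beyond careful bookkeeping; the one conceptual point worth flagging is that although $\mu$ is only a pseudo-distribution, its marginals on any single pair $(i,j)$ are genuine probability measures on $\mathcal{N}_{k,\Delta/\eps,1}^{\otimes 2}$, which is precisely what licenses the two appeals to Jensen above.
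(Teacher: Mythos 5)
Your proof is correct and follows essentially the same route as the paper: Markov's inequality applied to the (pseudo-)expected Kamada--Kawai cost, followed by Jensen's inequality on the genuine two-variable local distributions, followed by elementary algebra. The only differences are cosmetic — you apply Markov before Jensen rather than after, and you derive the $\pE\|x_i-x_j\|^2 \leq (2+2c)d_{ij}^2$ bound by expanding the square rather than via $(a+b)^2 \leq 2(a^2+b^2)$; you are also a bit more scrupulous than the paper about the union bound and about the $O(c)$ slack in the constant factors, both of which the paper silently absorbs.
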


Observe, if \cref{lem:relating-opt-to-dist-on-typical} held for all the pairs, we would immediately be able to translate between quantiles of pseudo-expected distances and quantiles of distances corresponding to an optimal embedding. However, \cref{lem:relating-opt-to-dist-on-typical} holds only for a $1-\OPT$-fraction of pairs. Therefore, there could be a subset of $\sqrt{\OPT} n$ points such that for each one, their $\sqrt{\OPT}$-th quantiles do not shrink at all. Now, if either endpoint of a pair (i,j)  lands in this subset, we cannot obtain a non-trivial bound on the corresponding quantiles. There can be at most $\sqrt{\OPT} n^2$ such pairs, and we can bound the cost on this subset by invoking \cref{lem:sdp-val-small-subset} again. In the following lemma, we show that this is essentially the only thing that can go wrong and defer the proof to \cref{sec:translating}. 


\begin{lemma}[Translating between quantiles]
\label{lem:translating-bet-quantiles}
Let $\mu$ be a pseudo-distribution minimizing the objective in Algorithm~\ref{algo:efficient-algo} and let $\Set{x_i^* }_{i \in [n]}$ be an optimal embedding.
For all but $(\sqrt{\OPT} +\eps)$-fraction of indices $i$, 
\begin{equation*}
     \calQ\Paren{ \Set{ \pexpecf{\mu}{ \norm{x_i - x_j }^2 }  }_{j \in [n]}, \eps  }      \leq 10  \cdot \mathcal{Q}\Paren{ \Set{  \norm{x_i^* - x_k^* }^2  }_{k \in [n]} , 16\sqrt{\OPT} +\eps  }  .
\end{equation*}
\end{lemma}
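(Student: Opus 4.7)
The plan is to apply \cref{lem:relating-opt-to-dist-on-typical} to identify a small set $B \subseteq \binom{n}{2}$ of ``bad'' pairs on which the two-sided sandwich among $\pexpecf{\mu}{\|x_i - x_j\|^2}$, $d_{i,j}^2$, and $\|x_i^* - x_j^*\|^2$ fails. On every pair outside $B$, the first and third quantities agree up to a constant factor, so in particular $\pexpecf{\mu}{\|x_i - x_j\|^2} \le 10 \|x_i^* - x_j^*\|^2$. I will then show via a Markov/averaging step that most indices $i$ are incident to few bad pairs, and for each such ``good'' index I will read off the desired quantile inequality by restricting attention to the good pairs incident to $i$.

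\paragraph{Good pairs and a Markov step.} Invoking \cref{lem:relating-opt-to-dist-on-typical} with a small constant $c = 1/16$ shows that, outside a set $B$ with $|B| \le 16\OPT \cdot \binom{n}{2}$, both $d_{i,j}^2 \le (1 + 2\sqrt{c})\|x_i^* - x_j^*\|^2$ and $\pexpecf{\mu}{\|x_i - x_j\|^2} \le (2 + 2c)\, d_{i,j}^2$ hold; chaining the two yields $\pexpecf{\mu}{\|x_i - x_j\|^2} \le (17/8)(3/2)\|x_i^* - x_j^*\|^2 \le 10 \|x_i^* - x_j^*\|^2$ for all $(i,j) \notin B$. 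Setting $B_i := \{k : (i,k) \in B\}$, we get $\sum_i |B_i| = 2|B| \le 16\OPT \cdot n^2$, so the average of $|B_i|$ over $i$ is at most $16\OPT \cdot n$. Markov's inequality then bounds the fraction of indices $i$ with $|B_i| > 16\sqrt{\OPT}\cdot n$ by $\sqrt{\OPT}$, which is comfortably absorbed into the $(\sqrt{\OPT} + \eps)$ slack in the conclusion.

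\paragraph{Transferring the quantile to good indices.} Fix a good index $i$ (i.e.\ one with $|B_i| \le 16\sqrt{\OPT}\cdot n$) and write $Q_i^* := \calQ(\{\|x_i^* - x_k^*\|^2\}_{k \in [n]}, \, 16\sqrt{\OPT} + \eps)$ together with $S_i^* := \{k \in [n] : \|x_i^* - x_k^*\|^2 \le Q_i^*\}$, so that $|S_i^*| \ge (16\sqrt{\OPT} + \eps)\, n$ by the definition of the quantile. Since $|B_i| \le 16\sqrt{\OPT}\cdot n$, the set $S_i^* \setminus B_i$ has size at least $\eps n$, and for every $k$ in this set the good-pair bound of the previous paragraph gives $\pexpecf{\mu}{\|x_i - x_k\|^2} \le 10\|x_i^* - x_k^*\|^2 \le 10\, Q_i^*$. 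Hence at least $\eps n$ of the values $\{\pexpecf{\mu}{\|x_i - x_k\|^2}\}_{k \in [n]}$ lie below $10\, Q_i^*$, which is precisely the statement $\calQ(\{\pexpecf{\mu}{\|x_i - x_k\|^2}\}_k, \eps) \le 10\, Q_i^*$ claimed in the lemma.

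\paragraph{Main obstacle.} The delicate part is lining up the constants so the three ingredients mesh: the $c$ chosen in the good-pair step must simultaneously yield a ratio constant of at most $10$ on good pairs \emph{and} a bad-pair density small enough that Markov delivers only a $\sqrt{\OPT}$-fraction of bad indices; and the Markov threshold $16\sqrt{\OPT}\cdot n$ must exactly match the $16\sqrt{\OPT}$ offset baked into the right-hand quantile, so that the additional $\eps$ of slack on the right translates to exactly the $\eps n$ good neighbors needed to certify the $\eps$-quantile on the left. The choice $c = 1/16$ makes this balance transparent, but the scheme would equally work for any sufficiently small constant $c$.
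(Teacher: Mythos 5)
Your proof is correct and takes essentially the same approach as the paper's: both invoke Lemma~\ref{lem:relating-opt-to-dist-on-typical} to bound the total number of ``bad'' pairs, average over indices and apply Markov to find that most $i$ have at most $\approx\!\sqrt{\OPT}\, n$ bad pairs incident to them, and then observe that excising those bad pairs from the $(16\sqrt{\OPT}+\eps)$-quantile set of $\{\|x_i^*-x_k^*\|^2\}_k$ still leaves $\eps n$ witnesses certifying the $\eps$-quantile bound on the pseudo-expected distances. Your version is in fact somewhat more explicit than the paper's (which works with an implicitly defined per-index slack $\eps_i$ rather than the concrete bad-pair set $B_i$); one small point worth flagging is that chaining the two sandwich bounds from Lemma~\ref{lem:relating-opt-to-dist-on-typical} requires taking a union over two failure events, giving $|B| \le 2\cdot(\OPT/c)\binom{n}{2}$ rather than $(\OPT/c)\binom{n}{2}$, but this only shifts the final Markov bound by a constant factor and does not affect the argument.
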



\subsection{Proof of Main Theorem}
We are now ready to complete the proof of Theorem~\ref{thm:efficient-algo-mds}.

\begin{proof}[Proof of Theorem~\ref{thm:efficient-algo-mds}]

Our goal is to bound the objective value of the rounded solution: 
\begin{equation}
\label{eqn:objective-cost}
    \expecf{\tilde{x}_\calT }{ \expecf{(i,j) \in \binom{n}{2}  }{ \Paren{1 - \frac{ \Norm{ \pexpecf{\mu_\calT}{ x_i }  - \pexpecf{\mu_\calT}{ x_j } }}{ d_{i,j} } }^2 } } 
\end{equation}
in terms of $\OPT$, the cost of an optimal embedding.
Let $\Set{x_i^*}_{i\in [n]}$ be any embedding that obtains an objective value of $\OPT$.

Call a pair $(i,j)$  ``good'' (with respect to $\calT$) if it satisfies the following properties, where $\eps' = 16\sqrt{\OPT} + \eps$:
\begin{align}
   & \textbf{$d_{ij}$ is close to $\|x_i^* - x_j^*\|$:} \nonumber \\
   & \qquad \Paren{1-0.1}\norm{x_i^* - x_j^*}_2 \leq  d_{i,j} \leq \Paren{1+0.1} \norm{x_i^* - x_j^*}_2  \label{eqn:sdp-dist-vs-opt-dist} \\
   & \textbf{pseudo-deviations bounded by quantiles:} \nonumber \\
   & \qquad \expecf{\tilde{x}_{\calT} }{\pdev{\mu_{\calT} }{ x_i }^2}   \leq 10  \cdot \mathcal{Q}\Paren{ \Set{  \norm{x_i^* - x_k^* }^2  }_{k \in [n]} , \eps' }  \text{ and the same for $j$} \, ,
    \label{eqn:proof-translate} \\
    & \textbf{$x^*$ quantiles are small:} \nonumber \\
  & \qquad \frac{ \mathcal{Q}\Paren{ \Set{  \norm{x_i^* - x_k^* }^2  }_{k \in [n]} , \eps'  } }{\norm{x_i^* - x_j^*}^2}  + \frac{ \mathcal{Q}\Paren{ \Set{  \norm{x_j^* - x_k^* }^2  }_{k \in [n]} , \eps'  } }{\norm{x_i^* - x_j^*}^2} \leq 1.  
  \label{eqn:quantiles-small}
\end{align}
If  $(i,j)$ is not good, then we say it is ``bad''.

We put together the lemmas from the previous sections to prove the following claim at the end of this section:
\begin{claim}
\label{claim:few-bad-pairs}
    With probability at least $0.99$ over $\calT$, if $|\calT| \gg \log(1/\eps)/\eps$, there are at most $\bigO{\sqrt{\OPT} + \eps} \cdot \log \Delta \cdot \binom{n}{2}$ bad pairs.
\end{claim}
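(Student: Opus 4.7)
My plan to bound the number of bad pairs is to separately control the pairs violating each of the three conditions \eqref{eqn:sdp-dist-vs-opt-dist}, \eqref{eqn:proof-translate}, \eqref{eqn:quantiles-small} and then combine them by a union bound. Two of the three conditions are deterministic (depending only on the input $\{d_{i,j}\}$ and an optimal embedding $\{x_i^*\}$); only \eqref{eqn:proof-translate} involves the random set $\calT$, so only that condition needs a probabilistic argument.

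For \eqref{eqn:sdp-dist-vs-opt-dist}, I would invoke \cref{lem:relating-opt-to-dist-on-typical} with an absolute constant $c$ chosen so that $2\sqrt{c}\leq 0.1$. The lemma then yields at most $\bigO{\OPT}\binom{n}{2}$ pairs on which the ratio $d_{i,j}/\|x_i^*-x_j^*\|$ deviates from $1$ by more than $0.1$; this is absorbed into the $\bigO{\sqrt{\OPT}}\binom{n}{2}$ budget. For \eqref{eqn:quantiles-small}, I would first observe that $\calQ(\{a_k^2\}, q) = \calQ(\{a_k\}, q)^2$ for nonnegative $a_k$ and that $a^2 + b^2 \leq (a+b)^2$, so it suffices to bound the sum of the (unsquared) quantiles of $\{\|x_i^*-x_k^*\|\}_k$ and $\{\|x_j^*-x_k^*\|\}_k$ by $\|x_i^*-x_j^*\|$. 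Applying \cref{lem:quantiles-with-subsets} with $\eta = \eps' = 16\sqrt{\OPT}+\eps$ and $\delta = C\eta$, where $C$ is an absolute constant (depending only on $k$) chosen so that the factor $(\eta/\delta)^{1/k}$ in the lemma is at most $1/2$, leaves at most $\bigO{\delta\log\Delta}\binom{n}{2} = \bigO{(\sqrt{\OPT}+\eps)\log\Delta}\binom{n}{2}$ pairs violating \eqref{eqn:quantiles-small}.

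For the randomized condition \eqref{eqn:proof-translate}, I would chain \cref{lem:var-redux-quantiles} (controlling the pseudo-deviation by a quantile of pseudo-expected squared distances) with \cref{lem:translating-bet-quantiles} (relating that to a quantile of the optimal-embedding distances). Apply \cref{lem:var-redux-quantiles} with $\delta = \eps/100$, which forces $|\calT| = \bigO{\log(1/\eps)/\eps}$, matching the hypothesis of the claim. Since the lemma gives, for each fixed $i$, a probability at most $\eps/100$ over $\calT$ of failing the quantile bound, the expected number of failing indices is at most $(\eps/100)n$. By Markov's inequality, with probability at least $99/100$ over $\calT$, at most $\eps n$ indices fail the bound of \cref{lem:var-redux-quantiles}. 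Combined with \cref{lem:translating-bet-quantiles}, which deterministically holds for all but $(\sqrt{\OPT}+\eps)n$ indices, at most $\bigO{\sqrt{\OPT}+\eps}\,n$ indices fail condition \eqref{eqn:proof-translate}; each such index appears as an endpoint in at most $n$ pairs, giving at most $\bigO{\sqrt{\OPT}+\eps}\binom{n}{2}$ pairs. A union bound over the three contributions yields the claimed $\bigO{(\sqrt{\OPT}+\eps)\log\Delta}\binom{n}{2}$ bad pairs with probability at least $99/100$.

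The main subtlety is exactly this probabilistic step: \cref{lem:var-redux-quantiles} provides only a per-index guarantee, and a naive union bound over all $n$ indices would require $\delta \lesssim 1/n$ and hence $|\calT| \gtrsim \log(n)/\eps$, degrading the Sherali-Adams level count. The expectation-plus-Markov trick is what lets us tolerate an $\bigO{\eps}$ fraction of failing indices while keeping $|\calT| = \bigO{\log(1/\eps)/\eps}$, matching the threshold stated in the claim; getting this accounting exactly right, and ensuring the same $\calT$ works simultaneously for the downstream use in \cref{fact:rounding-deviations}, is the step that needs the most care.
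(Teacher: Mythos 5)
Your proof is correct and follows the same route as the paper's: \cref{lem:relating-opt-to-dist-on-typical} for \eqref{eqn:sdp-dist-vs-opt-dist}, chaining \cref{lem:var-redux-quantiles} with \cref{lem:translating-bet-quantiles} for \eqref{eqn:proof-translate}, and \cref{lem:quantiles-with-subsets} with $\delta = \Theta(\sqrt{\OPT}+\eps)$ for \eqref{eqn:quantiles-small}. The paper's own proof of this claim is terse (it simply cites the three lemmas), whereas you make explicit the two nontrivial accounting steps it glosses over: (i) converting the unsquared-quantile bound of \cref{lem:quantiles-with-subsets} to the squared form in \eqref{eqn:quantiles-small} via $\calQ(\{a_k^2\},q) = \calQ(\{a_k\},q)^2$ and $a^2+b^2\le(a+b)^2$, and (ii) turning the per-index $1-\delta$ guarantee of \cref{lem:var-redux-quantiles} into a uniform statement over a $99/100$-fraction of $\calT$'s via expectation plus Markov, correctly noting that this is precisely what keeps $|\calT| = \bigO{\log(1/\eps)/\eps}$ rather than the $\bigO{\log(n)/\eps}$ a naive union bound would demand. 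Both refinements are sound and are indeed what the paper's one-line citations are implicitly relying on.
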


From now on, condition $\calT$ on the event that Claim~\ref{claim:few-bad-pairs} applies.
Using \cref{lem:sdp-val-small-subset} on the bad pairs, their contribution to the rounded cost, in expectation over $\tilde{x}_\calT$, is at most
\[
\E_{\tilde{x}_\calT} \E_{i,j \sim [n]} 1( (i,j) \text{ bad}) \cdot \Paren{1 - \frac{\| \pexpecf{\mu_\calT}{x_i - x_j} \|  }{d_{ij}}}^2 \leq \bigO{\sqrt{\OPT} + \eps} \cdot \log \Delta  \, .
\]
So, we can turn to the good pairs.

Observe, for any pair $(i,j)$, and any $\tilde{x}_\calT$, we have by Fact~\ref{fact:rounding-deviations},
\begin{equation}
\label{eqn:expanding-single-term}
    \Paren{1 - \frac{ \Norm{ \pexpecf{\mu_\calT}{ x_i }  - \pexpecf{\mu_\calT}{ x_j } }}{ d_{i,j} } }^2 \leq 2 \pexpecf{\mu_\calT }{ \Paren{1 - \frac{  \norm{x_i - x_j} }  {d_{i,j}} }^2 }   + 2\Paren{ \frac{  \pdev{\mu_\calT}{x_i} +  \pdev{\mu_\calT}{x_j}  }{ d_{i,j} } }^2
\end{equation}
The first term on the right-hand side is the pair's contribution to the LP objective.
The second term we bound for good $(i,j)$, on average over $\tilde{x}_\calT$:
\begin{align*}
& \E_{\tilde{x}_\calT} \Paren{ \frac{  \pdev{\mu_\calT}{x_i} +  \pdev{\mu_\calT}{x_j}  }{ d_{i,j} } }^2\\
& \quad \leq \bigO{1} \cdot \frac {\mathcal{Q}\Paren{ \Set{  \norm{x_i^* - x_k^* }^2  }_{k \in [n]} , 16\sqrt{\OPT} +\eps  } + \mathcal{Q}\Paren{ \Set{  \norm{x_j^* - x_k^* }^2  }_{k \in [n]} , 16\sqrt{\OPT} +\eps  } } {\|x_i^* - x_j^*\|^2}
\end{align*}

At the end of this section, we use Lemma~\ref{lem:quantiles-with-subsets} to prove the following Claim:
\begin{claim}
\label{claim:avg-good-quantile-shrink}
    For $k > 2$,
    \begin{align*}
    & \E_{i,j \sim [n]} \left [ 1((i,j) \text{ good}) \cdot \frac {\mathcal{Q}\Paren{ \Set{  \norm{x_i^* - x_k^* }^2  }_{k \in [n]} , 16\sqrt{\OPT} +\eps  } + \mathcal{Q}\Paren{ \Set{  \norm{x_j^* - x_k^* }^2  }_{k \in [n]} , 16\sqrt{\OPT} +\eps  } } {\|x_i^* - x_j^*\|^2} \right ]\\
    & \leq \bigO{ \Paren{ \OPT^{1/k} +  \eps^{2/k}  }\cdot \log\Paren{\Delta} } \, .
    \end{align*}
    and, for $k = 1,2$, the same inequality holds with the right-hand side replaced with $\bigO{\Paren{\sqrt{\OPT} + \eps} \cdot \log \Delta}$ and $\bigO{\Paren{\sqrt{\OPT} + \eps} \cdot \log(1/\eps) \cdot \log \Delta}$, respectively.
\end{claim}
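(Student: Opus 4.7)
The plan is to express the expectation as an integral of tail probabilities and apply \cref{lem:quantiles-with-subsets} with a scale-dependent choice of its parameter $\delta$. Write $Q_i := \mathcal{Q}(\{\|x_i^* - x_k^*\|^2\}_{k \in [n]}, \eps')$ with $\eps' = 16\sqrt{\OPT} + \eps$, and let $R_{ij} := (Q_i + Q_j)/\|x_i^* - x_j^*\|^2$. Goodness condition \eqref{eqn:quantiles-small} already forces $R_{ij} \leq 1$ on every good pair, so the expectation in the claim is at most
\[
\int_0^1 \Pr_{i,j}\bigl[\, R_{ij} > t \text{ and } (i,j)\text{ is good}\,\bigr]\, dt,
\]
and it suffices to estimate this integral in each of the three regimes $k=1$, $k=2$, $k>2$.

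The key step is a parametric tail bound. Since $u \mapsto u^2$ is monotone, the conclusion of \cref{lem:quantiles-with-subsets}, stated for linear quantiles of linear distances, squares to give $Q_i + Q_j \leq O((\eps'/\delta)^{2/k}) \|x_i^* - x_j^*\|^2$ on all but $O(\delta \log \Delta \cdot n^2)$ pairs, for each $\delta \in (0,1]$. (The hypothesis $\|x_i^* - x_j^*\| \geq 1$ is harmless on good pairs, since \eqref{eqn:sdp-dist-vs-opt-dist} combined with $d_{ij} \geq 1$ yields $\|x_i^* - x_j^*\| \geq 1/1.1$, costing only a constant.) Parametrising $\delta := \eps'/t^{k/2}$ then gives
\[
\Pr_{i,j}\bigl[R_{ij} > C t\bigr] \leq O(\eps' \log \Delta)\cdot t^{-k/2}
\]
for an absolute constant $C$, valid whenever $t \geq t^\star := (C' \eps' \log \Delta)^{2/k}$; below $t^\star$ we use only the trivial bound $\Pr \leq 1$.

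The remainder is a direct integration, split at $t^\star$:
\[
\E_{i,j}\bigl[\mathbf{1}((i,j)\text{ good})\cdot R_{ij}\bigr] \;\leq\; O(t^\star) \;+\; O(\eps' \log \Delta) \int_{t^\star}^1 t^{-k/2}\, dt.
\]
For $k=1$ the integral equals $O(1)$, producing $O(\eps' \log \Delta) = O((\sqrt{\OPT}+\eps)\log \Delta)$. For $k=2$ it equals $O(\log(1/t^\star)) = O(\log(1/\eps))$ (using $\eps' \geq \eps$), producing $O((\sqrt{\OPT}+\eps)\log \Delta \cdot \log(1/\eps))$. For $k>2$ it equals $O((t^\star)^{1-k/2}/(k/2-1))$, so the product with $\eps'\log\Delta$ collapses to $O((\eps'\log\Delta)^{2/k})$, which is in turn at most $O((\OPT^{1/k}+\eps^{2/k}) \log \Delta)$ via subadditivity of $x \mapsto x^{2/k}$ for $k \geq 2$ and $(\log\Delta)^{2/k} \leq \log \Delta$.

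The main obstacle I anticipate is bookkeeping: one must choose $\delta$ as a function of the tail threshold $t$ and simultaneously argue that the ``bad'' pairs from \cref{lem:quantiles-with-subsets} and the non-good pairs can be folded together into a single tail event without spurious union-bound losses. This is in the end routine — both events are driven by the same random pair $(i,j)$ — but it is the one place where carelessness would cost an extra factor of $\log \Delta$ or $1/\eps$ and break the exponent matching to the claimed bound.
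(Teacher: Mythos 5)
Your proposal is correct and follows essentially the same route as the paper: restrict to good pairs so the ratio is bounded by $1$, write the expectation as an integral of tail probabilities, apply Lemma~\ref{lem:quantiles-with-subsets} with a scale-dependent $\delta = \eps'/t^{k/2}$ to get the $t^{-k/2}$ tail, and integrate, splitting into the three regimes $k=1,2,>2$. The only cosmetic difference is the choice of split point — you cut at $t^\star = (C'\eps'\log\Delta)^{2/k}$ where the tail bound saturates, while the paper cuts at $(\eps')^{2/k}$; both give the claimed bound, and your intermediate estimate $O\bigl((\eps'\log\Delta)^{2/k}\bigr)$ for $k>2$ is in fact slightly tighter before being relaxed to $O\bigl((\OPT^{1/k}+\eps^{2/k})\log\Delta\bigr)$.
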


All together, for $k > 2$ we obtain for the good pairs
\[
\E_{\tilde{x}_\calT} \E_{i,j \sim [n]} 1((i,j) \text{ good}) \cdot \Paren{ 1 - \frac{ \|\pexpecf{\mu_\calT}{x_i - x_j} \| }{d_{ij}}}^2 \leq 2 \OPT + \bigO{\Paren{\OPT^{1/k} + \eps^{2/k}} \cdot \log \Delta} \, .
\]
and the analogous claim for $k=1,2$.
To obtain the theorem statement when $k > 2$, we rescale $\eps \leftarrow (\eps / \log(\Delta))^{k/2}$ (and analogously for $k=1,2$).
This gives us an algorithm using an LP of size $(n \Delta)^{\mathcal{O}(k^2 \log ( \log \Delta / \eps) \cdot (\log \Delta)^{k/2} / \eps^{k/2})}$.
We discuss in Section~\ref{sec:smaller-LP} how to prune this LP while maintaining the same guarantees.
\end{proof}

\paragraph{Proofs of Claims}
We turn to the proofs of Claims~\ref{claim:few-bad-pairs} and~\ref{claim:avg-good-quantile-shrink}.

\begin{proof}[Proof of Claim~\ref{claim:few-bad-pairs}]
  All but $\bigO{\OPT} n^2$ pairs $(i,j)$ satisfy \cref{eqn:sdp-dist-vs-opt-dist}, by Lemma~\ref{lem:relating-opt-to-dist-on-typical}.
  Combining Lemma~\ref{lem:var-redux-quantiles} and Lemma~\ref{lem:translating-bet-quantiles} shows that \cref{eqn:proof-translate} holds for all but $\bigO{\sqrt{\OPT} + \eps} n^2$ pairs, with probability at least $0.99$ over $\calT$, as long as $|\calT| \gg \log(1/\eps) / \eps$.
  And, applying Lemma~\ref{lem:quantiles-with-subsets} with $\delta = \Theta( \sqrt{\OPT} + \eps)$, we get that \cref{eqn:quantiles-small} holds for all but $\bigO{\sqrt{\OPT} + \eps} \cdot \log \Delta \cdot n^2$ pairs $i,j$.
\end{proof}

\begin{proof}[Proof of Claim~\ref{claim:avg-good-quantile-shrink}]
  Let $\calI$ be the set of pairs such \cref{eqn:quantiles-small} holds.
  It will be enough to prove the claim with $1((i,j) \text{ good})$ replaced with $1((i,j) \in \calI)$.
  For any $\delta>0$, applying \cref{lem:quantiles-with-subsets} with $\delta$ and $\eta= \eps'= 16\sqrt{\OPT}+\eps$ yields
  
\begin{equation}
\label{eqn:quantile-bound-main-proof}
\begin{split}
    &\probf{i,j \sim [n]}{  \frac{\mathcal{Q}\Paren{ \Set{  \norm{x_i^* - x_k^* }^2  }_{k \in [n]} , \eps'  } + \mathcal{Q}\Paren{ \Set{  \norm{x_j^* - x_k^* }^2  }_{k \in [n]} , \eps'  }  }{ \norm{x_i^* - x_j^*}^2 } > \Paren{ \frac{c(\sqrt{\OPT}+\eps)}{\delta} }^{2/k}  } \\
    & \leq \bigO{\delta \log(\Delta)}.
\end{split}
\end{equation}
for some fixed constant $c$.
First, suppose $k >2$.
For a non-negative random variable $x$ such that $x\leq 1$ with probability $1$,  $\expecf{}{x} = \int_{0}^{1} \prob{ x > t } dt$, so using \cref{eqn:quantile-bound-main-proof},

\begin{equation}
\label{eqn:bound-on-good-pairs-at-all-scales}
\begin{split}
& \expecf{ i,j \sim [n] }{ 1((i,j) \in \calI) \frac{\mathcal{Q}\Paren{ \Set{  \norm{x_i^* - x_k^* }^2  }_{k \in [n] } , \eps'  } + \mathcal{Q}\Paren{ \Set{  \norm{x_j^* - x_k^* }^2  }_{k \in [n] } , \eps'  }  }{ \norm{x_i^* - x_j^*}^2 }  } \\
& = \bigints_{0}^{1} \probf{ i,j \sim [n] }{ \frac{\mathcal{Q}\Paren{ \Set{  \norm{x_i^* - x_k^* }^2  }_{k \in [n] } , \eps'  } + \mathcal{Q}\Paren{ \Set{  \norm{x_j^* - x_k^* }^2  }_{k \in [n] } , \eps'  }  }{ \norm{x_i^* - x_j^*}^2 } > t } dt\\
& \leq   \bigO{\eps' \cdot \log(\Delta)} \cdot \int_{(\eps')^{2/k}}^1 \frac{dt}{t^{k/2}}  + \int_{0}^{ (\eps')^{2/k}} 1 \cdot dt
\leq  \bigO{ \Paren{ \OPT^{1/k} +  \eps^{2/k}  }\cdot \log\Paren{\Delta} }. 
\end{split}
\end{equation}
Turning to the case that $k =1$, we can use the same calculation as \cref{eqn:bound-on-good-pairs-at-all-scales}, but now when we integrate $\int_{0}^1 dt / t^{1/2}$ we get $\bigO{1}$.
When $k=2$, the integral $\int_{\eps}^{1} dt/t $ gives $\bigO{\log 1/\eps}$.
\end{proof}

\section{Proofs of Distance-Quantile Lemmas}
\label{sec:distance-quantile-proofs}

In this section, we prove Lemma~\ref{lem:quantiles-with-subsets} through Lemma~\ref{lem:translating-bet-quantiles}, beginning with the simplest, Lemma~\ref{lem:relating-opt-to-dist-on-typical}.

\begin{proof}[Proof of Lemma~\ref{lem:relating-opt-to-dist-on-typical}]
By definition, we have
\begin{equation*}
\frac{1}{n^2} \sum_{i , j \in [n]} \Paren{ 1 - \frac{ \norm{x^*_i - x^*_j} }{d_{i,j} } }^2  = \OPT 
\end{equation*}
Then, by Markov's inequality, for any constant $c\ge 1$,
\begin{equation}
\label{eqn:avg-dist-ij}
\probf{i,j\sim [n]}{\Paren{ 1 - \frac{ \norm{x_i^* - x_j^*} }{d_{i,j} } }^2   \geq  c } \leq  \frac{\OPT}{c}
\end{equation}
Therefore, with probability at least $1-(\OPT/c)$, for a random pair $i,j\sim [n]$, 
\begin{equation}
\label{eqn:first-markov-bound}
    \Paren{ 1-\sqrt{c} } d_{i,j}  \leq  \norm{x_i^* -x_j^* } \leq \Paren{ 1+\sqrt{c} }  d_{i,j},
\end{equation}
and squaring both sides yields the first claim. 
Further, \cref{eqn:first-markov-bound} implies
\begin{equation*}
    \Paren{ 1-2\sqrt{c} } \norm{x_i^* -x_j^* }  \leq d_{i,j} \leq \Paren{ 1+2\sqrt{c} }  \norm{x_i^* -x_j^* }.
\end{equation*}
Next, recall the pseudo-distribution $\mu$ satisfies 
\begin{equation*}
     \pexpecf{\mu}{ \frac{1}{n^2} \sum_{i,j \in [n]} \Paren{ 1 - \frac{\norm{x_i - x_j }}{d_{i,j} } }^2  } \leq \OPT, 
\end{equation*}
Again by Markov's, with probability at least $1-\OPT/c$, for a random pair $i, j \sim [n]$, 
\begin{equation}
    \label{eqn:bound-constant-term}
    \pexpecf{\mu}{ \Paren{1- \frac{\norm{x_i - x_j} }{d_{i,j}}}^2 } \leq c
\end{equation}
Next, observe, for any $(i,j)$ satisfying \cref{eqn:bound-constant-term}, 
\begin{equation}
\begin{split}
    \pexpecf{\mu}{ \Paren{\frac{\norm{x_i-x_j} }{ d_{i,j} } }^2 } \leq 2 \Paren{ \pexpecf{\mu}{1} + \pexpecf{\mu}{ \Paren{1 - \frac{\norm{x_i-x_j} }{ d_{i,j} }  }^2 } } \leq 2+2c.
\end{split}
\end{equation}

Further, it follows from Jensen's inequality (going to local distributions and back) that 
\begin{equation*}
    \frac{1}{n^2} \sum_{i,j \in [n]} \Paren{ 1 - \frac{\pexpecf{\mu}{ \norm{x_i - x_j }} }{d_{i,j} } }^2   \leq \pexpecf{\mu}{ \frac{1}{n^2} \sum_{i,j \in [n]} \Paren{ 1 - \frac{\norm{x_i - x_j }}{d_{i,j} } }^2  } \leq \OPT ,
\end{equation*}
and thus with probability at least $1-\OPT/c$, for a random pair $i, j \sim [n]$, 
\begin{equation*}
     \Paren{ 1-\sqrt{c} } d_{i,j} \leq   \pexpecf{\mu}{ \norm{x_i  - x_j } } \leq  \Paren{ 1+\sqrt{c} }  d_{i,j} .
\end{equation*}
Further, this implies
\begin{equation*}
     \Paren{ 1-2\sqrt{c} }  \norm{x_i^* -x_j^* } \leq \pexpecf{\mu}{ \norm{x_i  - x_j } }  \leq  \Paren{ 1+2\sqrt{c} }  \norm{x_i^* -x_j^* } ,
\end{equation*}
which yields the second claim. We can complete the proof by observing that 
\begin{equation*}
    \begin{split}
        \pexpecf{\mu}{ \Paren{\frac{\norm{x_i-x_j} }{ d_{i,j} } }^2 } \geq \Paren{  \frac{\pexpecf{\mu}{\norm{ x_i - x_j} } }{d_{i,j}} }^2 \geq \Paren{\frac{1}{1+\sqrt{c}} }^2 \geq 1-2\sqrt{c}.
    \end{split}
\end{equation*}

\end{proof}


\subsection{Quantiles of pair-wise distances on the line -- proof of Lemma~\ref{lem:quantiles-with-subsets}}
\label{sec:quantiles-of-pairwise-distances}

In this section, we show that quantiles of distances between points in $\R^k$ are well-behaved. 


\begin{proof}[Proof of Lemma~\ref{lem:quantiles-with-subsets}]

We begin by partitioning the pair-wise distances that are at least $1$ into geometrically increasing level sets in the range $[1,2\Delta]$ as follows: let $\Set{ \calP_0, \calP_1, \ldots \calP_{\log(\Delta)} }$ be a partition of $\calS_1 \times \calS_2$  such that for each $T \in [-1,\log(\Delta)]$,
\begin{equation}
    \calP_T = \Set{ (i,j)  \hspace{0.1in} \big\vert \hspace{0.1in}   \norm{x_i^* - x_j^* } \in (2^T , 2^{T+1}], i , j \in [n]  }. 
\end{equation}

Next, for each $T \in [0, \log(\Delta)]$, we
fix a $(2^T/2)$-cover, denoted by $\calN_{k, \Delta, 2^T}$. Thus, $\min_{x,y \in \calN_{k, \Delta, 2^T}} \|x-y\| > 2^T/2$, and for any $x \in [-\Delta, \Delta]^{\otimes k}$  $\min_{z \in \calN_{k, \Delta, 2^T}} \|x-z\| \leq 2^T/2$. We now partition the space $[-\Delta, \Delta]^{\otimes k}$ by mapping each $x \in [-\Delta, \Delta]^{\otimes k}$ to its nearest $z \in \calN_{k, \Delta, 2^T}$, breaking ties arbitrarily. Let $\calG_T$ denote this partition, and note that each cluster $S \in \calG_T$ has diameter at most $2^T$. It follows that all distances at scale $\calP_T$ arise from points $x_i^*, x_j^*$ that lie in distinct clusters of $\calG_T$.  
Similarly, we define a finer partition of $\calG_T$, constructed via a smaller $\eps$-cover with $ \eps = \eta \cdot 2^T / \delta$, which we denote by $\calF_{\eta, T}$. For each $\calB \in \calF_{\eta , T} \cup \calG_T$ let $\calC(\calB)$ denote the number of points such that $i \in [n]$ and $x_i^*\in \calB$.
If for any cluster $\calB \in  \calF_{\eta , T}$, i.e. the finer partition, $\calC(\calB) \leq \bigO{ \eta^{k}  n }$, we discard all pairs $(i,j) \in \calP_T$ that involve any $x_i^*$ such that $x_i^* \in \calB$ or $x_j^* \in \calB$.

Observe, for the remaining pairs  $(i,j) \in \calP_T$, we know that $x_i^*$ and $x_j^*$ land in distinct finer clusters from $\calF_{\eta , T}$ that each have at least $c\eta^{k} n$ points (for a fixed constant $c$) $x^*_k$ such that $\norm{x_i^* - x_k^* } ,\norm{x_j^* - x_k^* } \leq \frac{2 \eta}{\delta} \norm{x_i^* - x_j^*}$. Therefore, 

\begin{equation}
    \calQ\Paren{ \Set{ \Norm{x_i^* - x_k^* } }_{k \in [n] } , c \eta^{k }   } + \calQ\Paren{ \Set{ \Norm{ x_j^* - x_k^* }}_{k \in [n] } ,  c \eta^{k}   } \leq \Paren{ \frac{ 4\eta }{\delta}  }\Norm{x_i^* - x_j^* },
\end{equation}
where the inequality follows from recalling that $(i,j) \in \calP_T$.

It remains to show that the above procedure discards at most $\bigO{ \delta^{k}  n^2 }$ pairs. Recall,  $\calG_T =\Set{ \calB_\ell }_{\ell \in [\abs{\calN_{k,\Delta, 2^T} } ]}$ is the set of clusters with diameter $2^T$ that cover $[-\Delta, \Delta]^{\otimes k}$. 
Further, note that for each $\calB_\ell \in \calG_{T}$, by the separation property of the cover and the doubling dimension of $\R^k$, we have that at most  $\Paren{c'\delta/\eta}^{k}$ clusters in $\calF_{\eta, T}$ intersect (and cover) $\calB_\ell$, for a fixed constant $c'$. 

For a fixed cluster $\calB_\ell \in \calG_T$, we consider the number of pairs $(i,j)$ dropped because either $x_i^*$ or $x_j^*$ lies in a sub-cluster $\calS \in \calF_{ \eta^k , T}$ intersecting $\calB_\ell$ with at most $c \eta^k n$ points; call such a cluster $\calS$ \textit{light}. Next, observe that any $x_i^*$ in a light sub-cluster, for every  distance $(x_i^*,x_j^*)$ in $\calP_T$ the point $x_j^*$ can be in one of at most $\bigO{ 2^{k} }$ clusters $\calB_{\ell} \in \calG_T$ since the ball of radius $2^{2T+1}$ around $x_i^*$ is covered by  $\bigO{ 2^{k} }$ clusters in $\calG_T$ of radius $2^{T}/2$ (by the separation property of the cover). Therefore, the total number of such pairs is at most $\sum_{z \in [ \mathcal{O}(2^{k}) ] } \calC\Paren{ \calB_{\ell_z} } $. Therefore, the total number of pairs dropped for a fixed cluster $\calB_\ell$ is at most  

\begin{equation*}
    \underbrace{ \bigO{ \frac{\delta}{\eta}  }^{k} }_{\textrm{number of light clusters}} \cdot \underbrace{ \bigO{ \eta^{  k }  n } }_{\textrm{ $x_i^*$ in light clusters }} \cdot \underbrace{  \sum_{z \in [ \mathcal{O}(2^{k}) ] } \calC\Paren{ \calB_{\ell_z} }  }_{  \textrm{number of $x_j^*$s in any neighboring cluster }}.  
\end{equation*}
Summing over all $\ell \in  \abs{ \calN_{k ,\Delta, \eps} }$, the total number of such pairs dropped can be bounded as follows:
\begin{equation*}
     \bigO{ \delta^{k } \cdot 2^{k } n^2 } \leq \bigO{ \delta^{k } n^2}
\end{equation*}
 Summing over the $\log(\Delta)$ levels, and substituting $\eta \gets \eta^{1/k} $ and $\delta\gets \delta^{ 1/k}$ concludes the proof.



\end{proof}

\subsection{Translating between pseudo-expected distances and optimal quantiles}
\label{sec:translating}

In this section, we show that we can bound the pseudo-deviation of each indeterminate in terms of the quantiles of the pair-wise distances in the optimal embedding corresponding to that indeterminate. However, in this translation, we lose a $\sqrt{\OPT}$ factor, since we can only relate the $\sqrt{\OPT}$-th quantiles of the pseudo-expected distances to the distances in an optimal embedding.

\begin{proof}[Proof of \cref{lem:translating-bet-quantiles}]

For each $i \in [n]$, let $\eps_i$ be such that 
\begin{equation*}
    \mathcal{Q}\Paren{ \Set{ \pexpecf{\mu }{ \norm{x_i - x_k }^2 } }_{k \in [n]} ,   \eps } \leq 10  \cdot \mathcal{Q}\Paren{ \Set{  \norm{x_i^* - x_k^* }^2  }_{k \in [n]} ,  \eps +  \eps_i  }  .
\end{equation*}
It follows from \cref{lem:relating-opt-to-dist-on-typical}, with $c=1/16$ that
\begin{equation}
\label{eqn:bound-on-sum-eps_i}
\frac{1}{n} \sum_{i \in [n]} { \eps_i  } \leq  16 \cdot \OPT,
\end{equation}
since each $\eps_i$ contributes at least $\eps_i \cdot n$ pairs such that $(i,k)$ does not satisfy 
\begin{equation*}
    \Paren{1-0.75}\norm{x_i^* - x_k^*}^2_2 \leq  \pexpecf{\mu}{ \norm{ x_i - x_k }_2^2 } \leq 2.5 \norm{x_i^* - x_k^*}_2^2,
\end{equation*}
and we know that there are at most $16 \OPT \cdot n^2$ such pairs. 
Thus applying Markov's to \eqref{eqn:bound-on-sum-eps_i},
\begin{equation}
\probf{i \sim [n]}{ \eps_i > 16\sqrt{\OPT} } \leq \sqrt{\OPT} .
\end{equation}
Therefore, for all but $\sqrt{\OPT}$-fraction of points, we have
\begin{equation}
\label{eqn:pseudo-quantile-to-opt-quantile}
    \mathcal{Q}\Paren{ \Set{ \pexpecf{\mu }{ \norm{x_i - x_k }^2 } }_{k \in [n]} ,  \eps } \leq 10  \cdot \mathcal{Q}\Paren{ \Set{  \norm{x_i^* - x_k^* }^2  }_{k \in [n]} , 16\sqrt{\OPT} +\eps  }  .
\end{equation}
Let $\calB_2$ be the set of points that do not satisfy \cref{eqn:pseudo-quantile-to-opt-quantile}. Observe, $\abs{ \calB_1 +\calB_2} \leq \bigO{\sqrt{\OPT} +\eps} n^2$, which concludes the proof.
\end{proof}

\section{Remaining Proofs: Variance Reduction and Discretization}
\label{sec:deferred-proofs-var-reduction-discretization}

First, we prove the simple Fact~\ref{fact:rounding-deviations}, then we turn to the remaining key lemmas: the discretization Lemma~\ref{lem:aspect-ratio} and the variance-reduction Lemma~\ref{lem:var-redux-quantiles}.

\begin{proof}[Proof of Fact~\ref{fact:rounding-deviations}]
We have
\begin{equation}
\label{eqn:expanding-single-term-part2}
\begin{split}
   \Paren{1 - \frac{ \Norm{ \E X  - \E Y }}{ d } }^2 
    & =  \Paren{1 - \frac{ \Norm{ \E X  - \E Y} + \E \|X - Y\| - \E \|X - Y\| }{ d } }^2 \\
    & \leq 2\Paren{1 - \frac{ \E \| X - Y \| } {d} }^2  +  2\Paren{ \frac{  \Norm{ \E X - \E Y } - \E \|X - Y\|   }{ d }    }^2 \\
    & \leq 2 \E \Paren{1 - \frac{ \|X - Y\|}  {d} }^2  +  2\Paren{ \frac{  \Norm{ \E X - \E Y } - \E \|X - Y\|   }{ d }    }^2\\
\end{split}
\end{equation}
where we used the inequality $(a+b)^2 \leq 2(a^2 + b^2)$ and Jensen's inequality.
For the last term above, we have
\begin{align*}
\| \E X - \E Y \| & = \E \| (\E X - \E Y) + (X - Y) - (X - Y) \|\\
& \leq \E \|X - \E X\| + \E \|Y - \E Y\| + \E \|X - Y\|
\end{align*}
and so therefore
\[
\|\E X - \E Y\| - \E \|X - Y \| \leq \E \|X - \E X\| + \E \|Y - \E Y\| \, .
\]
A similar argument shows
\[
\E \|X - Y\| - \|\E X - \E Y\| \leq \E \|X - \E X\| + \E \|Y - \E Y\|
\]
which completes the proof.
\end{proof}

\subsection{Variance Reduction via Conditioning}
\label{sec:varredux-proof}

In this section, we show that the pseudo-deviation of the variables $x_i$ is related to the $\eps$-th quantile of pseudo-expected distances. In particular, we provide a proof of Lemma~\ref{lem:var-redux-quantiles}.


We begin by showing that conditioning on a subset of points reduces the pseudo-deviation of any point to the distance between this point and the closest conditioned point. Formally,

\begin{lemma}[Pseudo-Deviation to Closest Conditioning]
\label{lem:pseudo-deviation-to-closest-conditioning}
Let $\calT \subset [n]$ be any arbitrary subset of size $\tau \in[n]$, and let $\mu$ be a degree-$2\tau$ pseudo-distribution over the vector-valued indeterminate $x \in \mathbb{R}^n$. Let $\tilde{x}_\calT \in \calN_{k,\Delta/\eps,1}^{\abs{\calT} }$ be a  draw from the local distribution $\Set{x_\calT}$ and $\mu_{\calT}$ be the pseudo-distribution obtained by conditioning on the event $\zeta_\calT : = \Set{ x_j = \tilde{x}_j }_{j \in \calT}$. Then, for all $i \in [n]$,  
\begin{equation*}
 \expecf{ \tilde{x}_\calT }{ \pdev{\mu_\calT}{x_i }^2 } \leq 2 \cdot \min_{ j \in \calT }  \pexpecf{\mu }{\Norm{ x_i - x_j }^2  }  .    
\end{equation*}  
\end{lemma}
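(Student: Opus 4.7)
The plan is to fix an arbitrary $i \in [n]$ and $j \in \calT$, prove the single-$j$ bound $\E_{\tilde{x}_\calT} \pdev{\mu_\calT}{x_i}^2 \leq \pexpecf{\mu}{\|x_i - x_j\|^2}$, and then take the minimum over $j \in \calT$ at the very end. The overall guarantee is a chain of three standard manipulations --- Jensen's inequality, the variational characterization of the mean, and the law of total expectation --- each of which is a routine fact for honest distributions; the only real work is checking that every step survives the transition to pseudo-distributions.

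First, apply Jensen's inequality to bound $\pdev{\mu_\calT}{x_i}^2 = (\pexpecf{\mu_\calT}{\|x_i - \pexpecf{\mu_\calT}{x_i}\|})^2 \leq \pexpecf{\mu_\calT}{\|x_i - \pexpecf{\mu_\calT}{x_i}\|^2}$, which is valid for pseudo-expectations because it rearranges to $\pexpecf{\mu_\calT}{(Y - \pexpecf{\mu_\calT}{Y})^2} \geq 0$ with $Y = \|x_i - \pexpecf{\mu_\calT}{x_i}\|$, and pseudo-distributions assign nonnegative pseudo-expectation to squares. Next, observe that for any fixed vector $v$, $\pexpecf{\mu_\calT}{\|x_i - v\|^2} = \pexpecf{\mu_\calT}{\|x_i\|^2} - 2\langle v, \pexpecf{\mu_\calT}{x_i}\rangle + \|v\|^2$ is a quadratic in $v$ uniquely minimized at $v = \pexpecf{\mu_\calT}{x_i}$; plugging in $v = \tilde{x}_j = \pexpecf{\mu_\calT}{x_j}$ (the two are equal because we conditioned on $x_j = \tilde{x}_j$) yields $\pexpecf{\mu_\calT}{\|x_i - \pexpecf{\mu_\calT}{x_i}\|^2} \leq \pexpecf{\mu_\calT}{\|x_i - \tilde{x}_j\|^2}$. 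Finally, averaging over $\tilde{x}_\calT$ and applying the law of total pseudo-expectation gives $\E_{\tilde{x}_\calT} \pexpecf{\mu_\calT}{\|x_i - \tilde{x}_j\|^2} = \pexpecf{\mu}{\|x_i - x_j\|^2}$.

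Chaining the three inequalities and minimizing over $j \in \calT$ yields the lemma (in fact with constant $1$ rather than $2$; the factor of $2$ in the statement is slack, presumably reserved for a slightly less tight route such as going through the triangle inequality). The main obstacle, such as it is, is the degree bookkeeping for the final law-of-total-expectation step: one needs the joint local distribution on $\calT \cup \{i,j\}$, whose support has size $\tau + 2$, and this is exactly what forces the lemma to require a degree-$2\tau$ pseudo-distribution rather than a degree-$\tau$ one. Once that is checked, the remaining pieces --- Jensen for squares and the variational identity for the squared-distance functional --- hold verbatim in the pseudo-distribution setting, so no additional machinery is needed.
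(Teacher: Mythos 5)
Your proof is correct, and it is essentially the argument the paper gives in the \emph{proof sketch} of Lemma~\ref{lem:var-redux-quantiles} (Jensen, then the variational characterization of the mean, then the law of total expectation). Interestingly, the paper's detailed proof of Lemma~\ref{lem:pseudo-deviation-to-closest-conditioning} itself takes a slightly different and less tight route: it rewrites $x_i - \pE_{\mu_\calT}[x_i]$ as $(x_i - x_j) + (x_j - \pE_{\mu_\calT}[x_i])$, applies the triangle inequality together with $(a+b)^2 \le 2a^2 + 2b^2$, and then bounds each term separately, which produces a constant of $4$ rather than the $2$ appearing in the lemma statement (the paper's final line reads $\leq 4 \pE_\mu\|x_i - x_j\|^2$). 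Your route avoids the lossy split entirely by passing directly from $\pE_{\mu_\calT}\|x_i - \pE_{\mu_\calT}[x_i]\|^2 \leq \pE_{\mu_\calT}\|x_i - \tilde{x}_j\|^2$, and so achieves constant $1$. Both arguments are elementary and valid over pseudo-distributions since they only touch local distributions on at most two variables at a time; your decomposition is cleaner and strictly better. One small nit: in your degree-bookkeeping remark you write that the joint local distribution on $\calT \cup \{i,j\}$ has support of size $\tau + 2$, but since $j \in \calT$ this is really $\tau + 1$; the conclusion that degree $2\tau$ suffices is unaffected.
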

\begin{proof}
Since $\mu$ has degree at least $2\tau$, any pseudo-distribution over $\tau$ variables can be treated as a true local distribution. Then, for any fixed $j \in \calT$,


\begin{equation}
\begin{split}
   \expecf{ \tilde{x}_\calT }{ \pdev{\mu_{\calT}  }{ x_i  }^2 } & = \expecf{ \tilde{x}_\calT  }{ \Paren{  \pexpecf{\mu }{ \Norm{ x_i - \pexpecf{\mu}{ x_i \vert \zeta_\calT } } 
 \Big\vert \zeta_\calT } }^2 }\\
 & =   \expecf{  \tilde{x}_\calT  }{ \Paren{ \pexpecf{\mu }{ \Norm{ x_i - x_{j} + \Paren{ x_j - \pexpecf{\mu}{ x_i \vert \zeta_\calT  } } }  \Big\vert \zeta_\tau } }^2 } \\
 &\leq 2 \expecf{ \tilde{x}_\calT  }{ \Paren{  \pexpecf{\mu }{ \Norm{ x_i - x_{j}   }  \Big\vert \zeta_\calT } }^2 }    + 2 \expecf{ \tilde{x}_\calT   }{ \Paren{ \pexpecf{\mu }{  \Norm{ x_j - \pexpecf{\mu}{ x_i \vert \zeta_\calT  } }  \Big\vert \zeta_\calT   } }^2  }   \\
 & \leq 2  \expecf{ \tilde{x}_\calT  }{    \pexpecf{\mu }{ \Norm{ x_i - x_{j}   }^2  \Big\vert \zeta_\calT }   }  + 2 \expecf{\tilde{x}_\calT }{ \Paren{ \pexpecf{\mu }{  \Norm{  \pexpecf{\mu}{ x_j - x_i \vert \zeta_\calT  } }  \Big\vert \zeta_\calT   } }^2 }\\
 & \leq 2  \pexpecf{\mu}{\norm{ x_i - x_j}^2 } + 2\expecf{ \tilde{x_\calT} }{ \Paren{ \pexpecf{\mu}{\norm{ x_j - x_i } \vert \zeta_\calT  }  }^2 } \\
 & \leq 4 \pexpecf{\mu}{\norm{ x_i - x_j}^2 }, 
\end{split}
\end{equation}
where the first inequality follows from triangle inequality, the second follows from Jensen's inequality, combined with the law of conditional expectation.
Repeating the above inequality for all $j \in \calT$, and taking the minimum yields the claim.
\end{proof}


We can now complete the proof of Lemma~\ref{lem:var-redux-quantiles}. 

\begin{proof}[Proof of Lemma~\ref{lem:var-redux-quantiles}]
Let $\zeta_{i, \calT}$ be the event that the set $\calT$ contains $j$ such that $$ \pexpecf{}{ \norm{ x_i  -x_j }^2 }  \leq \calQ\Paren{  \Set{ \pexpecf{}{ \norm{x_i - x_j }^2 }   }_{ j \in [n] } ,  \eps}.$$  Then,
\begin{equation}
\label{eqn:bounding-the-deviation}
\begin{split}
    \expecf{ \zeta_{\calT} }{    \pdev{\mu_\calT}{ x_i }^2 } 
    & =   \expecf{  \zeta_\calT }{  \pdev{\mu_\calT}{ x_i }^2 \Paren{ \indic{\zeta_{i, \calT} } + 1 - \indic{\zeta_{i, \calT} } }    }   \\
    & \leq 2  \Paren{  \min_{j \in \calT} \pexpecf{\mu}{\norm{x_i - x_j}^2}  }\cdot \indic{\zeta_{i, \calT} }         + \expecf{  \zeta_\calT }{ \Paren{  1 - \indic{\zeta_{i, \calT} }  } \pdev{\mu_\calT}{ x_i }  }     \\
    & \leq 2  \calQ\Paren{  \Set{ \pexpecf{\mu }{ \norm{x_i - x_j }^2 }   }_{ j \in [n] } ,  \eps }  \cdot \indic{\zeta_{i, \calT} }         +  \Paren{  1 - \indic{\zeta_{i, \calT} }  } \cdot \pdev{\mu_\calT}{ x_i }   
\end{split}
\end{equation}
where the first inequality uses Lemma~\ref{lem:pseudo-deviation-to-closest-conditioning} and the second uses that when $\indic{\zeta_{i,\calT}}=1$, $$\min_{j \in \calT} \expecf{\mu}{\norm{x_i - x_j}^2} \leq \calQ\Paren{ \Set{ \expecf{\mu}{ \norm{x_i - x_j}^2 }}_{j\in[n]}, \eps }.$$ 
Next, observe 
\begin{equation}
    \Pr\left[ \zeta_{i \calT} = 0 \right] \leq \Paren{1- \eps}^{\tau} \leq \bigO{\delta}. 
\end{equation}
Conditioning on the event that $\zeta_{i \calT} =1$, it follows from Equation~\eqref{eqn:bounding-the-deviation}  that 
 with probability at least $1-\delta$ over the choice of $\calT$, 
\begin{equation*}
     \expecf{  \tilde{x}_\calT }{    \pdev{\mu_\calT}{ x_i }^2 }  \leq \calQ\Paren{ \Set{ \pexpecf{\mu}{ \norm{x_i - x_j }^2 }  }_{j \in [n]}, \eps  } ,
\end{equation*}
which concludes the proof.

\end{proof}

\subsection{Discretization}
\label{sec:discretization}

\begin{proof}[Proof of Lemma~\ref{lem:aspect-ratio}]
Let $\Set{x_i^*}_{i \in [n]}$ be an embedding with cost $\OPT$. First, using Jensen's inequality:

\begin{equation}
    \begin{split}
       1 \geq \OPT &= \expecf{i,j}{\left(1-\frac{\|x_i^* - x_j^*\|}{d_{i,j}}\right)^2} \\
     & \geq 1 - 2 \expecf{i,j}{\frac{\|x_i^* - x_j^*\|}{d_{i,j}}} + \left(\expecf{i,j}{\frac{\|x_i^* - x_j^*\|}{d_{i,j}}}\right)^2 \\
    \end{split}
\end{equation}
which implies that 
\begin{equation}
 \expecf{i,j}{\frac{\|x_i^* - x_j^*\|}{\Delta}} \leq   \expecf{i,j}{\frac{\|x_i^* - x_j^*\|}{d_{i,j}}}  \leq 2
\end{equation}
and therefore $\expecf{i,j}{\|x_i^* - x_j^*\|} \leq 2 \Delta$. Letting $c \in \{x_i^*\}_{i \in [n]}$ be the center of $ \{x_i^*\}_{i \in [n]}$, i.e. a point such that
\begin{equation*}
    c=\arg\min_{i \in [n]} \sum_{j \in [n]} \|x_i^* - x_j^*\|
\end{equation*} 
It follows for such a choice of $c$ that $\expecf{i \sim [n]}{\|c-x_i^*\|} \leq 2 \Delta$. Now define the modified point set $\{x_i\}_{i \in [n]}$ as the image of the projection of $\{x_i^*\}_{i \in [n]}$ onto the metric ball $\mathcal{B}\Paren{ c,\frac{10}{\eps} \Delta} $. By Markov's inequality, we have $\Abs{ X^* \cap \mathcal{B}\Paren{ c,\frac{10}{\eps} \Delta} } \geq (1-\frac{\eps}{5})n$. Thus, assuming $\eps$ is smaller than some constant, we have
\[ \left| \left\{ (i,j) \; : \; \|x_i^* - x_j^*\| \neq \|x_i - x_j\|\right\}\right| \leq \frac{\eps^2}{25} n^2 + \frac{\eps}{5}n^2 \leq \frac{\eps}{4}n^2 \]
Moreover, since the mapping $X^* \mapsto \mathcal{B}\Paren{ c,\frac{10}{\eps} \Delta}  $ was a contraction, for each such above pair $i,j$ we have:
\[\left(1-\frac{\|x_i- x_j\|}{d_{i,j}}\right)^2 -\left(1-\frac{\|x_i^* - x_j^*\|}{d_{i,j}}\right)^2 \leq 1  \] 
from which if follows that $\frac{1}{n^2} \sum_{i,j} \left( 1- \frac{\norm{x_i - x_j}  }{d_{i,j} } \right)^2 \leq \OPT + \eps/4$. Next, we fix an $\eps$-net $\mathcal{N}$ over $\mathcal{B}\Paren{ c,\frac{10}{\eps} \Delta} $;  by standard volume arguments, such nets exist with size $|\mathcal{N}| \leq \bigO{(\frac{\Delta}{\eps})^k}$. We then map each point $x_i$ to the nearest net point in $\mathcal{N}$, and let $\{y_i\}_{i \in [n]}$ be the resulting mapping. For all $i,j \in \binom{n}{2}$, we have
\begin{equation*}
    \begin{split}
        \left| \left(1-\frac{\|x_i- x_j\|}{d_{i,j}}\right)^2 -\left(1-\frac{\|y_i - y_j\|}{d_{i,j}}\right)^2 \right| &\leq \left| \left(1-\frac{\|x_i- x_j\|}{d_{i,j}}\right)^2 -\left(1-\frac{\|x_i - x_j\|}{d_{i,j}} \pm \eps\right)^2 \right| \\
        & \leq  2 \eps \left(1-\frac{\|x_i- x_j\|}{d_{i,j}}\right)^2  + \eps^2
    \end{split}
\end{equation*}
Thus 
\begin{equation*}
\begin{split}
   \frac{1}{n^2}   \sum_{i,j} \Paren{ 1- \frac{\norm{y_i - y_j}  }{d_{i,j} } }^2 &\leq    \frac{1}{n^2}\sum_{i,j} (1+2\eps)\left( 1- \frac{\norm{x_i - x_j}  }{d_{i,j} } \right)^2 + \eps^2\\
   &\leq (1+2 \eps) \left(\OPT + \eps/4\right)  + \eps^2 \leq \OPT + 3\eps
\end{split}
\end{equation*}
which completes the proof with $\{y_i\}_{i\in [n]}$ being the desired mapping after a rescaling of $\eps$. 
\end{proof}

\section{Dimension Reduction for Kamada-Kawai}
\label{sec:dim-reduction}

\begin{lemma}[Dimension Reduction for Kamada-Kawai]
    Let $\{d_{i,j}\}_{i \neq j \in [n]}$ be positive, and for $k \in \N$ let
    \[
    \OPT_k = \min_{x_1,\ldots,x_n \in \R^k} \frac 1 {n^2} \sum_{i \neq j \in [n]} \Paren{ 1- \frac{\|x_i - x_j\|}{d_{i,j}}}^2 \, .
    \]
    For every $\eps > 0$ and $k$,
    \[
    \OPT_{1/\eps} \leq \OPT_k + \bigO{\eps} \, .
    \]
\end{lemma}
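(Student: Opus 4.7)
The plan is to prove this by a direct random-projection argument of Johnson--Lindenstrauss type, but applied to the Kamada--Kawai objective in expectation rather than to individual pairwise distances. The point is that for an additive $O(\eps)$ loss in the objective we only need the target dimension $m := \lceil 1/\eps \rceil$, which is far too small for multiplicative $(1\pm\eps)$ distance preservation but is enough in expectation because the bias of a normalized chi distribution is $\Theta(1/m)$.

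First I would fix an optimal embedding $x_1^*,\ldots,x_n^* \in \R^k$ achieving $\OPT_k$, let $\Pi \in \R^{m\times k}$ have i.i.d. $N(0,1/m)$ entries, and set $y_i := \Pi x_i^*$. For each pair $(i,j)$ the squared projected distance satisfies $\|y_i-y_j\|^2 = \|x_i^*-x_j^*\|^2 \cdot Z_{ij}$, where $Z_{ij} \sim \chi^2_m/m$ has $\E Z_{ij} = 1$ and $\mathrm{Var}(Z_{ij}) = 2/m$. Let $W_{ij} := \sqrt{Z_{ij}}$, so that $\|y_i-y_j\| = W_{ij}\|x_i^*-x_j^*\|$. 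I would then cite (or derive from Stirling) the standard chi-moment identity
\[
\E W_{ij} \;=\; \sqrt{\tfrac{2}{m}}\cdot \frac{\Gamma((m+1)/2)}{\Gamma(m/2)} \;=\; 1 - \Theta(1/m),
\]
together with the trivial $\E W_{ij}^2 = 1$.

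Next, writing $r_{ij} := \|x_i^*-x_j^*\|/d_{ij}$, I would compute the expected objective value under $\Pi$:
\begin{align*}
\E_\Pi \E_{i,j}\Paren{1 - \tfrac{\|y_i-y_j\|}{d_{ij}}}^2
&= \E_{i,j}\bigl[1 - 2 r_{ij}\,\E W_{ij} + r_{ij}^2\,\E W_{ij}^2\bigr]\\
&= \E_{i,j}(1 - r_{ij})^2 + 2\,\E_{i,j}\bigl[r_{ij}(1 - \E W_{ij})\bigr]\\
&\leq \OPT_k + O\!\Paren{\E_{i,j} r_{ij}/m}.
\end{align*}
It remains to bound $\E_{i,j} r_{ij}$. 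Since $\OPT_k \in [0,1]$, expanding $\OPT_k = 1 - 2\E r_{ij} + \E r_{ij}^2$ gives $\E r_{ij}^2 \leq 2\E r_{ij}$, which combined with Cauchy--Schwarz $(\E r_{ij})^2 \leq \E r_{ij}^2$ yields $\E r_{ij} \leq 2$. Thus the expected projected cost is at most $\OPT_k + O(1/m) = \OPT_k + O(\eps)$, so some realization of $\Pi$ achieves this bound and certifies $\OPT_{1/\eps} \leq \OPT_k + O(\eps)$.

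The only nontrivial step is the $\E W_{ij} = 1 - \Theta(1/m)$ asymptotic; everything else is bookkeeping. I would most likely derive this lower bound via $(\E W_{ij})^2 = 1 - \mathrm{Var}(W_{ij})$ and a one-line bound $\mathrm{Var}(\sqrt{Z}) \leq \mathrm{Var}(Z)/(4\cdot \text{(lower bound on }Z\text{ with high prob.)})$ together with standard chi-squared tail bounds, or simply by the Gamma-ratio formula above. No other step requires anything beyond elementary manipulations, so this dimension-reduction lemma should follow in under a page.
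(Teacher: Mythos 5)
Your proposal is correct and follows essentially the same route as the paper: a random Gaussian projection to dimension $m = \lceil 1/\eps\rceil$, the chi-distribution moment $\E\|Sx\| = (1-\Theta(1/m))\|x\|$ (which the paper also proves via the Gamma-ratio formula and Stirling, in its Lemma~\ref{lem:Gauss-Exp}), combined with the Jensen/expansion bound $\E_{i,j}\,\|x_i^*-x_j^*\|/d_{ij} \leq 2$ derived from $\OPT_k \leq 1$. The only difference is cosmetic bookkeeping in how you split the expansion into $\OPT_k$ plus a remainder.
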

\begin{proof}
    Since $\OPT_k$ is monotonically non-increasing in $k$, if $k \leq 1/\eps$ then the statement is trivial.
    So we can assume $k > 1/\eps$. Let $S \in \R^{ 1/\eps  \times k}$ be a matrix consisting of independent mean zero variance $\eps$ Gaussian entries (i.e. $S_{i,j} \sim \mathcal{N}(0,\eps)$). It is standard that $\expecf{}{\|S x\|^2 }  = \|x\|^2$ for any vector $x \in \R^k$.
    Let $x_1,\ldots,x_n \in \R^k$ witness the value of $\OPT_k$.
    We analyze
    \[
    \OPT_{1/\eps} \leq \expecf{S}{ \, \frac 1 {n^2} \sum_{i \neq j} \Paren{ 1 - \frac{\|Sx_i - Sx_j\|}{d_{i,j}}}^2 } = \frac 1 {n^2} \sum_{i \neq j} 1 - 2 \frac{ \expecf{S}{\|S x_i - S x_j\| } }{d_{i,j}} + \frac{\|x_i - x_j\|^2}{d_{i,j}^2} \, .
    \]
By Lemma \ref{lem:Gauss-Exp}, we have $\expecf{S}{ \| S x_i - S x_j \| } \geq \|x_i - x_j \| (1 - O(\eps))$. 
    Therefore, we get 

\begin{equation}
\label{eqn:expanding-opt-1/eps}
    \OPT_{1/\eps} \leq \OPT_{k} + \bigO{\epsilon}\cdot    \frac 1 {n^2}  \sum_{i \neq j} \frac{ \|x_i - x_j\|}{d_{i,j}} \, . 
\end{equation}
    To bound the last term, we observe that by Jensen's inequality,
    \begin{equation*}
    \begin{split}
        1 \geq \OPT_k & = \expecf{i,j}{\Paren{1 - \frac{\norm{ x_i - x_j } }{d_{i,j}} }^2  } \\
        & \geq  1 + \Paren{ \expecf{i,j}{ \frac{\norm{x_i - x_j} }{d_{i,j} } } }^2  - 2 \expecf{i,j}{ \frac{\norm{x_i - x_j} }{d_{i,j} } } 
    \end{split}
    \end{equation*}
Rearranging, we have
\begin{equation*}
    \Paren{ \expecf{i,j}{ \frac{\norm{x_i - x_j} }{d_{i,j} } } }  \leq 2, 
\end{equation*}
and substituting back into \cref{eqn:expanding-opt-1/eps} completes the proof.
\end{proof}

\begin{lemma}[Sketching dimension for expected euclidean norm]\label{lem:Gauss-Exp}
   Let $S \in \R^{ k  \times n}$ be a matrix consisting of i.i.d. Gaussian entries drawn from $\mathcal{N}(0,1/k)$. Then for any vector $x \in \R^n$, we have
   \[\expecf{S}{ \Norm{ S x } } = \Paren{ 1 \pm  \bigO{\frac{1}{k} } } \norm{x} . \] 
\end{lemma}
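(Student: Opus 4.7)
The plan is to reduce the claim to a statement about the mean of a chi-distributed random variable and then to sharpen the naive first-moment bound to a second-moment bound.

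\paragraph{Step 1: Reduction by rotational invariance.} Because the rows of $S$ are i.i.d.\ isotropic Gaussians, the joint law of $S$ is invariant under right-multiplication by any orthogonal $Q \in \R^{n\times n}$. I would use this to rotate $x$ to $\|x\| e_1$, which makes $Sx = \|x\|\cdot s$ where $s \in \R^k$ has i.i.d.\ $\mathcal{N}(0,1/k)$ entries. Writing $s = g/\sqrt{k}$ with $g \sim \mathcal{N}(0,I_k)$, we have $\|Sx\| = \|x\|\sqrt{U}$ where $U := \|g\|^2 / k$. Since $\|g\|^2 \sim \chi^2_k$, we have $\E U = 1$ and $\Var(U) = 2/k$. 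So it suffices to prove $\E\sqrt{U} = 1 \pm O(1/k)$.

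\paragraph{Step 2: Upper bound via Jensen.} By concavity of $\sqrt{\cdot}$, $\E\sqrt{U} \leq \sqrt{\E U} = 1$, which gives the upper bound $\E\|Sx\| \leq \|x\|$.

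\paragraph{Step 3: Lower bound via an algebraic identity.} The direct bound $\E\sqrt{U} \geq 1 - \E|1-U| \geq 1 - \sqrt{\Var(U)} = 1 - O(1/\sqrt{k})$ is too weak. To extract the full second-moment cancellation I would use the identity
\[
1 - \sqrt{U} \;=\; \frac{1-U}{1+\sqrt{U}},
\]
and substitute it into itself to obtain $\tfrac{1}{1+\sqrt{U}} = \tfrac{1}{2} + \tfrac{1-U}{2(1+\sqrt U)^2}$. Multiplying by $(1-U)$, taking expectations, and using $\E[1-U] = 0$ to kill the first-order term, one gets
\[
\E\bigl[1 - \sqrt{U}\bigr] \;=\; \tfrac{1}{2}\,\E\!\left[\frac{(1-U)^2}{(1+\sqrt{U})^2}\right] \;\leq\; \tfrac{1}{2}\,\E[(1-U)^2] \;=\; \tfrac{1}{2}\,\Var(U) \;=\; \tfrac{1}{k},
\]
where the inequality uses $(1+\sqrt U)^2 \geq 1$. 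Therefore $\E\sqrt{U} \geq 1 - 1/k$, and combining with Step 2 gives $\E\|Sx\| = (1 \pm O(1/k))\|x\|$.

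\paragraph{Main obstacle.} The only delicate point is the lower bound in Step~3: any bound that only uses the first moment (triangle inequality, Cauchy--Schwarz, a linear Taylor approximation) loses $\sqrt{\Var(U)} = O(1/\sqrt{k})$, which is a factor of $\sqrt{k}$ off the required rate. The identity $1-\sqrt U = (1-U)/(1+\sqrt U)$, used twice, converts the computation into a genuine second-moment bound and recovers the right $O(1/k)$ behavior, matching the $\chi^2_k$ concentration scale. An alternative, slightly heavier route would be to compute $\E\sqrt{U}$ exactly via the mean of the $\chi_k$ distribution and expand $\Gamma((k+1)/2)/\Gamma(k/2)$ via Stirling, but the identity-based route avoids any special-function asymptotics.
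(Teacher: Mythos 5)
Your proof is correct and takes a genuinely different, more elementary route than the paper's. The paper computes $\E\|Sx\|$ exactly via the mean of the chi-distribution, $\sqrt{2/k}\,\Gamma((k+1)/2)/\Gamma(k/2)$, then applies the Legendre duplication formula and Stirling's approximation to extract the $1 \pm O(1/k)$ asymptotics. You instead reduce, by rotational invariance, to bounding $\E\sqrt{U}$ for $U \sim \chi^2_k/k$, use Jensen for the upper bound, and for the lower bound exploit the iterated identity $1-\sqrt U = (1-U)\bigl[\tfrac12 + \tfrac{1-U}{2(1+\sqrt U)^2}\bigr]$ so that $\E[1-U]=0$ kills the first-order term and $(1+\sqrt U)^2 \geq 1$ bounds the remainder by $\tfrac12\Var(U)=1/k$. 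I checked the algebraic identity and the final chain of inequalities; they hold, and the conclusion $1 - 1/k \leq \E\sqrt U \leq 1$ gives exactly the stated $(1 \pm O(1/k))$. What the paper's approach buys is an exact closed form (and would, with more effort, give a sharper constant and higher-order terms); what yours buys is avoidance of special-function asymptotics and a proof that transparently explains \emph{why} the rate is $1/k$ rather than $1/\sqrt{k}$ -- namely, the second-moment cancellation you identify as the crux. Both are valid; yours is shorter and self-contained, and you explicitly flag the paper's route as the alternative, so there is nothing missing.
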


\begin{proof} 
By scaling, it will suffice to prove the result for unit vectors $x$. In this case, the value $\| S x \|$ is distributed as the square root of a chi-squared variable with $k$ degrees of freedom, which is also known as a chi-distribution with $k$ degrees of freedom. It is known that $\mathbb{E}[\| S x \|] = \sqrt{\frac{2}{k}} \cdot \frac{\Gamma(\frac{k+1}{2})}{\Gamma(\frac{k}{2})}$  (see e.g. Appendix E, p.972 of \cite{gooch2010encyclopedic}). Setting $m=k+1$, we have
   \begin{equation}\label{eqn:first-ex-bound}
        \mathbb{E}[\| S x \|] = \sqrt{\frac{2}{m-1}} \cdot \frac{\Gamma(m/2)}{\Gamma((m-1)/2)}
   \end{equation}
By the the Legendre Duplication Formula~\cite{weisstein2019legendre}, we can write
  \begin{equation}\label{eqn:second-ex-bound}
  \Gamma((m-1)/2) = \sqrt{\pi} \cdot 2^{2-m}\cdot \frac{  \Gamma (m-1)}{\Gamma(m/2) }
    \end{equation}
Plugging \cref{eqn:second-ex-bound} into \cref{eqn:first-ex-bound} yields 
\[ \mathbb{E}[\| S x \|] = \sqrt{\frac{2}{\pi (m-1)}} \cdot2^{m-2} \frac{(\Gamma(m/2))^2}{\Gamma(m-1)}\] 
We can then apply Stirling's approximation $\Gamma(z) = \sqrt{2 \pi } (z-1)^{z-1+1/2} e^{-(z-1)} (1+O(1/z))$ to approximate the Gamma functions above, which yields
\begin{equation*}
    \begin{split}
     \mathbb{E}[\| S x \|]  &= \sqrt{\frac{2}{\pi (m-1)}} \cdot 2^{m-2} \cdot \frac{  \left( \sqrt{2 \pi}(\frac{m}{2}-1)^{\frac{m}{2}-1+\frac{1}{2}}e^{-(\frac{m}{2}-1)}\cdot\left(1+O\left(\frac{1}{m}\right) \right)\right)^2}{ \sqrt{2\pi}(m-2)^{m-2+\frac{1}{2}}e^{-(m-2)}\cdot \left(1+O\left(\frac{1}{m}\right)\right)} \\       
             &= 2 \sqrt{\frac{1}{(m-1)}} \cdot 2^{m-2} \cdot \frac{ (\frac{m}{2}-1)^{m - 1}e^{-(m-2)}\cdot\left(1+O\left(\frac{1}{m}\right) \right)}{  2^{m-1}(m-2)^{-1/2} (\frac{m}{2}-1)^{m-1}  e^{-(m-2)}\cdot \left(1+O\left(\frac{1}{m}\right)\right)}\\
           &=\frac{1}{\sqrt{m-1}}\cdot  \sqrt{m-2} \cdot \left(1+O\left(\frac{1}{m}\right)\right) \\
&= \frac{1}{\sqrt{m-1}} \cdot \sqrt{m-1} \cdot \left(1 - \frac{1}{m-1} \right)^{1/2} \cdot \left(1+O\left(\frac{1}{m}\right)\right) \\
&= 1 \pm O\left(\frac{1}{k}\right)  \\
    \end{split}
\end{equation*}
Where in the last line we used that $(1-x)^{1/2} = (1-\Theta(x))$ for sufficiently small $x \in [0,1]$. 

\end{proof}

\section{Implementation with poly(n)-size LP}
\label{sec:smaller-LP}

In this section we show that our algorithm can be implemented with a linear program of size $n^2 \cdot \Paren{\Delta}^{ \tilde{\mathcal{O}}\Paren{ k^2 \log(\Delta)^{k/2}/\eps^{k/2} } } $, resulting in a running time of $n^{\mathcal{O}(1)} \Paren{ \Delta }^{ \tilde{\mathcal{O}}\Paren{ k^2 \log(\Delta)^{k/2}/\eps^{k/2} } }$.

\paragraph{Subsets of the Sherali-Adams LP.}
We can define an LP analogous to the Sherali-Adams LP (see \cref{sec:prelims}) for any subset $\calS \subseteq \binom{n}{t}$, as opposed to all subsets of size $t$. We can introduce variables corresponding to local distributions $\mu_T$ only for $T \in \calS$ and add the consistency constraints such that the marginal distribution on the intersection of any pair $T_1, T_2 \in \calS$ is the same. It is easy to verify that this LP has at most $|\Omega|^{O(t)} \cdot |\calS |^2$ variables and constraints.

\begin{corollary}[Sparsifying the Sherali-Adams LP]
\label{cor:faster-mds}
Given an instance of MDS with aspect ratio $\Delta$, target dimension $k$, and target accuracy $\eps$, the relaxation and rounding introduced in \cref{algo:efficient-algo} can be implemented in $n^{\mathcal{O}(1)} \Paren{ \Delta/\eps }^{\mathcal{O}(\tau)}$ time, where
\begin{equation*}
    \tau = 
    \begin{cases}
    \bigO{  \Paren{ \log\Paren{\log(\Delta)/\eps} \log\Paren{\Delta/\eps} } /\eps   }  & \textrm{ if } k = 1 \\
    \bigO{  ( \log^2\Paren{\log(\Delta)/\eps} \log\Paren{\Delta/\eps} ) /\eps   }  & \textrm{ if } k = 2 \\
    \bigO{ \Paren{ k \log(\log(\Delta)/\eps)\log(\Delta/\eps)^{k/2} } /\eps^{k/2} } & \textrm{ otherwise.} 
    \end{cases}
\end{equation*}
The approximation guarantees remain identical to \cref{thm:efficient-algo-mds}.
\end{corollary}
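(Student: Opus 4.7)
The plan is to replace the full level-$\mathcal{O}(\tau)$ Sherali-Adams relaxation in step (1) of \cref{algo:efficient-algo} by a sparser LP indexed by a carefully chosen family $\calS$ of subsets, invoking the $|\Omega|^{\mathcal{O}(\tau)} \cdot |\calS|^2$ size bound from the remark above. First I sample $\calT \subseteq [n]$ of size $\tau$ uniformly at random and then take $\calS = \Set{ \calT \cup \{i,j\} : \{i,j\} \in \binom{[n]}{2} }$, giving $|\calS| \leq n^2$ and an LP of size $n^{\mathcal{O}(1)} \cdot (\Delta/\eps)^{\mathcal{O}(k\tau)}$, matching the claimed running time. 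Feasibility and the LP-value bound $\OPT + \mathcal{O}(\eps)$ are immediate from \cref{lem:aspect-ratio}, since every discretized embedding into $\calN_{k,\Delta/\eps,1}$ induces a feasible point-mass solution.

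Step (2) of \cref{algo:efficient-algo} is unchanged: sample $\tilde{x}_\calT$ from the local distribution on $\calT$ and output $\Set{ \pexpecf{\mu_\calT}{x_i} }_{i \in [n]}$, where every required conditional marginal $\mu_{\calT \cup \{i\}}$ is obtained by marginalizing some $\mu_{\calT \cup \{i,j\}} \in \calS$. My plan for the analysis is to walk through the proof of \cref{thm:efficient-algo-mds} and verify that each ingredient only accesses local distributions on subsets in $\calS$: \cref{fact:rounding-deviations} applied to $(x_i, x_j)$ under $\mu_\calT$ uses consistency on $\calT \cup \{i,j\}$; \cref{lem:pseudo-deviation-to-closest-conditioning} uses the same subsets for $j \in \calT$; \cref{lem:sdp-val-small-subset}, \cref{lem:relating-opt-to-dist-on-typical}, and \cref{lem:translating-bet-quantiles} reduce to pairwise pseudo-expected distances that are accessible via consistency from $\calS$; and \cref{lem:quantiles-with-subsets} is a purely geometric statement about an optimal embedding.

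The main obstacle is that \cref{lem:var-redux-quantiles} is stated for $\calT$ sampled independently of $\mu$, whereas in the sparsified LP the optimum $\mu^\calS$ depends on $\calT$ through the LP's defining subsets. My plan is to handle this by observing that the final bound on $\expecf{\tilde{x}_\calT}{\pdev{\mu^\calS_\calT}{x_i}^2}$ actually used in the proof of the main theorem, after applying \cref{lem:translating-bet-quantiles}, is in terms of the $\mathcal{O}(\sqrt{\OPT}+\eps)$-quantile of pairwise distances in a \emph{fixed} optimal embedding, which is independent of both $\calT$ and $\mu^\calS$. Concretely, starting from $\expecf{\tilde{x}_\calT}{\pdev{\mu^\calS_\calT}{x_i}^2} \leq 4 \min_{j \in \calT} \pexpecf{\mu^\calS}{\|x_i - x_j\|^2}$ from \cref{lem:pseudo-deviation-to-closest-conditioning}, the per-$\calT$ LP-value bound (valid for any feasible $\mu^\calS$ regardless of $\calT$) gives $\pexpecf{\mu^\calS}{\|x_i - x_j\|^2} = \mathcal{O}(d_{ij}^2) = \mathcal{O}(\|x_i^* - x_j^*\|^2)$ on a $(1-\mathcal{O}(\sqrt{\OPT}))$-fraction of pairs. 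Using Markov over pairs to swap ``most pairs for each $\calT$'' into ``most $\calT$ for each pair'', I plan to restrict attention to indices $j^*$ in the fixed set $S_i^* = \Set{j^* : \|x_i^* - x_{j^*}^*\|^2 \leq \calQ(\Set{\|x_i^* - x_k^*\|^2}_{k \in [n]},\eps')}$ intersected with the Lemma-\ref{lem:relating-opt-to-dist-on-typical}-typical pairs for $i$; this intersection is a fixed subset of $[n]$ of size $\Omega(\eps n)$ for most $i$, so the event ``$\calT$ hits it'' occurs with probability $\geq 1-\delta$ whenever $\tau \gg \log(1/\delta)/\eps$, decoupling the quantile-hitting event from $\mu^\calS(\calT)$. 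With this decoupling in hand, \cref{claim:few-bad-pairs} and \cref{claim:avg-good-quantile-shrink} go through verbatim, yielding the same approximation guarantee as in \cref{thm:efficient-algo-mds} with the smaller LP.
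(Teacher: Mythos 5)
There is a genuine gap in the decoupling step, and it is exactly the issue the paper's proof is designed to avoid. You propose to sparsify with a \emph{single} random tuple $\calT$, so $\calS = \{\calT \cup \{i,j\}\}$ has size $O(n^2)$. The LP optimum $\mu^{\calS(\calT)}$ then depends on $\calT$. Your decoupling argument bounds $\expecf{\tilde{x}_\calT}{\pdev{\mu^{\calS}_\calT}{x_i}^2}$ via $\min_{j\in\calT}\pexpecf{\mu^{\calS}}{\|x_i-x_j\|^2}$ and wants to restrict attention to $j^*$ in a \emph{fixed} set $S_i^*$ determined by $x^*$. But for the bound to be useful, the $j^*$ that $\calT$ hits must \emph{also} satisfy $\pexpecf{\mu^{\calS(\calT)}}{\|x_i-x_{j^*}\|^2}=O(d_{ij^*}^2)$, and the set of $j^*$ satisfying this depends on $\mu^{\calS(\calT)}$, hence on $\calT$. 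The Markov swap (``most pairs for each $\calT$'' to ``most $\calT$ for each pair'') gives you that for a typical pair $(i,j^*)$, the bad event $\{\pexpecf{\mu^{\calS(\calT)}}{\|x_i-x_{j^*}\|^2}>Cd_{ij^*}^2\}$ has small probability over $\calT$, but this event can be arbitrarily anti-correlated with the event $\{j^*\in\calT\}$: since $\Pr[j^*\in\calT]=\tau/n$ and the bad-pair budget for each $\calT$ is $\Theta(\sqrt{\OPT})n^2\gg n\tau$ pairs, the LP optimum could in principle place \emph{all} of the pairs $\{(i,j^*):j^*\in\calT\}$ in the bad set. Your plan, as stated, does not rule this out, so the ``$\calT$ hits a good $j^*$'' event need not occur with the claimed probability.

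The paper avoids this by taking $\calS_0$ to be $O(n/\eps)$ independent random tuples, writing the LP over $\calS = \{T\cup\{i,j\}: T\in\calS_0, i,j\in[n]\}$, solving it once, and then conditioning on a \emph{fresh random} $T\in\calS_0$. The key lemma is a union bound over all $\binom{n}{\eps n}\leq 2^{O(\eps n)}$ candidate hit-sets $Q$: with high probability over $\calS_0$, for \emph{every} $Q$ of size $\eps n$ — including the $\mu^\calS$-dependent ones — the empirical fraction of $T\in\calS_0$ hitting $Q$ is within $\eps$ of $\Pr_{T\sim\binom{n}{\tau-2}}[T\cap Q\neq\emptyset]$. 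Requiring $|\calS_0|\gtrsim\log(2^{O(\eps n)})/\eps^2 = O(n/\eps)$ is precisely what makes this union bound go through, and it costs only a polynomial factor in the LP size. Your sparsification is too aggressive to support this kind of uniform concentration, so you lose the ability to condition on a $T$ that is effectively independent of the LP optimum.
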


\begin{proof}
Let $\calS_0 \subset \binom{n}{\tau -2}$ be a  subset of size $\bigO{n/\eps}$, where each element is picked uniformly at random. We augment each tuple in $\calS_0$ of size $\tau-2$ by adding all pairs:

\begin{equation*}
    \calS = \Set{ T\cup \Set{i,j} \textrm{ for } T \in \calS_0 \textrm{ and } i,j\in[n] }.
\end{equation*}
We can now write down an LP relaxation analogous to Sherali-Adams, where we define variables $\mu_T(x)$ for each $T \in \calS$ and consistency constraints for the intersection of every pair $T_1, T_2 \in \calS$. Observe, for the complete Sherali-Adams hierarchy, we can condition on $x_T$ corresponding to any set $T \in \binom{n}{\tau}$ of size $\tau$ and \cref{algo:efficient-algo} conditions on a uniformly random set of size $\tau$. 

In our modified LP, we can condition on $x_T$ corresponding to any set $T \in \calS_0$ and the resulting conditioned pseudo-distribution $\mu_T$ would correspond to a level-$2$ Sherali-Adams program. The bulk of our analysis remains unchanged except for the proof of \cref{lem:var-redux-quantiles}. It suffices to show that for the sparse set $\calS$, conditioning on a random $T$ still hits an index in the $\eps$-th quantile of every pair, and thus reduces the variance. 

To this end, we show that $T$ intersects with any subset of size $\eps n$ with non-trivial probability.  Let $Q \subset [n]$ be a fixed subset of size $\eps \cdot n$. Then, using standard concentration bounds for Bernoulli random variables, 
\begin{equation*}
\begin{split}
    & \probf{\calS_0 }{ \Abs{  \probf{T\sim \calS_0}{ T \cap Q \neq \phi  }   - \expecf{}{\probf{T\sim \calS_0}{ T \cap Q \neq \phi  } } } \geq \eps } \\
    & = \probf{\calS_0 }{ \Abs{  \probf{T\sim \calS_0}{ T \cap Q \neq \phi  }   - \probf{T\sim \binom{n}{\tau-2} }{T \cap Q \neq \phi  }  } \geq \eps  }   \\
    & \leq \delta
\end{split}
\end{equation*}
whenever $\calS_0 \geq \log(1/\delta)/\eps^2$. Recall, there are $\binom{n}{\eps n} \leq 2^{\mathcal{O}\Paren{ \eps n} }$ sets $Q$ of size $\eps n$.
Union bounding over the choice of $Q$, setting $\calS_0$ to have $\bigO{n/\eps}$ random sets of size $\tau-2$ suffices to recover the quantile statements. Therefore, the running time is now 

\begin{equation*}
    \abs{\calS }^{\mathcal{O}(1)} \cdot \abs{\Omega}^{\mathcal{O}(\tau)} =  n^{ \mathcal{O}(1)} \cdot \Paren{ \Delta/\eps}^{ \mathcal{O}(k\cdot \tau) },
\end{equation*}
which concludes the proof.  
\end{proof}

\section{Additional Related Work}
\label{sec:additional-related-work}
We review some additional related work in the broad area of metric embeddings.
For embedding into $\R^k$ the multiplicative
distortion of a pair of input points $i,j$ with input distance 
$d_{ij}$ and output distance $d'_{ij}$, would be $\max \Paren{ d'_{ij} / d_{ij} , d_{ij}/d'_{ij}},$
and the additive distortion, for some objective norm $p$, $\E_{i,j} ||d'_{ij} - d_{ij}||^p .$

In this context, we distinguish two lines of  works: (1) on absolute
distortion, where the goal is to show a bound $X$ such that for any input the distortion is at most $X$; versus (2) on relative or ``data-dependent'' distortion,  where the goal is to algorithmically compute an embedding that for any
input is as close to the best embedding as possible for that input.

In the first line of work, the classic results of Bartal~\cite{DBLP:conf/focs/Bartal96} and Fakcharoenphol, Rao and Talwar~\cite{8ac-DBLP:journals/jcss/FakcharoenpholRT04} show that
one can probabilistically embed any metric space into a tree with 
polylogarithmic distortion. For the case of embedding into Euclidean
spaces and minimizing the multiplicative distortion, we refer to the
work of Matousek~\cite{matouvsek1990bi,matouvsek1996distortion}.
For embedding ultrametrics into Euclidean spaces while minimizing 
the multiplicative distortion, Onak and Sidiropoulos~\cite{onak2008circular} provide a polylog$\Delta$-approximation where (see also~\cite{badoiu2006embedding}). 

    Additionally, there has been a line of work studying various notions of average-distortion when the mapping is constrained to be either non-contracting or non-expanding. For instance, Rabinovich~\cite{rabinovich2008average} considers embedding an arbitrary $n$-point metric $(X,d_X)$ into the line, where the average distance is defined as $\frac{\mathbb{E}_{x,y \in X}[d_X(x,y)]}{\mathbb{E}_{x,y \in X}[|f(x) - f(y)|]}$ subject to the constraint that the embedding  $f:X \to \R$ is non-expanding; the authors also consider the multiplicative inverse of this objective taken over non-contracting embeddings, and argue that the former is a more natural variant due to the phenomenon of concentration of measure. Notice that these objectives capture a notion of how the average distance must decrease when mapping to a lower dimensional metric space, rather than measuring the average distortion that a pair of points incurs under this mapping (which is the goal of MDS).  A related objective, studied in \cite{lee2005metric,bartal2019dimensionality}), which is closer to the spirit of MDS, is $\mathbb{E}[|x- y|/d(x,y)]$ taken over non-contracting embeddings.

\paragraph{``Data''-dependent embeddings}
In this paper, we are concerned with the second line of work.
Strong inapproximability results for max-distortion in the mid 2000s (e.g. \cite{DBLP:conf/focs/MatousekS08}) have led to some efforts to design approximation algorithms for average-case notions of distortion, among them MDS. Here ``average-case'' is in the sense that we take the average over $i,j$ pairs, not that $d_{ij}$s are random.
For the additive distortion and the problem of embedding into the line (i.e.: the $d'$ distance should come from a line metric),
the problem is NP-Hard~\cite{haastad1998fitting,ivansson2000computational} and Hastad et al.~\cite{haastad1998fitting} showed provided a 2-approximation algorithm that was later improved by Badoiu to the 
plane under the $\ell_1$-metric~\cite{DBLP:conf/soda/Badoiu03}.
Dhamdhere \cite{6ac-DBLP:conf/approx/Dhamdhere04} considered the problem of embedding into a line metric in a non-contractive way to minimize additive distortion and presented an $\calO(\log n)$ approximation. In the context where the objective is the multiplicative distortion \cite{doi:10.1137/17M1113527} provides an
$O(\sqrt{n})$-approximation to the problem and ~\cite{badoiu2005low}
later provided an $O(\Delta^{3/4})$-approximation.

There have been few provable bounds on approximating average-case distortion style objectives for dimensionality-reduction problems.  Bartal, Fandina and Neiman~\cite{bartal2019dimensionality} give provable approximation guarantees for a several variants of multi-dimensional scaling, the closest of which to our setting,  i.e. the Kamada-Kawai objective, is the objective 
$$\expecf{i\neq j \in [n]}{ \Paren{ 1 - \max\Paren{ \frac{\|x_i - x_j\|}{d_{i,j}}, \frac{d_{i,j}}{\|x_i-x_j\|} } }^2 }.$$ 
They first employ an embedding of the $d_{i,j}$'s into high dimensional Euclidean space $\R^n$ by solving the standard semi-definite program. Next, they apply Johnson-Lindenstrauss (JL) projections to reduce the dimension from $n$ to any target dimension $k$. The main contribution of \cite{bartal2019dimensionality} is in the moment based analysis of JL. However, it is easy to see that JL will result unavoidably result in fixed additive error that is a function only of $k$; Specifically, the authors obtain an additive $O(1/\sqrt{k})$ approximation to several objectives, including the aforementioned. However, since $\OPT \in [0,1]$, this guarantee is non-vacuous only for target dimension larger than $1/\eps^2$, regardless of running time. Additionally, their result requires the initial distances $\{d_{i,j}\}$ to be a metric---something that we do not assume. In contrast, our results give non-trivial approximations for any $k$, for which the additive $\eps$ error can be made arbitrarily small at the cost of increasing the runtime of the algorithm.

The literature on embeddings into tree metrics or ultrametric is also
very rich. Motivated by biology applications, researchers have  
studied reconstruction of phylogenies as a tree or ultrametric
embedding problem: given pairwise dissimilarities between $n$
elements, the goal is to fit the elements to a tree or ultrametric while minimizing the maximum additive distortion 
or the sum over all pairs of elements of the additive distortion.
Farach-Colton, Kannan and Warnow~\cite{6-DBLP:journals/algorithmica/FarachKW95} showed
that for the ultrametric case an optimal solution can be found in $\Theta(|S|^2)$ time (see also~\cite{Cohen-AddadSL20,Cohen-AddadJL21} for when the input is Euclidean). For embedding into a tree, Agarwala, Bafna, Farach-Colton, Paterson and Thorup~\cite{DBLP:journals/siamcomp/AgarwalaBFPT99} provided a constant factor approximation and showed the problem is APX-Hard.
 If the input distances have at most $k$ distinct values, then Harb, Kannan and McGregor \cite{12ac-DBLP:conf/approx/HarbKM05} showed a factor
$\calO(\min\{n,k\log n\}^{1/p})$ approximation.
The best bounds known for the ultrametric variant of the problem are due to Cohen-Addad et al.~\cite{DBLP:conf/focs/Cohen-Addad0KPT21} and Ailon and Charikar \cite{DBLP:journals/siamcomp/AilonC11}.
This has also been considered in the context of stochastic inputs (see Farach-Colton and 
Kannan~\cite{DBLP:journals/jacm/FarachK99},  Mossel and Roch~\cite{16ac-DBLP:conf/stoc/MosselR05}, Henzinger, King, Warnow~\cite{DBLP:journals/algorithmica/HenzingerKW99} and the references therein).


\section{MDS Versus Principal Component Analysis}
\label{sec:MDSvPCA}
We now briefly describe an instance, with small Kamada-Kawai objective cost, where the optimal Kamada-Kawai objective produces a qualitatively more interesting output than that obtained by PCA. In particular, we show that Kamada-Kawai better captures cluster structure in this instance. The instance itself is one that already has good low-dimensional structure. Since PCA totally projects away any information in the dataset which is orthogonal to the principle component, this information is totally lost by PCA. On the other hand, KK being a non-linear transformation, this orthogonal information can still captured in an optimal embedding.

Consider the input dataset $P \subset \R^2$, defined by $k$ copies of the point $(i,0) \in \R^2$ for each $i=\{1,2,3,\dots,n\}$, for a sufficiently large constant $k \geq 1$, along with the set of points $\cup_{i \in [n]} \{(i,\eps), (i, -\eps)\}$ for a sufficiently small constant $\eps$. In other words, $P$ consists of $n$ clusters, each separated by distance $1 + O(\eps)$, where each cluster consists of $k+2$ points centered at $(i,0)$ and symmetric about the $x$-axis. This input example is illustrated in Figure \ref{fig:PCAexample}. Our goal will be to embed $P$ into the line $\R^1$.

Now observe that the principle component of the dataset $P$ is in direction $v=(1,0)$, thus PCA would project $P$ onto $v$. Thus, the optimal PCA maps $(i,\pm \eps) \mapsto i$ and $(i,0) \mapsto i$ for each $i \in [n]$, thereby collapsing the cluster structure. On the other hand, notice that this projection would have KK cost at least $2nk(1+O(\eps))$, since \textbf{(1)} each of the points $(i,\pm \eps)$ would pay a cost of $1 = (1-\frac{0}{\eps})^2$ in the KK objective for having their distance to $(i,0)$ change from $\eps$ to $0$, as well as a cost of at most $(2k+1)\sum_{i\geq 1} (1-\frac{i}{i + O(\eps)})^2 = O(\eps k)$ for distorting distances to the points with $x$-coordinate not equal to $i$ (i.e. the distance between $(i,\eps)$ and $(j,0),(j,-\eps)$ is distorted by a multiplicative $(1+\eps)$ factor under this mapping for $i \neq j$).

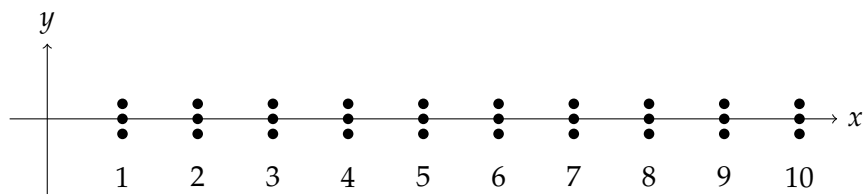
\begin{figure}[!h]
\begin{center}
    
\begin{tikzpicture}
\def\n{10}  
\def\eps{0.2}  

\draw[->] (-0.5,0) -- (\n+0.5,0) node[right] {$x$};
\draw[->] (0,-1) -- (0,1) node[above] {$y$};

\foreach \i in {1,...,\n} {
   \node[below] at (\i,-.5) {$\i$}; 
}

\foreach \i in {1,...,\n} {
    \fill (\i,0) circle (2pt); 
    \fill (\i,\eps) circle (2pt);
    \fill (\i,-\eps) circle (2pt);
}

\end{tikzpicture}

\end{center}

\caption{Illustration of an input instance where MDS preserves meaningful cluster structure but PCA does not. }\label{fig:PCAexample}
\end{figure}
\begin{figure}[!h]
    \centering
   
\begin{tikzpicture}

\def\n{10}  
\def\eps{0.2}  

\def\xaxislim{\n+1.5} 

\draw[->] (0,0) -- (\xaxislim,0) node[right] {$x$};

\foreach \i in {1,...,\n} {
   \fill (\i,0) circle (2pt) [black]; 
}

\foreach \i in {1,...,\n} {
    \fill (\i+\eps,0) circle (2pt) [red];
    \fill (\i-\eps,0) circle (2pt) [red];
}

\foreach \i in {1,...,\n} {
   \node[below] at (\i,0) {$\i$}; 
}

\end{tikzpicture}
    \caption{Illustration of the optimal MDS mapping for this instance. The red points are the points previously at $(i, \pm \eps)$ that which were mapped onto the line by MDS, whereas PCA would collapse each pair of red points onto their central black point.}
    \label{fig:MDSOptEmbedding}
\end{figure}
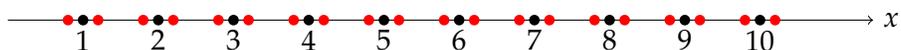

Instead of this embedding, consider the MDS embedding defined by 

\[ (x,y) \mapsto \begin{cases}
    i & \text{ if } (x,y) = (i,0) \\
     i + \eps & \text{ if } (x,y) = (i,\eps)\\ 
      i - \eps & \text{ if } (x,y) = (i,-\eps)
\end{cases}\]
An illustration of this embedding is given in Figure \ref{fig:MDSOptEmbedding}. 
In this case, not that the distances between points in the same cluster remain unchanged, and the MDS cost for a given point $(i,\pm \eps)$ is at most $(2k+2)\sum_{i\geq 1} (1-\frac{i}{i \pm \Theta(\eps)})^2 = O(\eps k)$, again for distorting the distances to points whose original $x$-coordinate was not equal to $i$. Thus the total (unnormalized) KK cost of this embedding is $O(\eps n k)$, as opposed to the cost of $O(n k)$ that would be incured from the PCA embedding.

\section*{Acknowledgements}
We thank Moses Charikar, Piotr Indyk, and Robert Kleinberg for helpful conversations.
SBH was funded by NSF CAREER award no. 2238080 and MLA@CSAIL.
AB was funded by the Foundations of Data Science Institute, NSF award no. 2023505.

\newpage 

\bibliographystyle{alpha}
\bibliography{bibliography}




\end{document}